\setlist{topsep=1ex,itemsep=-1ex,partopsep=0ex,parsep=1ex}
\definecolor{blue}{rgb}{0.05,0.1,0.4}
\definecolor{brown}{rgb}{0.35,0.35,0.1}
\def\cqedsymbol{\ifmmode$\lrcorner$\else{\unskip\nobreak\hfil
\penalty50\hskip1em\null\nobreak\hfil$\lrcorner$
\parfillskip=0pt\finalhyphendemerits=0\endgraf}\fi}
\newcommand{\Oh}{\mathcal{O}}
\newcommand{\smalloh}{o}
\newcommand{\T}{\mathcal{T}}
\newcommand{\I}{\mathcal{I}}
\newcommand{\F}{\mathcal{F}}
\newcommand{\X}{\mathcal{X}}
\newcommand{\C}{\mathcal{C}}
\newcommand{\defparproblem}[4]{\par
 \vspace{3mm}
\noindent\fbox{
 \begin{minipage}{0.96\textwidth}
 \begin{tabular*}{\textwidth}{@{\extracolsep{\fill}}lr} #1 & {\bf{Parameter:}} #3 \vspace{1mm} \\ \end{tabular*}
 {\textbf{Input:}} #2%
	\vspace{1mm}\\%
 {\textbf{Question:}} #4%
 \end{minipage}
 }
 \vspace{3mm}
\par
}
\newcommand{\defpartask}[4]{\par
 \vspace{3mm}
\noindent\fbox{
 \begin{minipage}{0.96\textwidth}
 \begin{tabular*}{\textwidth}{@{\extracolsep{\fill}}lr} #1 & {\bf{Parameter:}} #3 \vspace{1mm} \\ \end{tabular*}
 {\textbf{Input:}} #2%
	\vspace{1mm}\\%
 {\textbf{Task:}} #4%
 \end{minipage}
 }
 \vspace{3mm}
\par
}
\newcommand{\compath}{\textsc{ComPath}\xspace}
\newcommand{\comvdpaths}{\textsc{ComVDP}\xspace}
\newcommand{\simplecomvdpaths}{\textsc{SComVDP}\xspace}
\newcommand{\comdeto}{\textsc{ComDetour}\xspace}
\newtheorem{lemma}{Lemma}[section]
\newtheorem{observation}[lemma]{Observation}
\newtheorem{corollary}[lemma]{Corollary}
\newtheorem{theorem}[lemma]{Theorem}
\newtheorem{claim}[lemma]{Claim}
\theoremstyle{definition}
\newtheorem{definition}[lemma]{Definition}
\newtheorem*{reduction*}{Reduction Rule}
\newtheorem{construction}[lemma]{Construction}
\DeclareMathOperator{\dist}{dist}
\newcommand{\Ecol}{\lambda}
\newcommand{\Tcol}{\zeta}
\crefname{lemma}{Lemma}{Lemmas}
\crefname{theorem}{Theorem}{Theorems}
\title{The Complexity of Connectivity Problems in Forbidden-Transition Graphs and Edge-Colored Graphs\thanks{This research is a part of a project that have received funding from the European Research Council (ERC)
under the European Union's Horizon 2020 research and innovation programme
Grant Agreement 714704. Parts of Manuel Sorge's work were performed while visiting TU Vienna, Austria.}}
\author{
         Thomas Bellitto\footnote{Faculty of Mathematics, Informatics and Mechanics, University of Warsaw, Poland, \texttt{tbellitto@mimuw.edu.pl}}
    \and Shaohua Li\footnote{Faculty of Mathematics, Informatics and Mechanics, University of Warsaw, Poland, \texttt{S.Li@mimuw.edu.pl}}
    \and Karolina Okrasa\footnote{Faculty of the Mathematics and Information Science, Warsaw University of Technology and Faculty of Mathematics, Informatics and Mechanics, University of Warsaw, Poland, \texttt{k.okrasa@mini.pw.edu.pl}}
    \and Marcin Pilipczuk\footnote{Faculty of Mathematics, Informatics and Mechanics, University of Warsaw, Poland, \texttt{malcin@mimuw.edu.pl}}
    \and Manuel Sorge\footnote{Faculty of Mathematics, Informatics and Mechanics, University of Warsaw, Poland, \texttt{manuel.sorge@mimuw.edu.pl}}
}
\date{}
\begin{document}

\maketitle

\begin{textblock}{20}(0, 12.0)
\includegraphics[width=40px]{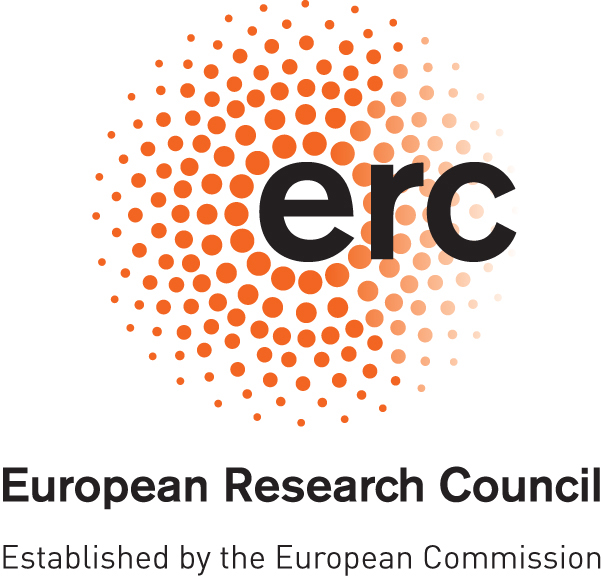}%
\end{textblock}
\begin{textblock}{20}(0, 12.9)
\includegraphics[width=40px]{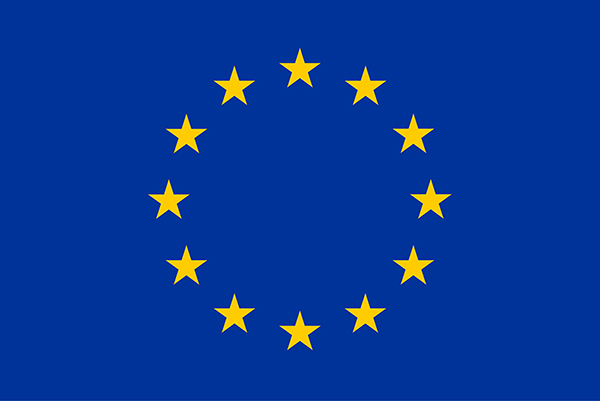}%
\end{textblock}
\begin{abstract}
The notion of \emph{forbidden-transition graphs} allows for a robust generalization of walks
in graphs. 
In a forbidden-transition graph, every pair of edges incident to a common vertex is 
\emph{permitted} or \emph{forbidden}; a walk is \emph{compatible} if all pairs of consecutive
edges on the walk are permitted. 
Forbidden-transition graphs and related models have found applications in a variety of fields, such as routing in optical telecommunication networks, road networks, and bio-informatics.

\looseness=-1
We initiate the study of fundamental connectivity problems from the point of view of parameterized complexity, including an in-depth study of tractability with regards to various graph-width parameters. Among several results, we prove that finding a simple compatible path between given endpoints in a forbidden-transition graph is $W[1]$-hard when parameterized by the vertex-deletion distance to a linear forest (so it is also hard when parameterized by pathwidth or treewidth).
On the other hand, we show an algebraic trick that yields tractability when parameterized by treewidth of finding a properly colored Hamiltonian cycle in an edge-colored graph;
properly colored walks in edge-colored graphs is one of the most studied special cases
of compatible walks in forbidden-transition graphs.

\end{abstract}

\section{Introduction}

Graphs have proved to be an extremely useful tool to model routing problems in a very wide range of applications.
However, we sometimes need to express constraints on the permitted walks that are stronger than what the standard graph model allows for.
For example, in a road network, there can be a crossroad where drivers are not allowed to turn right.
In this case, many walks in the underlying graph without transition restrictions would correspond to routes that a driver is not allowed to use.
To overcome this limitation, Kotzig introduced forbidden-transition graphs in \cite{Kotzig}.
Let $G$ be an undirected graph.
A \emph{transition} in $G$ is an unordered pair of adjacent edges.
Every time a walk in $G$ uses two edges $uv$ and $vw$ consecutively, we say that the walk \emph{uses the transition~$\{uv,vw\}$}.
A \emph{transition system} of $G$ is a set of transitions in~$G$.
A \emph{forbidden-transition graph} is a tuple $(G, T)$ of a graph~$G$ together with a transition system~$T$ of $G$.\footnote{Our notation rather suggests that $(G, T)$ is a \emph{permitted-transition graph} but we use forbidden transitions in keeping with convention in the literature.}
We say that a transition is \emph{permitted} if it is in $T$ and it is \emph{forbidden} otherwise.
We say a walk is \emph{compatible} with $T$ or \emph{$T$-compatible} if all the transitions it uses are permitted, that is, in $T$.
We omit reference to $T$ when it is clear from the context.
For notational clarity, it is sometimes useful to refer to the transitions~$T(v)$ of a specific vertex $v \in V(G)$, that is, $T(v) = \{\{e, f\} \in T \mid e \cap f = \{v\}\}$.

\looseness=-1
Since their introduction, forbidden-transition graphs and related models have found applications in a variety of fields, such as routing in optical telecommunication networks \cite{Ahmed}, road networks \cite{sctm}, and bio-informatics \cite{Dorninger}.
Problems of routing, connectivity, and robustness in those graphs have received a lot of attention but unfortunately, those problems generally turn out to be algorithmically very difficult, even on very restricted subclasses of graphs.
In \cite{Szeider}, Szeider famously proved that even determining the existence of a compatible
   (elementary) path between two given vertices of a forbidden-transition graph is NP-complete.
Similarly, many known results about forbidden-transition graphs are proofs of NP-completeness of problems that are polynomially solvable on standard graphs (e.g.\ \cite{DBLP:journals/tcs/AbouelaoualimDFMMS08}, \cite{DBLP:conf/wg/BellittoB18}, \cite{DBLP:journals/dm/Dvorak09}, \cite{DBLP:journals/dam/GourvesLMM13}, \cite{DBLP:journals/endm/GourvesLMMP09},  \cite{DBLP:conf/tamc/KanteLM13}, \cite{DBLP:conf/wg/KanteMMN15}, \cite{Szeider}).

A very interesting specific case of compatible walks in forbidden-transition graphs are properly colored walks in edge-colored graphs.
Here, a graph is given together with a coloring of its edges and we say that a walk is \emph{properly colored} if it does not use consecutively two edges of the same color.
These graphs have been introduced by Dorninger in \cite{Dorninger} to study chromosome arrangements.
They are a powerful generalization of directed graphs (see \cite{jbjgutin}) and have been studied by many authors since their introduction.
The problem of properly colored Hamiltonian cycles was the first problem studied on edge-colored graphs and this problem and its variants (such as longest elementary cycle or spanning trails among many others) are especially well studied in the literature.
We refer the reader to \cite{GutinKim} or \cite{jbjgutin} for surveys on these problems and to \cite{supereulerian}, \cite{DBLP:journals/dam/Contreras-Balbuena17}, \cite{DBLP:journals/dmtcs/Contreras-Balbuena19}, \cite{ChinesePostman_2017}, \cite{DBLP:journals/dm/LiBXZ17} or \cite{DBLP:journals/gc/LiBZ19} for recent developments.

Because of their expressiveness and wide range of applications, the study of forbidden-transition graphs is a fast-emerging field and has been the subject of growing attention in the past decades but we are still very far from understanding them as well as regular graphs. 
Our aim in this paper is to study the parameterized complexity of some known NP-complete problems, in general forbidden-transition graphs as well as in the specific case of edge-colored graphs. We specifically focus on some problems of great practical interest, such as the existence of an elementary path or the length of a shortest path between given vertices, the problem of Hamiltonian cycles, or linkage problems where we try to connect pairs of vertices by vertex- or edge-disjoint paths. A very rich toolbox already exists to study fixed-parameter tractability in standard graphs (see \cite{cygan2015parameterized} for example) but the generalization of these concepts to forbidden-transition graphs is widely unexplored and raises many challenges that we hope to see get more attention in the future.
\paragraph{Our results.}%
In Section~\ref{sec:detours}, we study the problem of shortest compatible paths between two vertices $s$ and $t$ in a forbidden-transition graph.
Recall that determining whether there exists a compatible path between $s$ and $t$ is known to be NP-complete~\cite{Szeider}.
A simple application of the color-coding technique shows that this problem is fixed-parameter
tractable when parameterized by the length of the path. 
We improve upon this observation by showing 
that the complexity of finding a shortest compatible path from $s$ to $t$ is actually fixed-parameter tractable when parameterized by the length of the detour that the forbidden transitions impose.
In other words, determining whether there exists a compatible path of length at most $d(s,t)+k$ where $d(s,t)$ is the length of the shortest path between $s$ and $t$ in the underlying graph with no forbidden transitions, is fixed-parameter tractable when parameterized by $k$.
Our algorithm follows the main ideas of the algorithm for the \textsc{Exact Detour} problem
by Bez\'{a}kov\'{a} et al.~\cite{BezakovaCDF2019}.

In Section \ref{sec:parameters}, we turn our attention to graph width parameters.
The rich ecosystem of relevant graph-width parameters is depicted on \cref{fig:parameter-hierarchy}; see \cite{marx_immersions_2014,Wollan2015,GanianKS2015} for the corresponding boundedness and unboundedness relations on treecut-width.

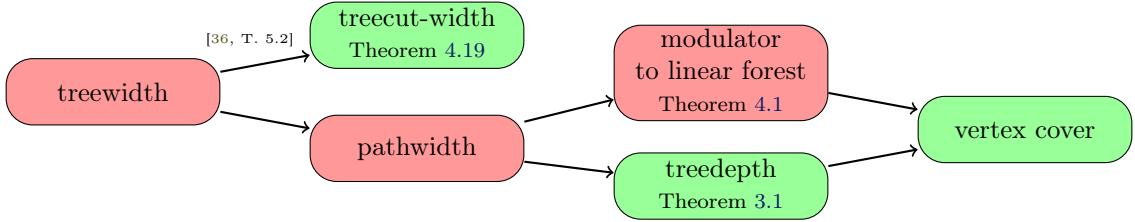
\begin{figure}
  \begin{center}
    \tikzstyle{para}=[draw,rectangle,fill=white,inner sep=0pt,minimum width=80pt,minimum height=25pt,rounded corners=10pt]
    \tikzstyle{ref}=[]
    \tikzstyle{whard}=[fill=red!40]
    \tikzstyle{fpt}=[fill=green!40]
    \tikzstyle{ref}=[]
    \begin{tikzpicture}
\node[para,whard] (tw) at (0,0) {treewidth};
\node[para,whard] (pw) at (4,-0.75) {pathwidth};
\node[para,fpt] (tcw) at (4,0.75) {\begin{tabular}{c}treecut-width \\ {\footnotesize \cref{thm:treecut-width}}\end{tabular}
};
\node[para,whard] (mtlf) at (8,0.25) {\begin{tabular}{c} modulator \\ to linear forest\\ {\footnotesize\cref{thm:lb1}}\end{tabular}};
\node[para,fpt] (td) at (8,-1.25) {\begin{tabular}{c} treedepth \\ {\footnotesize \cref{thm:comp-path-fpt-length}}  \end{tabular}
};
\node[para,fpt] (vc) at (12,-0.5) {vertex cover};
\draw[thick,->] (tw) -- node[ref,above,xshift=-2mm,yshift=1mm] {\tiny\cite[T.\ 5.2]{marx_immersions_2014}} (tcw);
\draw[thick,->] (tw) -- (pw);
\draw[thick,->] (pw) -- (td);
\draw[thick,->] (pw) -- (mtlf);
\draw[thick,->] (mtlf) -- (vc);
\draw[thick,->] (td) -- (vc);
\end{tikzpicture}%
\end{center}%
\caption{A hierarchy of graph-width parameters considered in this work.
  An arrow from $a$ to~$b$ represents the fact that a bound on parameter $b$ imposes a bound on parameter $a$, but there exist families of graphs with bounded $a$ and unbounded $b$.
  We color a parameter~$a$ green if detecting a compatible $s$-$t$ path is fixed-parameter tractable with respect to $a$ and red if it is W[1]-hard.}
\label{fig:parameter-hierarchy}
\end{figure}

First, we focus on the NP-complete problem of determining whether there exists a compatible path between $s$ and $t$ in a forbidden-transition graph. 
Since the problem is fixed-parameter tractable when parameterized by the length of the path (see \cref{sec:detours}),
it is also fixed-parameter tractable when parameterized by the vertex cover number
or the treedepth of the graph, as bounding the vertex cover number or the treedepth
of the graph by $k$ bounds the length of the longest simple path by $2k$ or $2^k-1$, respectively.
Our main result is a negative one: the problem becomes $W[1]$-hard
if one makes one step further to the parameter \emph{modulator to a linear forest}, 
i.e., the number of vertices one has to remove from the graph to turn it into a union of vertex-disjoint paths.
A small tweak of the reduction shows that finding a Hamiltonian cycle is $W[1]$-hard with respect to the size of a modulator to treewidth $2$.
Our reduction in particular implies hardness for the parameters pathwidth and treewidth (for both the compatible path and Hamiltonian cycle problems).

On the other hand, we show that if one considers parameters based on edge cuts
(as opposed to vertex cuts, like in treewidth), one can obtain nontrivial tractability results. 
\emph{Treecut-width} is a width notion based on edge cuts, introduced by Wollan~\cite{Wollan2015},
  and playing the role of treewidth in the world of the immersion relation.
We prove that the problem of finding a compatible $s$-$t$ path is fixed-parameter tractable when parameterized by
the treecut-width of the graph.
More precisely, the problem can be solved in time $k^{\Oh(k^2)}\cdot n^2 + \Oh(n^3) + \Oh\left((4^k\cdot k!)^{\Oh(3k+1)}\right) \cdot n^2$ where $k$ denotes the treecut-width.

In the light of the hardness in general forbidden-transition graphs of detecting $s$-$t$ paths, the most fundamental connectivity problem, we move to the special case of properly colored paths in edge-colored graphs.
As finding a (simple) properly colored path between given endpoints in an edge-colored graph is polynomial-time solvable, we focus on the problem of finding a Hamiltonian cycle.
We introduce an algebraic trick that shows that in edge-colored graphs, finding a properly colored Hamiltonian cycle is fixed-parameter tractable when parameterized by the treewidth of the graph.
More specifically, the problem can be solved in time $2^{\mathcal O(k)}\cdot (|V(G)|+|V(\mathcal T)|+\ell)$ where $k$ is the treewidth, $\mathcal T$ is the tree of the decomposition and $\ell$ is the number of different colors the edges can have.
The crucial property of the result is that $\ell$, the number of colors, is \emph{not} required to be bounded in the parameter and does not appear in the exponential part of the running-time bound.

After discussing graph-width notions, in Section \ref{sec:2dsp}, 
we move to the \textsc{Disjoint Paths} problem. 
In this problem, we are given a directed graph and a sequence $(s_1,t_1), (s_2,t_2), \ldots, (s_r,t_r)$ of terminal pairs;
the goal is to find compatible paths $P_1, P_2, \ldots,P_r$ such that $P_i$ starts in $s_i$ and ends
in $t_i$ and the paths $P_i$ are pairwise edge- or vertex-disjoint.

Observe that the problem quickly becomes hard. 
Even the setting of properly colored paths in edge-colored graphs generalizes directed
graphs\footnote{Consider the reduction that adds an in-neighbor to each $s_i$ and an out-neighbor to each $t_i$, replaces each terminal by the corresponding in- or out-neighbor and then replaces each directed edge $e$ with two undirected edges with two colors according to the direction of~$e$.} and the \textsc{Disjoint Paths} problem for $r=2$ is NP-hard in directed graphs~\cite{FortuneHW80}. Furthermore, in general graphs with transitions the case $r=1$ is NP-hard.
Hence, we focus on the specific case where the path $P_i$ is required to be a shortest
$s_i$-$t_i$ path, even in the unrestricted graph.
In directed graphs, a tractability result for this problem has been obtained
by B\'{e}rczi and Kobayashi~\cite{DBLP:conf/esa/Berczi017} for $r=2$.
This problem is currently a very active topic and new algorithms have been found very recently for several variants in the case $r=2$. 
Polynomial algorithms have been developed by Gottschau et al. \cite{Gottschau} and by Kobayashi and Sako \cite{KobayashiSako}  for undirected graphs with non-negative weighted edges and by Bang-Jensen et al. \cite{DBLP:journals/tcs/Bang-JensenBLY20} in the directed unweighted case where paths do not have to be shortest but have bounded lengths.
The complexity of the problem is still open for $r \geq 3$. 

Thus, in this work we focus on the case $r=2$ in directed forbidden-transition graphs. 
Extending the results of B\'{e}rczi and Kobayashi~\cite{DBLP:conf/esa/Berczi017},
we show that the problem remains polynomial-time solvable both in edge- and vertex-disjoint cases.
The arguments are presented in Section~\ref{sec:2dsp}.

\section{Preliminaries}

For each $n \in \mathbb{N}$ we use $[n]$ to denote $\{1, 2, \ldots, n\}$.
Unless stated otherwise, all graphs are undirected, without self-loops and parallel edges.

Let $G$ be an undirected graph.
By $V(G)$ and $E(G)$ we denote the vertex and edge set of $G$, respectively.
For each $v \in V(G)$ we denote by $E_G(v)$ the set of edges in $G$ that are incident with~$v$ in $G$.
We omit the subscript $G$ if it is clear from the context.
A \emph{walk} in $G$ is a sequence $(v_1,e_1,v_2,e_2,\ldots,e_\ell,v_{\ell+1})$ where $v_i$s are vertices of $G$, $e_i$s are edges of $G$,
  and for every $1 \leq i \leq \ell$, the vertices $v_i$ and $v_{i+1}$ are the two endpoints of the edge $e_i$.
A walk is \emph{closed} if its first vertex is also its last vertex.
The \emph{length} of a walk $W$ equals $\ell$, the number of edges in~$W$.
A \emph{path} is a walk in which no vertex occurs twice, a \emph{cycle} is a closed walk in which no vertex occurs twice except the first and last vertex.
Usually we will denote paths and cycles simply by their sequence of vertices.
By $\dist_G(s,t)$ we mean the length of a simple $s$-$t$ path in $G$ (ignoring any transitions).

For a graph $G$, a \emph{tree decomposition} of $G$ is a pair $(\T, \beta)$
where $\T$ is a tree and $\beta : V(\T) \to 2^{V(G)}$ such that the following holds:
(i) for every $v \in V(G)$, the set $\{t \in V(\T)~|~v \in \beta(t)\}$
induces a nonempty connected subtree of $\T$, and (ii)
  for every $uv \in E(G)$, there exists $t \in V(\T)$ with $u,v \in \beta(t)$.
That is, the function $\beta$ assigns to every node $t \in V(\T)$ a 
subset $\beta(t) \subseteq V(G)$, often called a \emph{bag}. 
It is often convenient to root $\T$ at an arbitrary vertex.
The width of a tree decomposition $(\T,\beta)$ equals $\max_{t \in V(\T)} |\beta(t)|-1$,
and the treewidth of a graph is the minimum possible width of its tree decomposition.

\section{Detours}\label{sec:detours}
As Szeider~\cite{Szeider} proved, it is in NP-hard to determine whether a given forbidden-transition graph~$(G, T)$ contains a compatible $s$-$t$ path for two given vertices~$s$ and~$t$.
This of course implies that it is NP-hard to check whether there is a compatible $s$-$t$ path of at most some given length.
In contrast, it is polynomial-time solvable to decide whether there is a compatible $s$-$t$ path which has length at most $\dist_G(s, t)$.
This can for example be seen by using the following strategy.
Construct the line graph~$H$ of $G$.
(That is, $H$ has vertex set $E(G)$ and two vertices in $H$ are adjacent if the corresponding edges in $G$ share an endpoint.)
For each vertex $v \in V(G)$ and each pair~$e, f$ of edges incident to~$v$ such that $e$ and $f$ are not compatible, remove the edge $ef$ from~$H$.
Introduce two new vertices $s'$ and $t'$ into $H$ and make them adjacent to every vertex corresponding to an edge incident in~$G$ with~$s$ or~$t$, respectively.
Finally, check whether $H$ contains an (ordinary) $s'$-$t'$ path of length at most $\dist_G(s, t) + 1$.

In this section we improve on the above observation by showing that checking for compatible $s$-$t$ paths which are marginally longer than $\dist_G(s, t)$ can also be done efficiently.
That is we are going to show the fixed-parameter tractability of the following problem.
\defparproblem{\comdeto}
{An instance $(G,T,s,t,k)$ where $(G, T)$ is a forbidden-transition graph, $s,t \in V(G)$, and \(k \in \mathbb{N}\).}
{$k \in \mathbb{N}$}
{Does there exist a \(T\)-compatible $s$-$t$ path in $G$ of length at most $\dist_G(s,t)+k?$}

For notational convenience, we slightly generalize the notion of an $x$-$y$ path as follows.
For a given graph $G$ and $x,y \in V(G) \cup E(G)$, we say that a path $(v_1, v_2, \ldots, v_\ell)$ in \(G\) is an $x$-$y$ path, if (i) $x=v_1\in V(G)$ or $x=v_1v_2 \in E(G)$ and (ii) $y=v_\ell \in V(G)$ or $y=v_{\ell-1}v_\ell \in E(G)$.

We first show fixed-parameter tractability of the \compath problem, which will be later used as a black box in our algorithm for \comdeto. The algorithm for \compath uses a standard color-coding approach (see \cite{cygan2015parameterized}), slightly modified to track the transitions.
\defpartask{\compath}
{An instance $(G,T,x,y,k)$ where $(G, T)$ is a forbidden-transition graph, $x,y \in V(G) \cup E(G)$, and \(k \in \mathbb{N}\).}
{$k \in \mathbb{N}$}
{Decide whether there exists a $T$-compatible $x$-$y$ path in $G$ of length at most $k$. If so, return the length of a shortest such path.}

The algorithm can be presented using random colorings but this would complicate the analysis later.
We instead use the notion of perfect hash families.
Let $n, k \in \mathbb{N}$ and let $U$ be a set of size~$n$.
A \emph{$k$-perfect hash family} of $U$ is a family $\mathcal{F}$ of functions from $U$ to $[k]$ such that for each subset $S \subseteq U$ of size at most~$k$ there exists a function $f \in \mathcal{F}$ such that $f(S) = [k]$ (that is, $f$ is injective on $S$).
Naor, Schulman, and Srinivasan~\cite{naor_splitters_1995} showed that a $k$-perfect hash family of size $e^k k^{O(\log k)} \log n$ can be computed in $e^k k^{O(\log k)} n \log n$ time; see also \cite[Section 5.6.1]{cygan2015parameterized}.

\begin{theorem}\label{thm:comp-path-fpt-length}
There exists an algorithm solving \compath in time $2^{\Oh(k\log k)} n^{\Oh(1)}$, where $n = |V(G)|$.
\end{theorem}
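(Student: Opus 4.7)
The plan is to adapt the standard color-coding technique for finding a simple path of length $k$ (see \cite{cygan2015parameterized}) by augmenting the dynamic-programming state with the identity of the last edge of the partial path, so that transition compatibility with the next edge can be checked locally.

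First, I would invoke the construction of Naor, Schulman and Srinivasan to compute, in $2^{\Oh(k\log k)}\,n \log n$ time, a $(k+1)$-perfect hash family $\F$ of $V(G)$ of size $2^{\Oh(k\log k)} \log n$. For each coloring $\chi \in \F$, the algorithm would look for a $T$-compatible $x$-$y$ path of length at most $k$ whose vertices receive pairwise distinct colors under $\chi$ (a \emph{colorful} path). Since such a path uses at most $k+1$ vertices, the defining property of $\F$ guarantees that if any $T$-compatible $x$-$y$ path of length at most $k$ exists, then at least one shortest such path is colorful for some $\chi \in \F$; conversely, any colorful walk is automatically a simple path, so it suffices to search over the family.

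For a fixed coloring $\chi$ I would run the following dynamic program. The states are pairs $((u,v), S)$ where $uv \in E(G)$ is oriented and $S \subseteq [k+1]$ satisfies $\{\chi(u),\chi(v)\} \subseteq S$. The entry $f((u,v), S)$ is intended to be true iff there exists a $T$-compatible $x$-$v$ path whose last edge is $uv$ and whose vertex set is mapped bijectively onto $S$ by $\chi$. The base case declares $f((u,v), \{\chi(u),\chi(v)\})$ true whenever $\chi(u) \neq \chi(v)$ and the edge $uv$ is a valid initial edge (i.e., $u = x$ if $x$ is a vertex, or $\{u,v\} = x$ if $x$ is an edge). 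The transition declares $f((v,w), S)$ true iff $\chi(w) \in S$ and there exists $u$ with $\{uv, vw\} \in T$ and $f((u,v), S \setminus \{\chi(w)\})$ true. The length of a shortest compatible $x$-$y$ path is then the minimum of $|S| - 1$ over the true entries $f((u,y),S)$ (if $y$ is a vertex) or $f((u,v),S)$ with $\{u,v\} = y$ (if $y$ is an edge).

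For each coloring, there are $\Oh(|E(G)| \cdot 2^{k+1})$ states and each transition can be evaluated in $\Oh(n)$ time, adding up to $2^{\Oh(k)} n^{\Oh(1)}$ per coloring; summing over the $2^{\Oh(k\log k)}\log n$ colorings in $\F$ yields the advertised bound $2^{\Oh(k\log k)} n^{\Oh(1)}$. Correctness follows by a routine induction on path length, since knowing only the last edge of the partial path is enough to check that any newly appended edge together with it forms a permitted transition (each element of $T$ is an unordered pair of edges sharing a single vertex). I do not expect a serious obstacle beyond the standard color-coding analysis; the only mild technicality is uniformly handling the four cases in which $x$ and $y$ are supplied as vertices or as edges, which is absorbed into the base and terminal conditions above.
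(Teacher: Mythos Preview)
Your proposal is correct and follows essentially the same approach as the paper: color-coding via a perfect hash family, combined with a dynamic program whose state records both the set of colors used so far and the last edge of the partial path, so that transition compatibility can be checked locally. The only cosmetic difference is that the paper reserves two distinguished colors for the endpoints (using a $(k-1)$-perfect hash family on the remaining vertices) whereas you hash all $k+1$ vertices directly, but the resulting algorithm and analysis are the same.
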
 
\begin{proof}
  Let $(G, T, x, y, k)$ be the input instance.
  First, consider the case where $x, y \in V(G)$.
  The algorithm works as follows.
  We start by coloring the vertices in $G$ using a family of perfect hash functions.
  Compute a $(k - 1)$-perfect hash family $\mathcal{F}$ of $V(G)$ in $e^k k^{O(\log k)} n \log n$ time.
  Iterate over all elements $f \in \mathcal{F}$ and for each such element $f$ proceed as follows.
  Generate a coloring of $V(G)$ by starting with coloring $x$ and $y$ with two unique colors, say $1$ and $k+1$ and then coloring the rest of vertices of $G$ with colors $2,\ldots,k$, according to the values assigned by $f$.
  That is, put the color of a vertex $v \in V(G) \setminus \{x, y\}$ to be $f(v) + 1$.
  Let $c:V(G) \to [k+1]$ be the resulting coloring and for each $i \in [k+1]$ denote by $C_i$ the set $c^{-1}(i)$.
  Observe that it suffices to prove the following claim.
\begin{claim}\label{lem:color-coding}
  Let $x, y \in V(G)$.
  There exists an algorithm to test whether there exists a colorful $T$\nobreakdash-compatible $x$-$y$ path with at most $k+1$ vertices in time $2^k n^{\Oh(1)}$.
\end{claim} 
\begin{proof}
We use the following dynamic programming approach that computes a table~\(D\): for a set $S$ of at least two elements, such that $\{1\} \subseteq S \subseteq [k+1]$, and an edge $uv \in E(G)$ we define a boolean variable $D[S,u,v]$. 
We want it to be equal to TRUE if and only if there exists a colorful compatible $x$-$uv$ path, whose vertices are colored with all colors from $S$. 

For $S=\{1,i\}$ observe that $D[S,u,v]=$ TRUE if and only if $u=x$ and $v \in N(x) \cap C_i$.
Thus, these entries of $D$ can be computed in linear time.
Next, for every~\(S\) with $\{1\} \subseteq S \subseteq [k+1]$ and $|S|\geq 3$ and for every $e=uv \in E(G)$ we compute $D[S,u,v]$  as follows:
\begin{equation*}
D[S,u,v]=\begin{cases}
\bigvee \{D[S \setminus \{c(v)\},w,u]: \{wu,e\} \in T(u), wu \in E(G)\} & \textnormal{ if $c(v) \in S\setminus \{1\}$,} \\
\textnormal{FALSE} & \textnormal{ otherwise.}
\end{cases}
\end{equation*}
Observe that this is a correct way of computing the entries \(D[S, u, v]\).
As to the running time,
the number of possible sets $S$ is bounded by $2^k$ (as 1 is always included in $S$).
Thus, $D[S, u, v]$ can be computed in $2^{k}n^{\Oh(1)}$ time.
We say that $S \subseteq [k+1]$ is \emph{good} if it contains 1 and $k+1$. 
Observe that a compatible $x$-$y$ path of length at most $k$ exists if and only if there exist $uv \in E(G)$ and a good set $S$, such that $D[S,u,v]=$ TRUE (note that in such a case $v=y$). 
Let $\mathcal{S}$ be the set of all triples $(S, u,v)$ such that $S$ is good and $D[S,u,v]=$ TRUE.
If $\mathcal{S}$ it is empty, then clearly there is no $T$-compatible $x$-$y$ path of length at most $k$ in $G$.
Otherwise, we return the value of $|S|-1$ for the smallest $S$ such that $(S,u,v) \in \mathcal{S}$ for some $uv \in E(G)$.
\end{proof}

Analogously, if we are asked for a colorful $T$-compatible $x$-$y$ path when $\{x,y\} \cap E(G) \neq \emptyset$, we give unique colors to the second and/or last but one vertex of a potential path and slightly modify computation of the table $D$ and we look for an optimal solution in $\mathcal{S}$ that respects corresponding the conditions.
\end{proof}

We are ready to prove the fixed-parameter tractability of \comdeto.
Our algorithm for \comdeto is based on an algorithm for computing $s$-$t$ paths of length exactly $\dist(s, t) + k$ in ordinary graphs~\cite{BezakovaCDF2019}.
The basic idea is that in such a path there are at most $k$ segments in which no ``progress'' is made towards reaching~$t$.
These $k$ segments can be determined locally by applying the algorithm for \compath\ from above.
The remaining parts can be computed by ordinary dynamic programming as for computing shortest paths. 
\begin{theorem}
There exists an algorithm solving \comdeto in time $2^{\Oh(k\log k)} n^{\Oh(1)}$.
\end{theorem}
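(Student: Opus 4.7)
The plan is to adapt the algorithm of Bez\'akov\'a et al.~\cite{BezakovaCDF2019} for \textsc{Exact Detour} in ordinary graphs to the forbidden-transition setting, using \compath (\cref{thm:comp-path-fpt-length}) as a black-box subroutine on the short ``detour'' portions of the path.

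First, I compute BFS distances $\ell(v) = \dist_G(s, v)$ for all $v \in V(G)$ and set $d = \dist_G(s, t)$. Along any $s$-$t$ path $P$, $\ell$ changes by $\pm 1$ or $0$ per edge, so in a path of length at most $d + k$ at most $k$ edges are \emph{non-forward} (i.e.\ do not increase $\ell$ by one). Hence $P$ decomposes into at most $k+1$ maximal \emph{forward runs} separated by \emph{detour blocks} whose total length is at most $k$; the endpoints of these detour blocks together with $s$ and $t$ form at most $2k+2$ \emph{special vertices} of $P$.

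Next, I guess the skeleton of an optimal $P$. I compute a $(2k+2)$-perfect hash family $\mathcal{F}$ of $V(G)$ in $2^{\Oh(k \log k)} n \log n$ time~\cite{naor_splitters_1995} and iterate over every $f \in \mathcal{F}$; for the right $f$, the special vertices of some optimal path receive pairwise distinct colors. For each such coloring I iterate over all patterns describing (i)~the sequence of special colors along $P$, (ii)~the BFS level of every special vertex, and (iii)~the length $\le k$ of each detour block; a counting argument bounds the number of patterns by $2^{\Oh(k \log k)}$. For each (coloring, pattern) I assemble a candidate $s$-$t$ path by (a)~invoking \compath on $(G, T)$ to realize each detour block as a compatible path of the prescribed length between its prescribed endpoints, and (b)~running a dynamic programming on the BFS DAG $G^{\rightarrow}$ (the subgraph of $G$ keeping only forward edges, oriented from lower to higher level) with state $(v, e)$ recording the last vertex and last edge used and transitions along compatible forward edges, to fill in each forward run.

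The main technical obstacle is ensuring \emph{global simplicity} of the concatenation, since two distinct forward runs may span overlapping BFS levels and could thereby a priori share non-special vertices. As in Bez\'akov\'a et al., I handle this by augmenting both the forward-DAG DP and each \compath call with a bitmask tracking the colors already used elsewhere on the skeleton; the guarantee of $\mathcal{F}$ is that for some $f$ all $\Oh(k)$ vertices capable of causing a collision (the special vertices plus the interiors of detour blocks) obtain pairwise distinct colors, so forbidding reuse of any consumed color prevents all revisits. Aggregating the $2^{\Oh(k \log k)}$ patterns, the $|\mathcal{F}| = 2^{\Oh(k \log k)} \log n$ colorings, and the polynomial per-pattern cost yields the claimed $2^{\Oh(k \log k)} n^{\Oh(1)}$ running time.
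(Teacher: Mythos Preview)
Your proposal has a genuine gap in the treatment of global simplicity. You assert that ``all $\Oh(k)$ vertices capable of causing a collision'' are ``the special vertices plus the interiors of detour blocks'', but this is false. Two distinct forward runs whose BFS level ranges overlap can share an \emph{interior} vertex that is neither a special vertex nor inside any detour block. Concretely, if run $R_i$ ends at level $b$ and, after a detour, run $R_{i+1}$ starts at level $a < b$, then at each level in $[a,b]$ both runs contribute one vertex; nothing forces these to differ, and your bitmask (which only records colors of the $\Oh(k)$ special/detour vertices) cannot detect when the two forward-DAG DPs pick the same vertex there. In the forbidden-transition setting you cannot repair this after the fact by the usual ``swap suffixes at a shared vertex'' trick, because the swapped transitions need not be permitted.

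The paper sidesteps this difficulty with a different structural observation: since at most $k$ BFS layers contain two or more vertices of the sought path, every window of $k+1$ consecutive layers contains a \emph{clean} layer in which the path has a unique vertex~$u$. This yields a DP indexed by inter-layer edges: from the current edge one calls \compath on a subgraph confined to the next $\Oh(k)$ layers to reach some edge $f$ ending at a clean vertex~$u$, and then recurses on an edge $g$ leaving $u$ into strictly higher layers. Because the subpath after $u$ lives entirely in $G_{(u,d+k]}$, vertex-disjointness between the locally computed segment and the recursively obtained suffix is automatic, with no coloring of forward-run interiors required. If you want to salvage your scheme, you would need to colour (and track in the DP state) the $\Oh(k)$ vertices of the solution that lie on overlap levels of forward runs; as written, the proposal does not do this.
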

\begin{proof}
  Let $(G,T,s,t,k)$ be the input instance of \comdeto\ and let $d=\dist_G(s,t)$.
  We can clearly assume that $d > k$, as otherwise we can compute whether $(G,T,s,t,k)$ is a yes-instance using the algorithm for \compath with parameter $d+k \leq 2k$.
  For each $i \in [V(G)] \cup \{0\}$
  we define the \emph{$i$-th layer} $X_i:=\{x \in V(G): \dist_G(s,x)=i\}.$ Clearly $\{s\} = X_0$ and $t \in X_d$.
  Note that each compatible $s$-$t$-path of length at most $d+k$ must be contained in the graph induced by $X = \bigcup_{i \in \{0,1,\ldots,d+k\}} X_i$, therefore we can safely assume that $X=V(G)$. 
  For two distinct layers $X_i,X_j$, we say that $X_i$ is \emph{higher} (resp. lower) than $X_j$ if $i>j$ (resp $i<j$).
  We use the following notation:
  An edge \(xy \in E(G)\) is called an \emph{inter-layer} if there exist \(i, j \in [d + k] \cup \{0\}\) such that \(i \neq j\), \(x \in X_i\), and \(y \in X_j\).
  If an edge is not inter-layer, we call it \emph{within-layer}.
  For two vertices $p, q \in V(G)$ such that $\dist(s,p)< \dist(s,q)$ we denote by $G_{(p,q]}$ the subgraph of $G$ induced by $\{p\} \cup \{x \in V(G): \dist(s,p)<\dist(s,x)\leq\dist(s,q)\}.$ 

  The algorithm uses a dynamic-programming procedure that computes a table~$D$.
  Table $D$ is indexed by the inter-layer edges of $G$ and for every inter-layer edge $xy \in E(G)$ the entry $D[xy]$ contains the length $\ell$ of a shortest compatible $xy$-$t$ path in \(G_{(x, d + k]}\) if it exists and if $\dist(s,x)+\ell\leq d+k$; otherwise $D[xy] = \infty$.
  Note that it suffices to compute the entries of the table~\(D\) because we can then look up whether there is a solution in the entries corresponding to the edges incident with~\(s\) (which are all inter-layer edges).
  We first compute the entries for edges with the highest layers and then for edges in successively lower layers.
  Intuitively, when filling~\(D\) for a specific inter-layer edge \(uv\), we can rely on the fact that each solution path using \(uv\) will contain an inter-layer edge \(wx\) in a higher layer and it will reach \(wx\) from \(uv\) in at most \(2k\) steps.
  Thus, when filling the table for \(uv\), we may refer to the correct entry for \(wx\) and compute the path between \(uv\) and \(wx\) using a call to the algorithm for \compath.

  At the beginning of the procedure we initialize the table, putting $D[xy]=\infty$ for every inter-layer edge $xy \in E(G)$.
  Next, we compute entries of $D$ for the last $2k + 1$ layers:
  for every inter-layer edge $xy \in E(G)$ such that $\dist_G(s,x) \geq d-k - 1$ and $\dist_G(s,y) >d-k -1$, solve the \compath instance $(G_{(x, d + k]},T,xy,t,2k + 1)$. 
  If for some $xy$ the result is a path of length~$\ell$ such that $\dist_G(s,x)+\ell\leq d+k$, then we set $D[xy]=\ell$, otherwise we put $D[xy] = \infty$.
  Observe that this will fill \(D[xy]\) with the correct value according to the definition of~\(D\).
  
  Then, we inductively fill in earlier layers by carrying out the following computation steps:
  \begin{enumerate}[(1)]
  \item For every integer~$m$ from $d-k-1$ down to $0$ and for every
    pair of vertices $x,u$ such that $\dist_G(s,x)=m$ and
    $m<\dist_G(s,u)\leq m + k + 1$ we do the following:
    \begin{enumerate}[(a)]
    \item For every pair of edges $e, f \in E(G_{(x,u]})$ such that $e = xy$ (so \(e\) is an inter-layer edge) and $f = vu$ for some vertices $y, v \in V(G_{(x,u]})$, we do the following:
      \begin{enumerate}[(i)]
      \item We solve the \compath instance $(G_{(x,u]},T,e,f,2k)$.
      \item If the answer is negative, we continue with the next pair of candidates for~$e$ and~$f$.
      \item If the answer is positive, let $r$ be the returned path length.
        Observe that $r \in [2k]$.
      \item Let \(p = \min \{D[g] \colon g \text{ is an inter-layer edge and } fg \in T(u)\}\).
        If $\dist(s,x) + r + p \leq d + k$ and $D[e] > r + p$, then we put $D[e] = r + p$.
      \end{enumerate}
    \end{enumerate}
  \end{enumerate}
  For later reference, let $\nu = \min_{v \in N_G(s)} D[sv]$.
  We accept if and only if $\nu \leq \dist_G(s,t) + k$.
  Computing a single entry $D[xy]$ takes time $2^{\Oh(k\log^2k)}n^{\Oh(1)}$, and since there are less than $n^2$ of them, $2^{\Oh(k\log^2k)}n^{\Oh(1)}$ is the complexity of the whole algorithm.

  We now show the correctness of the algorithm.
  Observe first, that each table entry~$D[e]$ only receives a non-infinity value if there is a compatible $e$-$t$ path of length at most~$d + k$; this is ensured by the way the entries are filled in step~(iv).
  Moreover, each non-infinity entry $D[e]$ contains the length of some compatible $e$-$t$ path.
  Thus, the algorithm accepts only if there is a compatible $s$-$t$ path of length at most~$d + k$.
  In particular, if there exists no such path, then the answer is correct.
  Moreover, if there exists a compatible $s$-$t$ path in $G$ of length $\ell^\star \leq d+k$, then $\nu \geq \ell^\star$.

  Now assume that there exists a compatible $s$-$t$ path in $G$ of length~\(\ell^\star\) such that~$\ell^\star \leq d+k$.
  It remains to show that $\nu \leq \ell^\star$.
  We prove the stronger statement that for each inter-layer edge \(xy\) we have that \(D[xy]\) is at most the length, \(\ell\), of a shortest compatible \(xy\)-\(t\) path in \(G_{(x, d + k]}\) that satisfies \(\dist_G(s, x) + \ell \leq d + k\) or \(\infty\) if no such path exists.
  The proof is by induction on \(d - m\) where \(m\) is the layer of \(x\).
  By the above, the statement holds for \(m \geq d - k\).
  Now assume that \(m < d - k\).
  If there is no suitable compatible \(xy\)-\(t\) path, the statement clearly holds.
  Otherwise, let \(P\) be such a path.
  We claim that it suffices to show that on \(P\) there exist consecutive vertices \(v, u, z\) such that the following properties hold.
  \begin{enumerate}[(P1)]
  \item \(m < \dist_G(s, u) \leq m + k + 1\).\label[condition]{en:det-few-layers}
  \item \(\dist_G(s, u) < \dist_G(s, z)\).\label[condition]{en:det-inter-layer}
  \item Let \(P[u, t]\) be the subpath of \(P\) from \(u\) to~\(t\).
    Then \(P[u, t]\) is contained in \(G_{(u, d + k]}\).\label[condition]{en:det-subpath}
  \item There are at most \(2k\) edges on \(P\) between (incl.) \(xy\) and \(vu\).\label[condition]{en:det-short}
  \end{enumerate}
  Let \(e = xy\), \(f = vu\), and \(g = uz\).
  If the above claim is true, then in Step~(1) above we will guess \(x\) and \(u\) by \ref{en:det-few-layers}; in Step~(a) we will guess \(e\) and \(f\) by definition of \(e\); we will find an \(e\)-\(f\) path at most as long as the corresponding subpath of \(P\) in Step~(i) by \ref{en:det-short}; and we will consider \(D[g]\) in the minimum in Step~(iv) by \ref{en:det-inter-layer} and since \(P\) is compatible.
  Furthermore, \(D[g]\) is at most the length, \(\ell_u\), of \(P[u, t]\):
  By \ref{en:det-subpath}, \(P[u, t]\) is a path in \(G_{(u, d + k]}\).
  To see that it satisfies the condition on its length
  let \(\ell' = \ell - \ell_u\).
  Observe that \(\dist_G(s, u) - \dist_G(s, x) \leq \ell'\) and thus we have \(\dist_G(s, u) + \ell_u \leq \dist_G(s, x) + \ell' + \ell_u = \dist_G(s, x) + \ell \leq d + k\).
  Hence indeed, \(P[u, t]\) satisfies \(\dist_G(s, u) + \ell_u \leq d + k\), certifying that \(D[g]\) is at most the length of \(P[u, t]\).
  Thus \(D[e]\) will receive a value that is at most the length of~\(P\) in Step~(iv), as required.

  Before proving the claim, let us observe the following.
  Say that an inter-layer edge \(h\) is a \emph{back} edge if \(P\) traverses the vertex in \(h\) that is in a larger layer before the other vertex in~\(h\).
  Observe that the length of \(P\) is \(d + a + 2b\) where \(a\)~is the number of within-layer edges in~\(P\) and \(b\)~the number of back edges in~\(P\).
  Thus, we have \(a + b \leq k\).
  In particular, there are at most \(k\) layers in which \(P\) contains at least two vertices.
  We will use this fact below.
  
  It remains to prove our claim above.
  Since there are at most \(k\) layers in which \(P\) contains at least two vertices, in the layers \(m + 1, m + 2, \ldots, m + k + 1\) there is least one vertex, \(u\), on \(P\) such that \(u\) is the only vertex of \(P\) in \(u\)'s layer.
  We claim that \(u\) together with the vertex, \(v\), that precedes \(u\) on \(P\) and the vertex, \(z\), that succeeds \(u\) on \(P\) satisfy the properties in the claim.
  Clearly, \ref{en:det-few-layers} is satisfied.
  Since \(u\) is the only vertex of \(P\) on \(u\)'s layer, also \ref{en:det-inter-layer} is satisfied.
  For the same reason, \ref{en:det-subpath} is satisfied.
  Finally, suppose that \ref{en:det-short} does not hold, that is, there are more than \(2k\) edges between \(xy\) and \(vu\) on \(P\).
  Then the length, \(\ell\), of \(P\) is at least \(2k + 1 + \dist_G(u, t)\).
  However, then \(\dist_G(s, x) + \ell \geq \dist_G(s, x) + 2k + 1 + \dist_G(u, t) > d + k\), a contradiction to the fact that \(\dist_G(s, x) + \ell \leq d + k\).
  Thus, the claim holds, meaning that \(\nu \leq \ell^\star\) and the algorithm is correct.
\end{proof}

\section{Graph width parameters}\label{sec:parameters}

In this section we give our results pertaining to graph-width measures.
As outlined in the introduction, finding a compatible \(s\)-\(t\) path of length at most~\(k\) is fixed-parameter tractable with respect to~\(k\) (see \cref{thm:comp-path-fpt-length}).
Because the length of a simple path is upper-bounded by functions of the smallest size of a vertex cover and of the treedepth, tractability for these two parameters also follows.
In \cref{sec:lowerbound} we give a limit to this avenue of proving tractability results for successively stronger parameters:
We prove that detecting compatible \(s\)-\(t\) paths and related problems are W[1]-hard with respect to the size of modulators to constant treewidth.
In \cref{sec:treecutwidth} we show that detecting compatible \(s\)-\(t\) paths is fixed-parameter tractable with respect to the treecut-width.
Then, in \cref{sec:edgecol-treewidth} we turn to edge-colored graphs and give a fixed-parameter algorithm for detecting properly colored Hamiltonian cycles parameterized by the treewidth.

\subsection{Modulator to linear forest}\label{sec:lowerbound}

Let $G$ be an undirected graph.
A \emph{modulator to a linear forest} of $G$ is a vertex subset $S \subseteq V(G)$ such that $G - S$ is a disjoint union of paths.
The \emph{distance $k$ of $G$ to a linear forest} is the minimum size, $k$, of a modulator to a linear forest.
Note that the distance to a linear forest upper bounds the size of a minimum feedback-vertex set and the treewidth and hence W[1]-hardness for these two parameters is implied by W[1]-hardness for~$k$.
A \emph{modulator to treewidth two} of $G$ is a vertex subset $S \subseteq V(G)$ such that $G - S$ has treewidth at most two.
The \emph{distance of $G$ to treewidth two} is the minimum size of a modulator to treewidth two.
Analogously, the distance to treewidth two upper bounds the treewidth and hence W[1]-hardness for treewidth is implied by W[1]-hardness for the distance to treewidth two.

\looseness=-1
In this section, we first show that finding long paths or cycles is W[1]-hard with respect to the distance~$k$ to a linear forest.
Moreover, assuming the Exponential Time Hypothesis (ETH), no $f(k)\cdot n^{\smalloh(k/\log k)}$-time algorithm can exist.
Informally, the ETH states that \textsc{3-SAT} on $n$-variable formulas cannot be solved in $2^{o(n)}$~time, see~\cite{impagliazzo_which_1998,impagliazzo_complexity_1999}.
We obtain the following.
\begin{theorem}\label[theorem]{thm:lb1}
  Let $(G, T)$ be forbidden-transition graph and $s$, $t$ two vertices in $G$.
  Let $\ell$ be a positive integer and let $k$ be the distance of $G$ to a linear forest.
  For each of the following, it is W[1]-hard with respect to $k$ to decide and, moreover, an $f(k)\cdot n^{\smalloh(k/\log k)}$-time decision algorithm contradicts the ETH:
  \begin{enumerate}[(i)]
  \item whether $G$ contains a compatible $s$-$t$ path,\label{lb:path}
  \item whether $G$ contains a compatible $s$-$t$ path of length at least $\ell$ (or at most \(\ell\)),\label{lb:long-path}
  \item whether $G$ contains a compatible cycle, and\label{lb:cycle}
  \item whether $G$ contains a compatible cycle of length at least $\ell$ (or at most \(\ell\)).\label{lb:long-cycle}
  \end{enumerate}
  \end{theorem}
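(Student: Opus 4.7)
The plan is to reduce from \textsc{Partitioned Subgraph Isomorphism} (PSI), which under ETH admits no $f(k)\cdot n^{\smalloh(k/\log k)}$-time algorithm by a theorem of Marx, where $k$ is the number of edges of the pattern graph $H$. An instance of PSI is a triple $(G,H,(V_u)_{u\in V(H)})$ where $(V_u)_{u\in V(H)}$ partitions $V(G)$, and we ask whether there exists an embedding $\varphi\colon V(H)\to V(G)$ with $\varphi(u)\in V_u$ for each $u$ and $\varphi(u)\varphi(u')\in E(G)$ for every $uu'\in E(H)$. We may assume $H$ has no isolated vertices, so $|V(H)|\le 2k$; the target is a forbidden-transition graph $(G',T)$ with two distinguished vertices $s,t$ whose distance to a linear forest is $\Oh(k)$.

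For the construction I would place into the modulator $S$, for every incidence $(u,e)$ with $u\in e\in E(H)$, two \emph{anchor} vertices $a_{u,e},b_{u,e}$, giving $|S|=\Oh(|E(H)|)=\Oh(k)$ apart from a few backbone vertices. Between $a_{u,e}$ and $b_{u,e}$ I would install a \emph{selection gadget}: for every candidate $v\in V_u$, an internally disjoint $a_{u,e}$-$b_{u,e}$ path $\pi_{u,e,v}$ of fixed length, whose interior lies in $G'-S$ and thus contributes a connected component of the linear forest. I would line up the anchors along a \emph{backbone} in the order prescribed by an Eulerian circuit of the multigraph $2H$ obtained by doubling every edge of $H$; doubling makes the circuit exist and places the two endpoints of every edge $e\in E(H)$ consecutively on the tour. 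At the modulator vertex gluing the gadgets of the two endpoints $u,u'$ of an edge $e\in E(H)$, I would permit only those transitions between a port labeled $v\in V_u$ and a port labeled $v'\in V_{u'}$ with $vv'\in E(G)$, directly encoding edge-existence in PSI.

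The main obstacle is intra-vertex consistency: the $\deg_H(u)$ selection gadgets for the same $u$ must all pick the same candidate $v\in V_u$, yet a simple path cannot revisit any $\pi_{u,e,v}$. I would handle this by inserting between consecutive selection gadgets for $u$ a small \emph{identity gadget}: a constant-size modulator fragment connected by internally disjoint paths whose permitted transitions realize the identity permutation on $V_u$, so the only way to traverse it is to leave via the port labeled by the same candidate one entered. Each incidence in $H$ contributes $\Oh(1)$ additional modulator vertices, so $|S|=\Oh(k)$ still holds and $G'-S$ remains a disjoint union of paths; hence the distance of $G'$ to a linear forest is $\Oh(k)$, and (since this upper-bounds treewidth and pathwidth) the hardness transfers to those parameters too.

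Each compatible $s$-$t$ path produced by the construction has a length determined entirely by the gadget sizes, so items (i) and (ii) follow immediately (for (ii), take $\ell$ to be this prescribed length). For (iii) and (iv) I would add the edge $st$ and permit the two transitions joining it to the incident backbone edges, turning every compatible $s$-$t$ path into a compatible cycle of length one greater; the backbone is arranged so that no spurious compatible cycles are created. A hypothetical $f(k')\cdot n^{\smalloh(k'/\log k')}$-time algorithm for any of (i)--(iv), applied with the parameter $k'=\Oh(k)$ of our construction, would yield an $f(k)\cdot n^{\smalloh(k/\log k)}$-time algorithm for PSI, contradicting the ETH.
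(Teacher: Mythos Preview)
Your high-level plan---reduce from \textsc{Partitioned Subgraph Isomorphism} with a modulator of size $\Oh(|E(H)|)$ encoding the structure of $H$ and the linear-forest part carrying the candidate vertices of $G$---is the same as the paper's. The gap is in your consistency mechanism.

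An Eulerian circuit of $2H$ visits a vertex $u$ exactly $\deg_H(u)$ times, and these visits are \emph{not contiguous} in general. Your identity gadgets are placed ``between consecutive selection gadgets for $u$'' on the backbone, which can only link the in-incidence and out-incidence of a \emph{single} visit; nothing ties together two different visits to $u$. Take $H$ to be the path $u_1u_2u_3$: one Euler circuit of $2H$ is $u_1u_2u_3u_2u_1$, where the two visits to $u_2$ (positions $2$ and $4$) are separated by the visit to $u_3$, and no identity gadget spans them. The forward direction of your reduction is fine, but in the backward direction a compatible $s$-$t$ path only certifies a colour-respecting walk in $G$ along the tour (a homomorphism from a blow-up of $H$), not an injective embedding of $H$. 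There is a secondary issue you also do not address: with one selection gadget per incidence and each edge of $H$ appearing twice on the tour of $2H$, several natural readings of your backbone force a simple path to traverse some gadget twice.

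The paper avoids interleaving selection and verification. Its backbone $P$ carries, for each $i\in V(H)$, a single subpath $P^i$ whose internal vertices are indexed by the edges of $G$ incident to colour class $V_i$; choosing $\varphi(i)=v$ is encoded by a detour through one modulator vertex $y^i$ that \emph{skips} precisely the vertices of $P^i$ indexed by edges in $E_G(v)$. Edge verification is a separate second phase: for each $e_p=\{i,j\}\in E(H)$, three modulator vertices $z^p_1,z^p_2,z^p_3$ force the path to revisit one vertex of $P^i$ and one of $P^j$ that are indexed by the \emph{same} edge of $G$, which is only possible if both were skipped, i.e., if that edge is $\{\varphi(i),\varphi(j)\}$. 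Because there is exactly one selection step per vertex of $H$, intra-vertex consistency is automatic, and the modulator $Y\cup Z$ has size $|V(H)|+3|E(H)|\le 5|E(H)|$.
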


\newcommand{\col}{\ensuremath{\textsf{col}}}
\newcommand{\subisolong}{\textsc{Partitioned Subgraph Isomorphism}}
\newcommand{\subiso}{\textsc{PSI}}
\newcommand{\multiclique}{\textsc{Multicolored Clique}}

\begin{proof}
We first give a reduction to prove hardness of~\cref{lb:path}.
Observe that~\cref{lb:long-path} follows from~\cref{lb:path}.
We then modify the construction to obtain \cref{lb:cycle} and \cref{lb:long-cycle}.

Our reduction is from the \subisolong~(\subiso) problem.
Herein, we are given two graphs $G$ and $H$, where $V(H) = [n_H]$ for some positive integer $n_H$, and a vertex coloring $\col \colon V(G) \to V(H)$ of the vertices of $G$ with colors that one-to-one correspond to the vertices of~$H$.
Moreover, each vertex of $H$ is incident with at least one edge and for each edge $\{u, v\} \in E(G)$ we have $\col(u) \neq \col(v)$.
We want to decide whether $H$ is isomorphic to a subgraph of~$G$ while respecting the colors, that is, whether there is an injective mapping $\phi \colon V(H) \to V(G)$ such that for all $u \in V(H)$ we have $\col(\phi(u)) = u$ and for all $\{u, v\} \in E(H)$ we have $\{\phi(u), \phi(v)\} \in E(G)$.
In that case, we also say that $\phi$ is a \emph{subgraph isomorphism} from $H$ into~$G$.
In the following we let $m_H = |E(H)|$.
Observe that $n_H \leq 2m_H$ since each vertex of $H$ is incident with at least one edge.
Since \subiso\ contains \multiclique~\cite{fellows_parameterized_2009} as a special case, \subiso\ is W[1]\nobreakdash-hard with respect to~$m_H$.
Moreover, Marx~\cite[Corollary 6.3]{marx_can_2010} observed that an $f(m_H)\cdot n^{\smalloh(m_H/\log m_H)}$-time algorithm for \subiso\ would contradict the ETH.

Our construction works as follows: we first build a path from $s$ to a vertex $t_1$.
This path is the concatenation of $n_H$ subpaths $P^1,\dots,P^{n_H}$ where each subpath is associated with a vertex of $H$.
The subpath $P^i$ contains a vertex for each edge of $G$ incident to a vertex colored $i$.
We then use an extra vertex and an appropriate transition system so that one can choose any vertex $v$ of $G$ with color $i$ and connect the endpoints of $P^i$ with a compatible path that skips the vertices of $P^i$ that denote an edge adjacent to $v$.
This comes down to choosing $\phi(i)=v$.
Finally, we connect $t_1$ to $t$ by a sequence of gadgets each associated with an edge of~$H$.
Choosing a path through a gadget comes down to mapping an edge $uv$ of $H$ to an edge $wx$ of $G$.
Our transition system then requires the path in the gadget to visit the two vertices of $P$ that denote the edge $wx$, which can only be done without repeating vertices if those vertices have been skipped between $s$ and $t_1$.
This means that the endpoints of $wx$ have to be the vertices we chose as $\phi(u)$ and $\phi(v)$.
By ensuring that there is an edge between $\phi(u)$ and $\phi(v)$, we prove that $\phi$ is a subgraph isomorphism.

\tikzset{end/.style={
        fill,circle,scale=0.5
    }
}

\tikzset{gn/.style={
        fill,circle,scale=0.5
    }
}

\tikzset{middle/.style={
        fill,circle,scale=0.4
    }
}

\tikzset{invisible/.style={
        fill,circle,scale=0.05
    }
}

\tikzset{simple/.style={
        black, thick
    }
}
\tikzset{dot/.style={
        black, thick, dotted
    }
}

\tikzset{bl/.style={
        blue
    }
}

\tikzset{gr/.style={
        green
    }
}

\tikzset{re/.style={
        red
    }
}

\tikzset{or/.style={
        olive
    }
}
\tikzset{bracket/.style={
        black, thin
    }
}

\tikzset{short/.style={
        gray, thick
    }
}

\tikzset{none/.style={
    }
}

\begin{figure}[tbp]
  \centering
  \begin{tikzpicture}[scale=0.6]
    
		\node [style=end,label={[label distance=0.1mm]left:{$s$}}] (0) at (-10, -0) {};
		\node [style=end] (1) at (-6, -0) {};
		\node [style=end] (2) at (-4, -0) {};
		\node [style=end] (3) at (0, -0) {};
		\node [style=end] (4) at (2, -0) {};
		\node [style=end] (5) at (6, -0) {};
		\node [style=end] (6) at (8, -0) {};
		\node [style=end,label={[label distance=0.1mm]right:{$t_1$}}] (7) at (12, -0) {};
		\node [style=gn, label={[label distance=0.1mm]above:{$z_3^p$}}] (8) at (-1.75, 3) {};
		\node [style=gn, label={[label distance=0.1mm]above:{$z_2^p$}}] (9) at (1.000001, 3) {};
		\node [style=gn, label={[label distance=0.1mm]above:{$z_1^p$}}] (10) at (3.5, 3) {};
		\node [fill=blue, circle,scale=0.4] (11) at (-2.5, -0) {};
        \node [style=none, label={[label distance=0.1mm]below:{\tiny $x^{i}_{b(e)}$}}] (11a) at (-2.5, 0.3) {};
		\node [fill=green,, circle,scale=0.4] (12) at (-1, -0) {};
        \node [style=none, label={[label distance=0.1mm]below:{\tiny $x^{i}_{b(f)}$}}] (12a) at (-0.8, 0.3) {};
		\node [style=middle,fill=green] (13) at (3.25, -0) {};
        \node [style=none, label={[label distance=0.1mm]below:{\tiny $x^{j}_{a(f)}$}}] (11a) at (3.25, 0.3) {};
		\node [style=middle,fill=blue] (14) at (4.75, -0) {};
        \node [style=none, label={[label distance=0.1mm]below:{\tiny $x^{j}_{a(e)}$}}] (11a) at (4.75, 0.3) {};
		\node [style=end,label={[label distance=0.1mm]left:{$t$}}] (15) at (-11.25, 3) {};
		\node [style=invisible] (16) at (-6, -5) {};
		\node [style=invisible] (17) at (-6, -5.25) {};
		\node [style=invisible] (18) at (-10, -5) {};
		\node [style=invisible] (19) at (-10, -5.25) {};
		\node [style=invisible] (20) at (-4, -5) {};
		\node [style=invisible] (21) at (0, -5) {};
		\node [style=invisible] (22) at (-4, -5.25) {};
		\node [style=invisible] (23) at (0, -5.25) {};
		\node [style=invisible] (24) at (2, -5) {};
		\node [style=invisible] (25) at (6, -5) {};
		\node [style=invisible] (26) at (2, -5.25) {};
		\node [style=invisible] (27) at (6, -5.25) {};
		\node [style=invisible] (28) at (8, -5) {};
		\node [style=invisible] (29) at (12, -5) {};
		\node [style=invisible] (30) at (8, -5.25) {};
		\node [style=invisible] (31) at (12, -5.25) {};
		\node [style=gn, label={[label distance=0.1mm]above:{$z_3^1$}}] (32) at (7.25, 3) {};
		\node [style=gn, label={[label distance=0.1mm]above:{$z_2^1$}}] (33) at (9.5, 3) {};
		\node [style=gn, label={[label distance=0.1mm]above:{$z_1^1$}}] (34) at (11.75, 3) {};
		\node [style=gn, label={[label distance=0.1mm]above:{$z_2^{m_H}$}}] (35) at (-7, 3) {};
		\node [style=gn, label={[label distance=0.1mm]above:{$z_1^{m_H}$}}] (36) at (-5, 3) {};
		\node [style=gn, label={[label distance=0.1mm]above:{$z_3^{m_H}$}}] (37) at (-9, 3) {};
		\node [style=gn, label=below:{$y^i$}] (38) at (-1.75, -3.5) {};
		\node [style=middle,fill=olive] (39) at (-2, -0) {};
        \node [style=none, label={[label distance=0.1mm]above:{\tiny $\text{post}(u)$}}] (39a) at (-2, 0.4) {};
        \node [style=none] (39b) at (-2, 1) {};
        \node [style=none] (39c) at (-2, 0.3) {};
        \draw [->,very thin,gray] (39b) to (39c);
		\node [style=middle,fill=red] (40) at (-1.5, -0) {};
        \node [style=none, label={[label distance=0.1mm]above:{\tiny $\text{pre}(v)$}}] (40a) at (-1.3, -0.1) {};
        \node [style=none] (40b) at (-1.5, 0.6) {};
        \node [style=none] (40c) at (-1.5, 0) {};
        \draw [->,very thin,gray] (40b) to (40c);
		\node [style=middle,fill=olive] (41) at (-3.25, -0) {};
        \node [style=none, label={[label distance=0.1mm]above:{\tiny $\text{pre}(u)$}}] (41a) at (-3.25, -0.3) {};
		\node [style=middle,fill=red] (42) at (-0.5000001, -0) {};
        \node [style=none, label={[label distance=0.1mm]above:{\tiny $\text{post}(v)$}}] (42a) at (-0.1, -0.25) {};
		\node [style=invisible] (43) at (4.5, 3) {};
		\node [style=invisible] (44) at (6.25, 3) {};
		\node [style=invisible] (45) at (-2.5, 3) {};
		\node [style=invisible] (46) at (-4.25, 3) {};
		\node [style=invisible] (47) at (11, 2.75) {};
		\node [style=invisible] (48) at (11.25, 2.5) {};
		\node [style=invisible] (49) at (8.75, 2.75) {};
		\node [style=invisible] (50) at (9, 2.5) {};
		\node [style=invisible] (51) at (9.5, 2.25) {};
		\node [style=invisible] (52) at (10, 2.5) {};
		\node [style=invisible] (53) at (6.75, 2.5) {};
		\node [style=invisible] (54) at (7.25, 2.25) {};
		\node [style=invisible] (55) at (-4.75, 2.5) {};
		\node [style=invisible] (56) at (-4.5, 2.75) {};
		\node [style=invisible] (57) at (-7, 2.5) {};
		\node [style=invisible] (58) at (-6.75, 2.5) {};
		\node [style=invisible] (59) at (-6.5, 2.5) {};
		\node [style=invisible] (60) at (-6.25, 2.5) {};
		\node [style=invisible] (61) at (-8.75, 2.5) {};
		\node [style=invisible] (62) at (-8.5, 2.75) {};
		\node [style=invisible] (63) at (-8, -5.5) {};
		\node [style=invisible, label=below:{$P^1$}] (64) at (-8, -5.75) {};
		\node [style=invisible] (65) at (-2, -5.5) {};
		\node [style=invisible, label=below:{$P^i$}] (66) at (-2, -5.75) {};
		\node [style=invisible] (67) at (4, -5.5) {};
		\node [style=invisible, label=below:{$P^j$}] (68) at (4, -5.75) {};
		\node [style=invisible] (69) at (10, -5.5) {};
		\node [style=invisible, label=below:{$P^{n_H}$}] (70) at (10, -5.75) {};

		\draw [style=simple] (0) to (1);
		\draw [style=simple] (2) to (3);
		\draw [style=simple] (4) to (5);
		\draw [style=dot] (1) to (2);
		\draw [style=dot] (3) to (4);
		\draw [style=dot] (5) to (6);
		\draw [style=simple] (6) to (7);
		\draw [style=bl] (8) to (11);
		\draw [style=bl] (11) to (9);
		\draw [style=bl] (9) to (14);
		\draw [style=bl] (14) to (10);
		\draw [style=gr] (8) to (12);
		\draw [style=gr] (12) to (9);
		\draw [style=gr] (9) to (13);
		\draw [style=bracket] (19) to (18);
		\draw [style=bracket] (19) to (17);
		\draw [style=bracket] (17) to (16);
		\draw [style=bracket] (20) to (22);
		\draw [style=bracket] (22) to (23);
		\draw [style=bracket] (23) to (21);
		\draw [style=bracket] (24) to (26);
		\draw [style=bracket] (26) to (27);
		\draw [style=bracket] (27) to (25);
		\draw [style=bracket] (28) to (30);
		\draw [style=bracket] (30) to (31);
		\draw [style=bracket] (31) to (29);
		\draw [style=gr] (10) to (13);
		\draw [style=re] (40) to (38);
		\draw [style=re] (38) to (42);
		\draw [style=or] (41) to (38);
		\draw [style=or] (38) to (39);
		\draw [style=simple] (34) to [bend left=60] (7);
		\draw [style=simple] (36) to (46);
		\draw [style=simple] (45) to (8);
		\draw [style=simple] (10) to (43);
		\draw [style=simple] (44) to (32);
		\draw [style=simple] (15) to (37);
		\draw [style=dot] (46) to (45);
		\draw [style=dot] (43) to (44);

		\draw [style=short] (34) to (47);
		\draw [style=short] (34) to (48);
		\draw [style=short] (33) to (49);
		\draw [style=short] (33) to (50);
		\draw [style=short] (33) to (51);
		\draw [style=short] (33) to (52);
		\draw [style=short] (32) to (53);
		\draw [style=short] (32) to (54);
		\draw [style=short] (32) to (54);
		\draw [style=short] (37) to (61);
		\draw [style=short] (37) to (62);
		\draw [style=short] (35) to (57);
		\draw [style=short] (35) to (58);
		\draw [style=short] (35) to (59);
		\draw [style=short] (35) to (60);
		\draw [style=short] (36) to (55);
		\draw [style=short] (36) to (56);
		\draw [->] (63) to (64);
		\draw [->] (65) to (66);
		\draw [->] (67) to (68);
		\draw [->] (69) to (70);
\end{tikzpicture}
\caption{An illustration of our construction. We use colors to denote edges that have to be used consecutively because of the set of permitted transitions. For example, the two dark yellow edges correspond to a vertex $u$ of $G$ in the vertex-selection gadget for vertex~$i$ of~$H$ and the four blue edges correspond to an edge $e$ of~$G$ in the edge-selection gadget for the $p$th edge of~$H$.} \label{fig:lowerbound}
\end{figure}
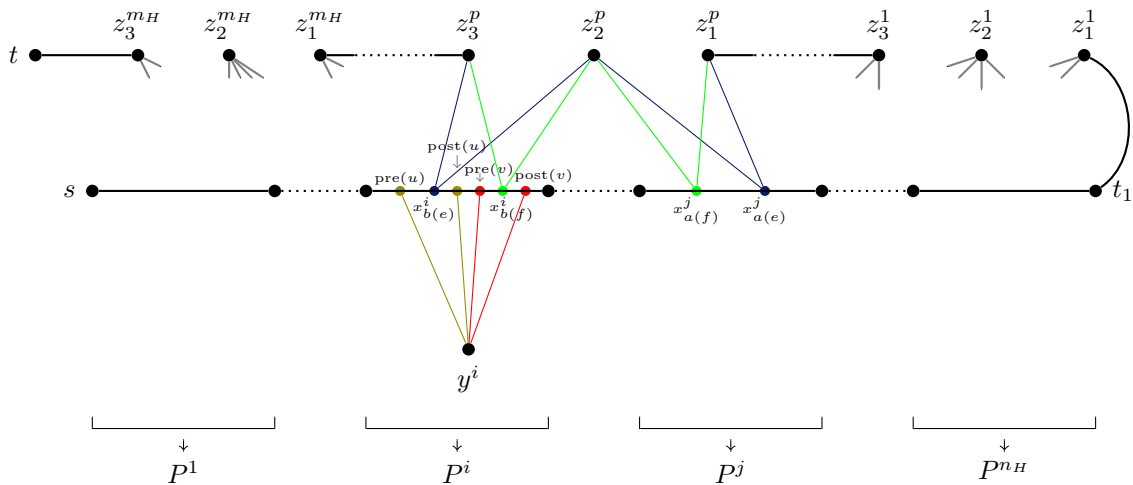

\newcommand{\consg}{\ensuremath{G^\star}}
\newcommand{\segpre}{\ensuremath{\textsf{pre}}}
\newcommand{\segpost}{\ensuremath{\textsf{post}}}
\begin{construction}\label{cons:lb}
  Let $(G, H, \col)$ be an instance of \subiso, where $V(H) = [n_H]$.
  For each $i \in [n_H]$ define $V_i = \{v \in V(G) \mid \col(v) = i\}$.
  For each $i \in [n_H]$ define $E_{i} = \{e \in E(G) \mid \exists u \in V_i \colon u \in e\}$.
  We construct a forbidden-transition graph $(\consg, T)$ as follows, see Figure~\ref{fig:lowerbound} for an illustration.
  We begin with \consg\ being empty.
  We will specify $T$ by giving the permitted-transition sets~$T(v)$ for the individual vertices $v \in V(\consg)$.
  Below, we specify $T(v)$ only for a subset of $V(\consg)$.
  For all the remaining vertices~$v$, we put $T(v) = \binom{E(v)}{2}$ (recall that $E(v)$ is the set of edges in $\consg$ that are incident with~$v$).
  Introduce new vertices $s, t, t_1$ into \consg.
  We construct the vertex-selection gadgets as follows.

  Introduce a path $P$ from $s$ to $t_1$ into \consg; we specify the number of vertices on $P$ indirectly below.
  For each internal vertex $v \in V(P)$ put $T(v) = \{\{\{u, v\}, \{v, w\}\}\}$ where $u$ and $w$ are the neighbors of $v$ in $P$.
  Additional edges and transitions for the vertices on $P$ will be introduced below.
  Partition $P$ into $n_H$ disjoint paths $P^1, \ldots, P^{n_H}$; we specify the number of vertices in each of these paths in the next step.

  \looseness=-1
  For each $i \in [n_H]$, proceed as follows.
  Let $(e^i_a)_{a \in [r_i]}$ be an ordering of $E_i$ such that, for each $v \in V_i$, the edges in $E(v)$ form a segment in $(e^i_a)$ (observe that such an ordering exists since the endpoints of each edge in $E(G)$ have two different colors).
  Set the number of vertices in $P^i$ to $r_i + 4$.
  For each $a \in [r_i]$ denote the $a + 2$-th vertex on $P^i$ by $x^i_a$.
  We say that vertex~$x^i_a$ \emph{corresponds} to the edge~$e^i_a$ of~$G$.
  (We keep the first two and last two vertices of $P^i$ unnamed.)

  Next, introduce a vertex $y^i$ and for each vertex $v \in V_i$ proceed as follows.
  Let $\segpre(v)$ be the vertex in~$P^i$ that directly precedes on $P^i$ the first vertex corresponding to an edge in $E_G(v)$.
  Similarly, let $\segpost(v)$ be the vertex in $P^i$ that directly succeeds on $P^i$ the last vertex corresponding to an edge in~$E_G(v)$.
  For later, it is useful to observe that all vertices on $P^i$ strictly between $\segpre(v)$ and $\segpost(v)$ correspond to edges in~$E_G(v)$.
  Now add the edges $\{\segpre(v), y^i\}$ and $\{y^i, \segpost(v)\}$ to \consg\ and the transition $\{\{\segpre(v), y^i\}, \{y^i, \segpost(v)\}\}$ to~$T(y^i)$.
  Moreover, add the transition $\{\{u, \segpre(v)\}, \{\segpre(v), y^i\}\}$ to $T(\segpre(v))$ where $u$ is the vertex on $P^i$ preceding $\segpre(v)$ (if any), and add the transition $\{\{y^i, \segpost(v)\}, \{\segpost(v), w\}\}$ to $T(\segpost(v))$, where $w$ is the vertex on $P^i$ succeeding $\segpost(v)$.
  This finishes the construction of the vertex-selection gadgets, but further edges and transitions may be introduced later to the vertices of~$P$.

  We now construct the edge-verification gadgets.
  Let $(e_1, \ldots, e_{m_H})$ be an arbitrary ordering of the edges in $E(H)$.
  For each $p \in [m_H]$ proceed as follows.
  Introduce three vertices $z^p_1$, $z^p_2$, and $z^p_3$.
  Let $\{i, j\} = e_p$ where $i > j$.
  For each edge $e \in E_i \cap E_j$ of $G$ proceed as follows.
  Let $a(e)$ be the index of $e$ in the ordering~$(e^i_a)$ defined for vertex~$i$ when constructing the vertex-selection gadget.
  Similarly, let $b(e)$ be the index of $e$ in the ordering $(e^j_a)$ defined for~$j$.
  Introduce the following edges into \consg:
  \[\{z^p_1, x^i_{a(e)}\}, \qquad \{x^i_{a(e)}, z^p_2\}, \qquad \{z^p_2, x^j_{b(e)}\}, \qquad \text{and} \qquad \{x^j_{b(e)}, z^p_3\}.\]
  Furthermore, add the following transitions:
  \begin{itemize}
  \item $\{\{z^p_1, x^i_{a(e)}\}, \{x^i_{a(e)}, z^p_2\}\}$ to $T(x^i_{a(e)})$,
  \item $\{\{z^p_2, x^j_{b(e)}\}, \{x^j_{b(e)}, z^p_3\}\}$ to $T(x^j_{b(e)})$, and
  \item $\{\{x^i_{a(e)}, z^p_2\}, \{z^p_2, x^j_{b(e)}\}\}$ to $T(z^p_2)$.
  \end{itemize}
  \looseness=-1
  To conclude the construction of the edge-verification gadgets, add the following edges: $\{t_1, z^1_1\}$; for each $p \in [m_H - 1]$ the edge $\{z^p_3, z^{p + 1}_1\}$; and $\{z^{m_H}_3, t\}$.
  This concludes the construction of \consg\ and $T(\consg)$ (recall that for vertices $v$ for which we left $T(v)$ unspecified we put $T(v) = \binom{E(v)}{2}$).
\end{construction}

Observe that \cref{cons:lb} can be carried out in polynomial time.
We claim that the distance to linear forest of \consg\ is at most $n_H + 3m_H \leq 5m_H$.
Let $Z = \{z^p_1, z^p_2, z^p_3 \mid p \in [m_H]\}$ and $Y = \{y^i \mid i \in [n_H]\}$.
Note that the only vertices in $\consg - (V(P) \cup \{t\})$ are in $Y \cup Z$.
Moreover, no edges between two vertices on~$P$ have been introduced into \consg.
Thus, $Y \cup Z$ is a modulator to a linear forest and $\consg$ has distance at most $5m_H$ to a linear forest.
If \cref{cons:lb} is correct, by the properties of \subiso\ it thus follows that deciding whether a graph has a compatible $s$-$t$ path is W[1]-hard with respect to the distance, $k$, to a linear forest, and that an $f(k)n^{\smalloh(k/\log k)}$-time decision algorithm contradicts the ETH.
We next show the correctness of \cref{cons:lb}.

\paragraph{Correctness.}
We now show that $(\consg, T)$ contains a compatible $s$-$t$ path if and only if there is a subgraph isomorphism from $H$ into~$G$.

Suppose first that there is a subgraph isomorphism $\phi$ from $H$ into~$G$.
Construct an $s$-$t$ walk $P^\star$ by concatenating the following path segments (observe while reading the construction, that $P^\star$ is compatible):
\begin{enumerate}
\item The subpath on $P$ from $s$ to $\segpre(\phi(1))$.
\item The three vertices $\segpre(\phi(1))$, $y^1$, $\segpost(\phi(1))$.
\item For each $i = 2, 3, \ldots, n_H$ take:
  \begin{enumerate}
  \item The subpath on $P$ from $\segpost(\phi(i - 1))$ to $\segpre(\phi(i))$.
  \item The three vertices $\segpre(\phi(i))$, $y^i$, $\segpost(\phi(i))$.
  \end{enumerate}
\item The subpath on $P$ from $\segpost(\phi(n_H))$ to $t_1$.
\item For each $p = 1, 2, \ldots, m_H$, let $e_p$ be the $p$th edge of
  $H$ according to the ordering of $E(H)$ fixed in \cref{cons:lb}, let $e_p = \{i, j\}$, where $i > j$, let $e = \{\phi(i), \phi(j)\}$, let $a(e)$ be the index of $e$ in the ordering $(e^i_a)$ and $b(e)$ the index of $e$ in the ordering $(e^j_a)$.
  Take the %
  vertices $z^p_1$, $x^i_{a(e)}$, $z^p_2$, $x^j_{b(e)}$, and~$z^p_3$.
\item The edge $\{z^{m_H}_3, t\}$.
\end{enumerate}
\looseness=-1
This concludes the construction of $P^\star$.
Suppose, for a contradiction, that $P^\star$ is not a path, that is, there is a vertex $v$ in \consg\ which is contained twice in $P^\star$.
Since $V(G)$ is partitioned into $V(P)$, $Y$, $Z$, and $\{t\}$ and each vertex of $Y$ and $Z$ occurs only once in the definition of $P^\star$, we have $v \in V(P)$.
Since each segment in the construction of $P^\star$ is a path, the two occurrences must be in different segments.
Observe that all segments of~$P^\star$ in steps~1 to~4 that are contained in $V(P)$ are pairwise disjoint subpaths of $P$.
Furthermore, all vertices in $V(P)$ used in the segments constructed in step~5 are pairwise distinct.
Thus, there is one occurrence of $v$ in steps~1 to~4, and one in step~5.
Moreover, $v$ corresponds to some edge $e$ of $G$.
However, according to the steps~1 to 4, vertex $v$ corresponds to some edge which is not incident to a vertex in $\phi(V(H))$ and, according to step~5, vertex $v$ corresponds to some edge which is incident to a vertex in $\phi(V(H))$, a contradiction.
Thus, indeed, $P^\star$ is a compatible $s$-$t$ path, as required.

Now suppose that $(\consg, T)$ contains a compatible $s$-$t$ path $P^\star$.
Obviously, $P^\star$ starts with a subsegment of~$P$.
By construction of the transitions on vertices on $P$, at each internal vertex of $P$, the path $P^\star$ may either continue on $P$ or go to some vertex of~$Y$.
Moreover, whenever $P^\star$ traverses a vertex of $Y$, it immediately returns to $P$ with the next vertex.
Path $P^\star$ hence begins with a segment which starts at $s$, alternatingly contains a sequence of vertices on $P$ and a vertex of $Y$, and ends at $t_1$.
Let $Y' = Y \cap V(P^\star)$ (we show below that $Y' = Y$).
Observe that, for each vertex $y^i \in Y'$, there exists $v \in V_i$ such that $P^\star$ contains the edges $\{\segpre(v), y^i\}$ and $\{y^i, \segpost(v)\}$, by the transitions defined for~$y^i$.
Define a (partial) function $\phi \colon V(H) \to V(G)$ as follows.
For each $i \in [n_H]$ such that $y_i \in Y'$ put $\phi(i) = v$, where $v$ is as defined above.
For later, put $P^\star_1$ to be the segment of $P^\star$ from $s$ to $t_1$ and put $P^\star_2$ to be the segment of $P^\star$ from $t_1$ to $t$.
Observe that $P^\star_1$ contains precisely all vertices of $P$ except those that correspond to edges in $G$ which are incident to the vertices of~$\phi(Y')$.

To show that $\phi$ is total and that $\phi$ is a subgraph isomorphism from $H$ into~$G$, we now argue that $P^\star_2$ contains $z^p_2$ for each $p \in [m_H]$.
Since $P^\star_2$ is a path, it starts with the edge $\{t_1, z^1_1\}$.
Moreover, by the edges and transitions of the vertices $z^p_1$, $x^i_a$, $z^p_2$, and $z^p_3$ ($p \in [m_H]$, $i \in [n_H]$, $a \in \mathbb{N}$), whenever $P^\star_2$ traverses a vertex~$z^p_1$, $p \in [m_H]$, it next traverses some vertex $x^i_a$, then the vertex $z^p_2$, some vertex $x^j_b$, and the vertex $z^p_3$ for some $i, j \in [n_H]$ where $i > j$.
Moreover, after $z^p_3$, path $P^\star_2$ traverses either $z^{p + 1}_1$ (if $p < m_H$) or $t$ (if $p = m_H$) because the only other vertices that $P^\star_2$ may traverse after $z^p_3$ are vertices $x^j_{a'}$ and, by their transitions, $P^\star_2$ would then have to contain $z^p_2$ a second time.
Concluding, $P^\star_2$ contains $z^p_2$ for each $p \in [m_H]$.

Let $p \in [m_H]$ and let $e_p$ be the $p$th edge of $H$ according to the ordering of $E(H)$ fixed in \cref{cons:lb}.
Let $e_p = \{i, j\}$ with $i > j$.
As argued above $P^\star_2$ contains $z^p_2$.
Let $x^i_a$ and $x^j_b$ be the vertices that $P^\star_2$ traverses before and after $z^p_2$.
By the transitions of $z^p_2$, the vertices $x^i_a$ and $x^j_b$ correspond to the same edge of~$G$.
Denote this edge by~$f_p$.
We now show that the edges $f_p$, $p \in [m_H]$, ensures that $\phi$ is total and a subgraph isomorphism.

First, to see that $\phi$ is total, recall that each vertex $i \in V(H)$ is incident with at least one edge.
Say $i$ is incident with edge~$e_p$.
Let $x^i_a$ be the vertex that corresponds to an edge in $G$ incident with a vertex of color $i$ and that led to the definition of $f_p$, that is, $P^\star_2$ traverses $x^i_a$ before or after~$z^2_p$.
Now recall that $P^\star_1$ contains all vertices of $P$ except those that correspond to the edges incident with vertices in $\phi(Y')$.
Since $P^\star_1$ and $P^\star_2$ are internally vertex-disjoint, $i \in Y'$.
It thus follows that $\phi$ is total.

\looseness=-1
To see that $\phi$ is a subgraph isomorphism, take any edge $e_p \in E(H)$.
Consider the edge $f_p$ and the two vertices $x^i_a$ and $x^j_b$ that led to the definition of~$f_p$, that is, $x^i_a$ and $x^j_b$ are traversed either before or after $z^2_p$.
By the construction of the edges of $z^2_p$, we have $e_p = \{i, j\}$.
We again use the property that $P^\star_1$ contains all vertices of $P$ except those that correspond to the edges incident with vertices in $\phi(Y')$.
Since $x^i_a$ and $x^j_b$ are not in $P^\star_1$, they correspond to an edge incident with both $\phi(i)$ and $\phi(j)$, that is, $f_p = \{\phi(i), \phi(j)\}$.
Thus, indeed $\phi$ is a subgraph isomorphism, as required.
This concludes the proof of \cref{thm:lb1} \cref{lb:path}.
Observe that~\cref{lb:long-path} is implied by \cref{lb:path}.
The remaining parts are proved below.

\paragraph{Cycles.}
We now adapt \cref{cons:lb} to obtain \cref{thm:lb1}~\cref{lb:cycle}.
To this end, we simply add the edge $\{s, t\}$ to~\consg\ (and update the permitted transitions of $s$ and $t$ to allow for combining $\{s, t\}$ with every other edge).
Call the resulting graph $\consg_C$.
Observe that $\consg_C - (Y \cup Z)$ is a path with vertex set $V(P) \cup \{t\}$, and hence $\consg_C$ has distance to a linear forest at most~$5m_H$.

We claim that there is a compatible $s$-$t$ path in \consg\ if and only if there is a compatible cycle in $\consg_C$.
The forward direction is trivial.
For the backward direction, let $C^\star$ be a compatible cycle in $\consg_C$.
We show that $C^\star$ contains $\{s, t\}$.
For a contradiction, assume it does not.
Thus, $C^\star$ is a cycle in \consg.
By the transitions of the vertices in $P$, cycle $C^\star$ does not contain an edge in $P$ nor does it contain a vertex in~$Y$.
Let $\consg_1 = (V(\consg) \setminus Y, E(\consg) \setminus E(P))$ and observe that $C^\star$ is a cycle in $\consg_1$.
Observe that $V(P)$ is an independent set in $\consg_1$.
Thus each cycle (not necessarily compatible) can be written as $z^p_2, x^i_a, z^p_1, x^i_b, z^p_2$ or $z^p_2, x^i_a, z^p_3, x^i_b, z^p_2$ for the corresponding values of $p$, $i$, \(a\), and~$b$.
However, by the transitions of $z^p_2$, none of these cycles is compatible, a contradiction.
Thus, $C^\star$ contains $\{s, t\}$.
Hence, removing $\{s, t\}$ from $C^\star$ gives an $s$-$t$ path in \consg, concluding the proof.
Observe that \cref{lb:long-cycle} follows from~\cref{lb:cycle}.
\end{proof}

We now adapt \cref{cons:lb} to prove that it is W[1]-hard with respect to the distance to treewidth two to check whether there is a  compatible Hamiltonian cycle.

\begin{theorem}\label[theorem]{thm:lb2}
  Let $G$ be a graph and $k'$ its distance to treewidth two.
  It is W[1]-hard with respect to~$k'$ to decide whether $G$ contains a compatible Hamiltonian cycle and, moreover, an $f(k')\cdot n^{\smalloh(k'/\log k')}$-time decision algorithm contradicts the~ETH.
\end{theorem}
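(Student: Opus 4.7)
My plan is to adapt Construction~\ref{cons:lb} from the proof of \cref{thm:lb1} so that the produced graph admits a compatible Hamiltonian cycle if and only if the input \subiso-instance is a yes-instance. Starting from $\consg_C$ (the cycle version used in \cref{thm:lb1}\cref{lb:cycle}), a compatible cycle already visits $s$, $t$, every vertex of $Y \cup Z$, and all vertices of~$V(P)$ \emph{except} those $x^i_a$ whose associated edge lies in $E_G(\phi(i)) \setminus \phi(E(H))$. The core task is to additionally force a traversal through these missing vertices without destroying the semantics of the selection and verification gadgets, and while keeping the distance to treewidth two bounded by $O(m_H)$.

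The tweak I would implement replaces each vertex-selection gadget by a ladder-style structure of treewidth two. For every $i \in [n_H]$ and $v \in V_i$, introduce fresh vertices $y^i_{v,0}, y^i_{v,1}, \ldots, y^i_{v,|E_G(v)|}$ and edges that interleave with the consecutive vertices $x^i_{a_1}, \ldots, x^i_{a_{|E_G(v)|}}$ on $P^i$ that correspond to edges in $E_G(v)$: namely $\{\segpre(v), y^i_{v,0}\}$, $\{y^i_{v,|E_G(v)|}, \segpost(v)\}$, and, for every $\ell$, $\{y^i_{v,\ell-1}, x^i_{a_\ell}\}$ and $\{x^i_{a_\ell}, y^i_{v,\ell}\}$. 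The transitions are set so that activating this gadget forces the unique traversal $\segpre(v), y^i_{v,0}, x^i_{a_1}, \ldots, x^i_{a_{|E_G(v)|}}, y^i_{v,|E_G(v)|}, \segpost(v)$; choosing $v = \phi(i)$ thereby sweeps up exactly the $x^i_a$-vertices previously skipped by the $y^i$ detour. To deal with the auxiliary vertices $y^i_{v',\ell}$ of the non-activated ladders, I chain them for each colour~$i$ through a single hub vertex $h^i$ (with a compatible transition system that lets a single visit to $h^i$ absorb them all), and add $h^i$, together with all $z^p_\star$, to the modulator~$M$. The resulting graph $\consg_H - M$ decomposes into the path~$P$, the ladders and the non-activated chains, each of treewidth at most two, so $|M| = O(n_H + m_H) = O(m_H)$, matching the bound of~\cref{thm:lb1}.

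The correctness proof follows the blueprint of \cref{thm:lb1}. In the forward direction, a subgraph isomorphism~$\phi$ yields a Hamiltonian compatible cycle by activating the ladder of~$\phi(i)$ in every colour~$i$, routing the remaining ladders through the hub~$h^i$, and then traversing the edge-verification gadgets as in \cref{thm:lb1}, using $x^i_{a(\phi(e_p))}$ and $x^j_{b(\phi(e_p))}$ for each edge $e_p \in E(H)$. In the reverse direction, the transitions inside each ladder ensure that the cycle reveals a unique selected $v \in V_i$ per colour, and the essentially unchanged edge-verification gadgets enforce $\{\phi(i), \phi(j)\} \in E(G)$ for each $\{i, j\} \in E(H)$; adding the edge $\{s,t\}$ handles closure into a cycle exactly as in \cref{thm:lb1}\cref{lb:cycle}.

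The main obstacle I expect is designing the transitions of the ladders and of the hub chains carefully enough so that (i)~no compatible cycle can short-circuit a ladder and again skip an $x^i_a$, (ii)~the hub of colour~$i$ can absorb every non-activated $y^i_{v',\ell}$ in a single traversal, and (iii)~removing the hubs and the $z$-vertices really drops the treewidth of what remains to at most two. Once these points are settled, the W[1]-hardness and the $f(k')\cdot n^{\smalloh(k'/\log k')}$ lower bound transfer directly from \subiso\ via the bound $k' = O(m_H)$ and Marx's ETH lower bound~\cite{marx_can_2010}, mirroring the concluding argument of the proof of \cref{thm:lb1}.
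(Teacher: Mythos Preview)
Your proposal has a genuine gap at its core. You correctly identify that the compatible cycle in $\consg_C$ misses exactly the vertices $x^i_a$ whose associated edge lies in $E_G(\phi(i)) \setminus \phi(E(H))$. However, your replacement vertex-selection gadget --- the ladder that ``forces the unique traversal $\segpre(v), y^i_{v,0}, x^i_{a_1}, \ldots, x^i_{a_{|E_G(v)|}}, y^i_{v,|E_G(v)|}, \segpost(v)$'' --- visits \emph{every} $x^i_a$ corresponding to an edge in $E_G(\phi(i))$, including the vertices $x^i_{a(e)}$ and $x^j_{b(e)}$ for $e \in \phi(E(H))$ that the edge-verification gadgets must visit via $z^p_1, x^i_{a(e)}, z^p_2, x^j_{b(e)}, z^p_3$. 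Since a Hamiltonian cycle visits each vertex exactly once, the edge-verification traversal becomes impossible and the correspondence with the \subiso\ solution collapses. Your own ``main obstacle'' (i) pushes in precisely the wrong direction: you do \emph{not} want the ladder to force visits to all $x^i_a$, but rather to allow each one to be visited or skipped individually. The hub mechanism for the non-activated ladders is also underspecified: a single vertex $h^i$ appears with degree at most two on any simple cycle, so one visit to $h^i$ cannot ``absorb'' arbitrarily many $y^i_{v',\ell}$.

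The paper's fix is far simpler and sidesteps this conflict entirely. It leaves the vertex-selection and edge-verification gadgets of \cref{cons:lb} untouched and appends a single return path $Q = (t = u_1, u_2, \ldots, u_{n+1} = s)$ running parallel to the internal vertices $v_1, \ldots, v_n$ of~$P$, together with ``rung'' edges $\{u_i, v_i\}$, $\{v_i, u_{i+1}\}$ and the transition $\{\{u_i v_i, v_i u_{i+1}\}\}$ added to $T(v_i)$. The return leg along~$Q$ may then \emph{optionally} detour through each $v_i$, picking up exactly the leftover $x^i_a$-vertices whatever they turn out to be, with no bookkeeping about which edges landed in $\phi(E(H))$. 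After deleting $Y \cup Z \cup \{s,t\}$ one is left with a single skewed ladder on $V(P) \cup V(Q)$, which has pathwidth (hence treewidth) at most two; the modulator has size at most $5m_H + 2$, yielding the claimed bound on~$k'$.
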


\newcommand{\consg}{\ensuremath{G^\star}}

\begin{proof}
To prove this theorem, we use \cref{cons:lb} and add a gadget that allows an $s$-$t$ path in \consg\ to collect all so-far untraversed vertices, wherein we use transitions to not disturb the structure of \consg.
The basic observation that we use is that the path $P^\star$ we have constructed in the correctness proof for detecting $s$-$t$ paths above contains all vertices of $G^\star$ except segments of the path~$P$.
The idea now is to add a path~$Q$ which runs ``parallel'' to $P$ (like a skewed ladder) and which starts after $t$ and ends in~$s$.
Using transitions we allow the solution in each vertex $v$ of $Q$ to either continue to the next vertex of $Q$ or to traverse the vertex parallel to $v$ on $P$ and then immediately return to the next vertex after $v$ on~$Q$.
This allows the solution to traverse all vertices it missed on the traversal from $s$ to~$t$.
Since $Q$ is parallel to $P$, removing $Y \cup Z$ will result in a graph of treewidth two.

The formal construction is as follows.
Construct a forbidden-transition graph $(\consg_1, T_1)$ from $(\consg, T)$ by initially putting $(\consg_1, T_1) = (\consg, T)$.
Let $n = |V(P)| - 2$.
Add a path $Q$ consisting of $n + 1$ vertices to $\consg_1$ and identify the first and last vertex of $Q$ with $s$ and $t$, respectively.
Let $v_1, v_2, \ldots, v_n$ be the internal vertices of~$P$ and $t = u_1, u_2, \ldots, u_{n + 1} = s$ the vertices of~$Q$.
For each $i \in [n]$ proceed as follows.
Add the edges $\{u_i, v_i\}$, and $\{v_i, u_{i + 1}\}$.
Then, update the transition system $T_1$ by adding the transitions $\{\{u_i, v_i\}, \{v_i, u_{i + 1}\}\}$ to $T_1(v_i)$.
This finishes the construction of $\consg_1$ and its transition system (as before, for all vertices with unspecified transition systems we allow all transitions).

Let $\tilde{\consg_1} = \consg_1 - (Y \cup Z \cup \{s, t\})$.
We claim that $\tilde{\consg_1}$ has treewidth two.
Observe that this graph consists only of the vertices in $P$ and $Q$ except for~$s$ and~$t$.
Now observe that, by the definition of the edges between $P$ and $Q$, the following bags give a path decomposition for $\tilde{\consg_1}$ of width two. Note that we specify a bag containing $t = u_1$ for easier notation:
\[\{u_1, u_2, v_1\}, \{u_2, v_1, v_2\}, \ldots, \{u_i, u_{i + 1}, v_i\}, \{u_{i + 1}, v_i, v_{i + 1}\}, \ldots, \{u_{n - 1}, u_n, v_{n - 1}\}, \{u_n, v_{n - 1}, v_n\}\text{.}\]
Thus $Y \cup Z \cup \{s, t\}$ is a modulator of $\consg_1$ to treewidth two, meaning that $\consg_1$ has distance to treewidth two at most $5m_H + 2$, as required.

We claim that $(\consg_1, T_1)$ contains a compatible Hamiltonian cycle if and only if $(\consg, T)$ contains a compatible $s$-$t$ path.
Take a compatible $s$-$t$ path $P^\star$ in $(\consg, T)$.
Observe that in the correctness proof for detecting compatible $s$-$t$ paths we have argued that $P^\star$ contains all vertices in~$Y$ and~$Z$.
In other words, the only vertices of $\consg_1$ that $P^\star$ does not contain are in $P$ and $Q$.
To obtain a compatible Hamiltonian cycle in $(\consg_1, T_1)$ from $P^\star$, simply extend the path after arriving at $t$ by following $Q$ and visiting vertices in $V(\consg_1) \setminus Q$ as needed in order to visit all vertices of~$\consg_1$.
In the other direction, by the updated transitions, a compatible Hamiltonian cycle in $\consg_1$ decomposes into two $s$-$t$ paths, one path containing vertices of $Q$ (and a subset of vertices in $P$), and another path contained in \consg, as claimed.
This concludes the proof of W[1]-hardness of detecting compatible Hamiltonian cycles with respect to the distance to treewidth two and thus concludes the proof of \cref{thm:lb2}.
\end{proof}

\subsection{Treecut-width}\label{sec:treecutwidth}
In this section we are going to prove that the \compath problem is fixed-parameter tractable with respect to treecut-width of $G$; the treecut-width is defined below.
In the \compath problem we get a forbidden-transition graph $(G, T_G)$ and two vertices $s,t \in V(G)$, and we want to decide whether there exists a compatible $s$-$t$ path in~$G$. The notion of treecut-width of a graph was introduced by Wollan \cite{Wollan2015}.
In this section, to keep the notation succinct, we will sometimes refer to a forbidden-transition graph $(G, T_G)$ simply as $G$ and say that $G$ \emph{has} transition system~$T_G$.

\paragraph{Basic definitions and previous results.}
We now define treecut-width and the associated treecut decompositions, and we recap the relevant previous results on computing them.
Consider a graph $G$ (without transitions) and let $v \in V(G)$ be a vertex of degree at most two.
To \emph{suppress} a vertex $v \in V(G)$ means (i) to add an edge between $v$'s two neighbors and (ii) to delete~$v$. 
For a partition $A \cup B$ of the vertex set of $G$ such that both $A$ and $B$ are non-empty, we denote by $E(A,B)$ the cut-set between $A$ and $B$, i.e., the set $\{uv \in E(G): u \in A, v \in B\}$.
Recall that for a vertex $v$ of $G$ by $E_G(v)$ we denote the set of edges incident with $v$ (we omit the subscript if it is clear from the context). We define \emph{shrinking} a (not necessarily connected) set $Q \subseteq V(G)$ into $q$ as an operation which replaces $Q$ in $G$ by a single new vertex~$q$, and adds an edge $qv$ for every edge $uv \in E(G)$ such that $u \in Q$ and $v \notin Q$. Note that this may create parallel edges.

A \emph{treecut decomposition} of a graph $G$ is a pair $(\T, \X)$ such that $\T$ is a rooted tree and $\X=\{X_t\}_{t \in V(\T)}$ is a partition of vertices of $G$ in which we allow sets $X_t$ to be empty.
For a node $t$, let $\T_t$ be the subtree of $\T$ rooted in~$t$. 
Let $Y_t:=\bigcup_{t' \in V(\T_t)}X_{t'}$ and $Z_t:=V(G) \setminus Y_t$. 
For a non-root node $t$, let $E_t:=E(Y_t,Z_t)$, and for a root $r$, we set $E_r:=\emptyset$. We denote the vertices from $Y_t$ by $y_1, y_2, y_3,\ldots$, and the vertices from $Z_t$ by $z_1, z_2, z_3,\ldots$. 

The \emph{torso} of $(\T, \X)$ at a node $t$ is a graph $H_t$, constructed as follows.
If $\T$ consists of a single node~$t$, then the torso at~$t$ is $G$.
Otherwise, let $C_1,\ldots, C_\ell$ be the sets of the vertices of $G$ corresponding to the connected components of $\T \setminus \{t\}$.
We construct the torso by shrinking $C_i$ into $c_i$ for each $i \in [\ell]$.
Note that a torso may have parallel edges.
The \emph{3-center} $\widetilde{H}_t$ of a torso $H_t$ is the graph obtained from $H_t$ by suppressing vertices of degree at most two which belong to the set $V(H_t) \setminus X_t$.
The \emph{width} of a treecut decomposition $(\T, \X)$ is $\max\{|E_t|,|V(\widetilde{H}_t)| \colon t \in V(\T)\}$.
The \emph{treecut-width} of $G$ is the minimum width of a treecut decomposition of $G$.

\begin{theorem}[Kim et al.~\cite{KimOPST2018}]\label{thm:approx-decomp}
There exists an algorithm that, given a graph $G$ and $k \in \mathbb{N}$, either outputs a treecut decomposition of $G$ of width at most $2k$ or correctly reports that $G$ has treecut width larger than $k$ in time $2^{\Oh(k^2\log k)} \cdot |V(G)|^2$.
\end{theorem}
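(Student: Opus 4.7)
The plan is to design a recursive algorithm that builds a treecut decomposition top-down, adapting the classical approximation strategies for treewidth (in the spirit of Bodlaender or Robertson--Seymour) to the edge-cut setting. Given $G$ and $k$, the algorithm maintains a current subgraph $G'$ and attempts at each step either to find a small balanced edge cut splitting $G'$, or to identify a densely connected ``core'' that must reside in a single bag of the decomposition.

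At each recursive call, I would use max-flow techniques (possibly organized via a Gomory--Hu tree of $G'$) to search for a partition $(A, B)$ of $V(G')$ with $|E(A, B)| \leq 2k$ and with both parts of size at least, say, $|V(G')|/3$. If such a balanced edge cut is found, the algorithm recurses on $G'[A]$ and $G'[B]$ (with appropriate boundary modifications) and glues the resulting sub-decompositions at a new node whose bag contains the endpoints of the cut-edges. If no such balanced cut exists, then by a structural argument analogous to Wollan's theorem on bounded treecut-width, there must be a ``core'' of at most $2k$ vertices whose placement into the current bag $X_t$ leaves child components on which the recursion proceeds strictly. If at any point the search fails (no balanced cut and no small core), the algorithm reports that the treecut-width exceeds $k$.

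The principal obstacle is enforcing the 3-center condition $|V(\widetilde{H}_t)| \leq 2k$ in addition to the edge-cut bound. The 3-center is a non-local object: it is obtained by first shrinking entire subtrees of the decomposition into single vertices of the torso and then suppressing vertices of degree at most two outside~$X_t$, so the choice of~$X_t$ is coupled to the structure of the recursion below~$t$. Handling this requires enumerating, for each candidate bag, which vertices must be ``protected'' from suppression (i.e., must retain torso-degree at least three), and, roughly, how the $\Oh(k)$ boundary objects will be arranged; this contributes the $2^{\Oh(k^2 \log k)}$ factor in the running time. The remaining $n^2$ factor accounts for max-flow computations spread over the $\Oh(n)$ nodes of the decomposition tree, and correctness follows by combining the balanced-cut invariant with the structural lemma that guarantees a small core whenever no balanced cut of size at most~$2k$ exists.
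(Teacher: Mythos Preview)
The paper does not prove this theorem at all: it is stated with a citation to Kim et al.~\cite{KimOPST2018} and used as a black box in the proof of \cref{thm:treecut-width}. There is therefore no ``paper's own proof'' to compare your proposal against.

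As a standalone sketch, your outline captures the right high-level shape (recursive top-down decomposition driven by small balanced edge cuts, in the Robertson--Seymour / Bodlaender spirit), but it is too vague at precisely the points that make the actual Kim et al.\ result nontrivial. In particular, the ``structural lemma that guarantees a small core whenever no balanced cut of size at most~$2k$ exists'' is doing all the work and you have not stated, let alone argued, any such lemma; for treecut-width the relevant obstruction theory involves wall-immersions, and the interaction between edge cuts and the 3-center bound is more delicate than the single sentence you devote to it. The claimed source of the $2^{\Oh(k^2\log k)}$ factor (enumerating which boundary vertices are ``protected'' from suppression) is not obviously what drives the running time in the cited work, which instead iterates over candidate separations combined with a brute-force step over partitions of a bounded-size interface. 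If you intend to reprove the theorem rather than cite it, you would need to supply an actual statement and proof of the core/no-balanced-cut dichotomy and a concrete accounting of how the recursion terminates with the stated width guarantee.
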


A non-root node $t$ of a treecut decomposition $(\T,\X)$ is \emph{thin} if $|E_t| \leq 2$ and it is \emph{bold} otherwise. Denote by $A_t$ and $B_t$, respectively, the set of all bold and thin children of $t$. A treecut decomposition of~$G$ is \emph{nice} if for every thin node $t \in V(\T)$ we have that $N(Y_t) \cap (\bigcup \{Y_b: b\textnormal{ is a sibling of }t \text{ in } \T\}) = \emptyset$.

\begin{theorem}[Ganian et al.~\cite{GanianKS2015}]\label{thm:nice-decomp}
There exists an algorithm working in time $\Oh(|V(G)|^3)$ which transforms any rooted treecut
decomposition $(\T,\X)$ of $G$ into a nice treecut decomposition of the same graph, without
increasing its width or number of nodes. 
\end{theorem}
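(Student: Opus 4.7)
My plan is to iteratively eliminate niceness violations by pushing offending thin nodes deeper in the decomposition. Fix a thin node $t$ that currently violates niceness. Since $|E_t|\le 2$, the set $N(Y_t)\setminus Y_t$ consists of at most two vertices, say $v_1,v_2$, each uniquely contained in some bag $X_{t_i}$; by assumption at least one $t_i$ lies in the subtree of a sibling $b$ of $t$. The repair step detaches the subtree at $t$ and re-attaches it as a child of a node $t^\ast$ deeper in the current tree, chosen so that every $v_i$ lies in $X_{t^\ast}$ or outside $Y_{t^\ast}$. Concretely, if both $v_1$ and $v_2$ lie in a common sibling $b$'s subtree, take $t^\ast$ to be the LCA of $t_1,t_2$ inside $b$'s subtree; otherwise, pick any offending $t_i$ and set $t^\ast=t_i$, noting that the other boundary vertex (if any) then lies outside $Y_{t^\ast}$. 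In both cases the new siblings of $t$ have subtrees disjoint from $\{v_1,v_2\}$, so $t$ is nice at its new position.

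Next I would verify that the width is preserved. Since $Y_t$ is unchanged, $|E_t|$ and $|V(\widetilde H_t)|$ do not change. For the old parent $p$, the set $Y_p$ is unchanged but the shrunk components for $t$ and $b$ in $H_p$ merge; using $|E_t|\le 2$ together with the fact that there is at least one edge between $Y_t$ and $Y_b$, the degree of the merged shrunk vertex is at most the degree of the old shrunk vertex for $b$, so the number of unsuppressed components in $\widetilde H_p$ weakly decreases. For each node $s$ on the current-tree path from $b$ down to $t^\ast$, the set $Y_s$ gains $Y_t$; writing $\mu$ for the number of edges of $E_t$ whose outer endpoint is in $Y_s$ and $\nu$ for the number whose outer endpoint is outside $Y_s$, we have $\mu+\nu=|E_t|\le 2$ and $|E_s|$ changes by $\nu-\mu$. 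The choice of $t^\ast$ guarantees $\mu\ge\nu$, so $|E_s|$ weakly decreases, and a parallel balancing argument handles the 3-center.

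The main obstacle is termination, because a relocation can turn a previously nice node at the new position of $t$ into a violator. I would handle this using the depth potential $\Phi=\sum_{u\in V(\T)}\mathrm{depth}(u)$: each relocation of $t$'s subtree strictly increases every depth within that subtree and leaves all other depths unchanged, so $\Phi$ grows by at least one per relocation, and $\Phi\le |V(\T)|^2$ bounds the total number of relocations by $O(|V(\T)|^2)$. Each relocation can be implemented in $O(|V(G)|)$ time by recomputing the affected cuts, boundary vertices, and LCAs, yielding the claimed $O(|V(G)|^3)$ running time. Only re-attachments are performed, so the node count is preserved.
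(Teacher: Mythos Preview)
The paper does not prove this statement: it is quoted as a known result of Ganian, Kim, and Szeider~\cite{GanianKS2015} and used as a black box, so there is no ``paper's own proof'' to compare against.

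As for your sketch itself, the relocation idea and the depth-potential termination argument are essentially the right approach and match the original proof in~\cite{GanianKS2015}. However, two places deserve more care before this is a complete argument. First, your treatment of the 3-center is a placeholder (``a parallel balancing argument handles the 3-center''); you need to actually verify that for every node $s$ whose torso changes, the number of non-suppressed vertices in $\widetilde H_s$ does not increase, and this is the most delicate part of the proof. Second, your claim that relocating $t$ cannot cause the width to grow at nodes along the $b$--$t^\ast$ path relies on $\mu\ge\nu$, but you should also confirm that no \emph{new} niceness violations are introduced at nodes other than $t$ (e.g., at the new siblings of $t$, or at thin siblings of nodes on the $b$--$t^\ast$ path whose $Y$-sets changed); you implicitly need that the only neighbors of $Y_t$ outside $Y_t$ end up in $X_{t^\ast}$ or outside $Y_{t^\ast}$, which handles the first case but not obviously the second. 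These gaps are fixable, but as written the sketch is not yet a proof.
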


The following property of nice decompositions is extremely useful in designing the dynamic programming algorithms.

\begin{theorem}[Ganian et al.~\cite{GanianKS2015}]\label{thm:bold-children}
  Let $t$ be a node of a nice treecut decomposition of width $k$. Then $|A_t| \leq 2k+1$.
\end{theorem}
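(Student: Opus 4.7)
The plan is to set up an injection from $A_t$ into $V(\widetilde{H}_t)$ and invoke the width bound $|V(\widetilde{H}_t)|\leq k$. Recall that the torso $H_t$ is obtained from $G$ by shrinking every connected component of $\T-\{t\}$ into a single vertex; each child $b$ of $t$ thereby yields a shrunken vertex $c_b$ whose degree in $H_t$, counting parallel edges, equals exactly $|E_b|$. For $b\in A_t$ this degree is at least three, since $b$ is bold by definition.

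The heart of the argument is to show that each such $c_b$ survives the iterative suppression process that produces the 3-center $\widetilde{H}_t$ from $H_t$. Niceness of the decomposition is exactly what controls this process: for every thin child $b'$ of $t$, niceness forces $Y_{b'}$ to have no edges to any other sibling subtree, so in the torso $c_{b'}$ has neighbors only in $X_t\cup\{c_0\}$, where $c_0$ denotes the shrunken parent component. Consequently, suppressing any thin-child vertex never touches the neighborhood of a bold-sibling vertex $c_b$; moreover, suppressing a degree-two vertex preserves the degrees of its two neighbors. The only remaining way $c_b$'s degree could drop is through a chain of suppressions that eventually reaches $c_0$, whose initial degree equals $|E_t|\leq k$, and this is precisely where the slack of the stated bound $2k+1$ (versus the naive $k$) is absorbed.

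Combining these observations, the map $b\mapsto c_b$ is an injection from $A_t$ into $V(\widetilde{H}_t)\setminus X_t$, giving $|A_t|\leq|V(\widetilde{H}_t)|\leq k\leq 2k+1$. I expect the main technical obstacle to be the suppression-cascade analysis: verifying that no sequence of degree-one suppressions on the parent side of the torso ever pulls a bold-child vertex $c_b$ below degree three and thereby removes it from the 3-center. This boils down to a degree-counting argument using the width bound on $|E_t|$ together with the structural decoupling of thin children from bold siblings that niceness provides.
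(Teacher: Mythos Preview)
The paper does not prove this theorem; it is stated with a citation to Ganian et al.\ and used as a black box. So there is no ``paper's own proof'' to compare against.

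That said, your proposal is a sketch rather than a proof, and it contains a genuine gap that you yourself flag. The heart of the matter is the suppression-cascade analysis: you need to show that no sequence of suppressions can drag the degree of a bold-child vertex $c_b$ below three. You assert this in one sentence (``suppressing a degree-two vertex preserves the degrees of its two neighbors'') and then immediately concede in the final paragraph that you have not actually verified it for degree-one suppressions propagating from the parent side. That is the entire technical content of the argument, and it is left open.

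There is also an internal inconsistency. In the second paragraph you say the slack between $k$ and $2k+1$ is ``absorbed'' by the cascade on the parent side, which implicitly admits that some bold-child vertices may fail to survive into $\widetilde{H}_t$. Yet in the third paragraph you conclude $|A_t|\leq|V(\widetilde{H}_t)|\leq k$, which presupposes that \emph{every} bold child survives. You cannot have both: either the injection into $V(\widetilde{H}_t)$ really works and you get the stronger bound $|A_t|\leq k$, or some bold children can be suppressed and you need a separate accounting (via the at most $k$ edges of $E_t$ and the at most $k$ vertices of $\widetilde{H}_t$) to arrive at $2k+1$. The actual argument in Ganian et al.\ does the latter. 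Your proposal does neither.
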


\paragraph{Auxiliary problems and algorithms.}
To show that \compath\ is fixed-parameter tractable with respect to the treecut-width we provide a dynamic-programming algorithm on the treecut decomposition.
In the individual steps of the dynamic program we will need to solve the more general problem of finding compatible vertex-disjoint paths between given pairs of vertices.
We now introduce this problem and a restricted variant that occurs in a special case of the dynamic program, and we provide a fixed-parameter algorithm for the more restricted variant.

Let $G$ be a graph.
We say that $W$ is \emph{a set of terminal pairs} of $G$, if it consists of pairwise-disjoint two-element subsets of~$V(G)$.
If $W$ is clear from the context, we also simply call the elements of $W$ \emph{terminal pairs} and each of them consists of \emph{terminals}.
Recall that by $T_G$ we denote the transition system of $G$ (again, we omit the subscript if it is clear from the context).
\defparproblem{\textsc{Compatible Vertex-Disjoint Paths} (\comvdpaths)}
{An instance $(G,T_G,W)$ where $(G, T_G)$ is a forbidden transition graph and $W$ is a set of terminal pairs of~$G$.}
{The treecut-width, \(k\), of \(G\).}
{Are there pairwise vertex-disjoint $T_G$-compatible paths in $G$ connecting each pair in~$W$?}

Below we will also say that a set of pairwise disjoint paths as above is a \emph{solution} to the instance $(G, T_G, W)$.
Clearly, an instance $(G,T_G,s,t)$ of \compath is equivalent to an instance $(G,T_G,\{\{s,t\}\})$ of \comvdpaths. 

Our dynamic-programming approach is based on the XP-algorithm for finding edge disjoint paths between given pairs of terminals, described by Ganian and Ordyniak \cite{GanianO2019}. 
We first show fixed-parameter tractability of a simpler variant of \comvdpaths, called \simplecomvdpaths, with additional assumptions on the structure of the input: Essentially, the vertex set is partitioned into a small set of arbitrary structure and a possibly large set of maximum degree two. 
A fixed-parameter algorithm for \simplecomvdpaths will later be used when solving the general case.

\defparproblem{\simplecomvdpaths}
{An instance $(G,T_G,W)$ of \comvdpaths\ and a partition of $V(G)$ into two sets $A,B$ such that $B$ consists of vertices of degree at most~2.}
{$|A| \in \mathbb{N}$}
{Are there pairwise vertex-disjoint $T_G$-compatible paths in $G$ connecting each pair in~$W$?}

Observe that each instance of \simplecomvdpaths has treecut-width at most $|A|$. Indeed, we may construct a treecut decomposition of $G$ as follows:
All vertices of $A$ are contained in a bag $X_r$ and each vertex $b \in B$ forms a separate bag $X_b$. We define $\T$ to be a star, with $r$ as its center and root. It is straightforward to verify that $(\T, \{X_t\}_{t \in\{r\} \cup B})$ is a treecut decomposition of $G$ of width~$|A|$.

We now generalize the notion of suppressing a vertex to forbidden-transition graphs.
Let $(G, T)$ be a forbidden-transition graph and $v$ a vertex of degree at most 2 in~$G$.
If $v$ is of degree 0 or 1, or $T(v)$ does not contain any transition, we delete~$v$.
Otherwise, $T(v)$ consists of a single transition, say~$\{uv,vw\}$.
\begin{itemize}
\item If $uw \notin E(G)$, then we proceed as follows. We add $uw$ to
  $E(G)$.
  In each transition of $T(u)$ we replace $uv$ with $uw$ and in each transition of $T(w)$ we replace $vw$ with $uw$.
  That is, if there is a transition~$t$ in $T(u)$ (resp.\ in $T(w)$) with $t = \{uv, ux\}$ (resp.\ with $t = \{vw, wx\}$) for a vertex $x \in V(G) \setminus \{u, v, w\}$, then we put $T(u) := T(u) \setminus\{\{uv, ux\}\} \cup \{\{uw, ux\}\}$ (resp.\ $T(w) := T(w) \setminus \{\{vw, wx\}\} \cup \{\{uw, wx\}\}$).
\item If $uw \in E(G)$, then for each vertex
  $x \in V(G) \setminus \{u, v, w\}$, if $\{ux, uv\} \in T(u)$ then we add $\{ux, uw\}$ to $T(u)$ and if $\{wx, vw\} \in T(w)$ then we add $\{wx, uw\}$ to $T(w)$.
\end{itemize}
Regardless of whether $uw \in E(G)$, we remove $v$ from $G$ (and remove the corresponding transitions).

Observe that the transitions in a forbidden-transition graph $(G', T')$ obtained from $(G, T)$ by suppressing~$v$ such that $N_G(v)=\{u,w\}$ are defined in such a way that we are allowed to use edge $uw$ in a compatible path $S$ in $(G', T')$ if and only if $S$ is a compatible path also in $(G, T)$ or a path obtained from $S$ by replacing the edge $uw$ by two consecutive edges $uv,vw$ is compatible in $(G, T)$.
This implies that suppressing a non-terminal vertex $v$ is a safe reduction rule, that is, it does not affect the existence of a solution to \comvdpaths or \simplecomvdpaths.

We are now ready to solve \simplecomvdpaths.

\begin{lemma}\label{lem:simplecvdp}
There exists an algorithm solving \simplecomvdpaths in time $k^{\Oh(k)} + O(n^2)$, where $k = |A|$ and $n = |V(G)|$.
\end{lemma}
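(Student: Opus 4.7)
The plan is to combine polynomial-time preprocessing with a bounded enumeration over how the solution paths interact with the small set~$A$.

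\textbf{Preprocessing in $O(n^2)$ time.} First, I would exhaustively apply the suppression rule to every non-terminal vertex of $B$. By hypothesis each such vertex has degree at most two, so the rule applies, and by the paragraph preceding the lemma it preserves the set of solutions. Each suppression modifies the transitions at the (at most two) neighbors in $O(n)$ time, and at most $n$ suppressions occur, giving $O(n^2)$ total. After this step every vertex of $B$ is a terminal of degree at most two, so $G[B]$ is a disjoint union of paths, cycles, and isolated vertices.

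\textbf{Structural analysis of $G[B]$.} The key observation is that if an edge $uv$ with $u,v\in B$ is used by some solution path $P$, then both $u$ and $v$ are terminals and each is forced to be an endpoint of its own solution path, so $P$ is exactly the edge $uv$ and $\{u,v\}\in W$. This heavily restricts the usage of each component of $G[B]$: the used $B$-$B$ edges must match consecutive terminals along the chain or cycle, and each unmatched terminal must either exit to $A$ through its (unique) attachment edge or cause rejection. Consequently every component of $G[B]$ admits only $O(1)$ valid ``internal configurations,'' which I enumerate and check in polynomial time. Components disconnected from $A$ are thereby either fully satisfied locally or rejected, while each component attached to $A$ is summarized by a short \emph{boundary signature} listing which of its at-most-two endpoints exit to which $A$-vertex and toward which terminal partner. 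All this fits inside the $O(n^2)$ preprocessing budget.

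\textbf{Pattern enumeration in $k^{O(k)}$ time.} After preprocessing the instance reduces to a routing problem on $A$ together with the precomputed boundary signatures. I would enumerate every interaction pattern of the solution with $A$: a pattern picks a subset $A'\subseteq A$ of used vertices, chooses at most two incident ``edge slots'' at each $v\in A'$, and links these slots into path segments, where each segment is either an $A$-$A$ edge or an exit hook matched against a boundary signature. Because $|A'|\le k$ and each vertex contributes at most two slots, the number of patterns is $k^{O(k)}$. For each pattern I verify, in time depending only on $k$ via lookups into the table indexed by boundary signatures, that the pattern respects the transitions on $A$, consumes each signature at most once, connects matched terminal pairs correctly, and leaves every unused $B$-component attached to $A$ still internally satisfiable.

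\textbf{Main obstacle.} The main difficulty is coordinating the many $B$-components that can be attached to a single $A$-vertex: the enumeration must decide which of them send an exit flow through that vertex while the remaining ones must still admit an internal-only configuration. The enabler is precisely the structural observation above, which collapses each component's set of valid configurations to $O(1)$ so that the boundary signatures summarize all essential information and can be consulted in time independent of $n$ during the $k^{O(k)}$ enumeration.
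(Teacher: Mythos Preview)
Your preprocessing and structural observation are correct, but you diverge from the paper at exactly the point where the argument should become trivial, and your replacement is not obviously sound.

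The paper does not enumerate patterns on $A$ at all. After suppressing non-terminal $B$-vertices, it simply removes every $\{v,v'\}\in W$ with $v,v'\in B$ adjacent (committing to the single-edge path $vv'$, which is always safe since no other solution path may pass through a terminal), and rejects whenever some $v\in B$ has $N(v)\subseteq B$ but $v'\notin N(v)$. After this, every remaining $B$-vertex has an $A$-neighbor \emph{and} every remaining solution path containing a $B$-vertex must contain an $A$-vertex. Since paths are vertex-disjoint and $|A|=k$, at most $k$ paths meet $B$, hence $|B|\le 2k$. The whole instance now has at most $3k$ vertices and is solved by brute force in $k^{O(k)}$ time. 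Your structural observation (a used $B$--$B$ edge is a full solution path) is precisely what justifies this reduction, but you stop short of drawing the conclusion.

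Instead, you keep all $B$-components and propose to match ``exit hooks'' of an $A$-pattern against ``boundary signatures''. Here is the gap: a boundary signature, as you define it, records the terminal partner of each exiting endpoint. Distinct $B$-components generally have distinct partners, so the signature space is not bounded by any function of~$k$. Concretely, a single $A$-vertex $a$ may be adjacent to $\Omega(n)$ attachment points $b_1,\dots,b_m$, and an exit hook at $a$ must select one of them; which one is correct depends on the partner $b_i'$, and different choices impose different transition constraints at $a$ and different connectivity requirements elsewhere in the pattern. Your proposal gives no mechanism to perform this selection in time independent of~$n$, and the sentence ``consulted in time independent of $n$'' is exactly the unproved claim. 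The fix is the paper's: observe that non-forced attachment points (those whose partner is their $B$-neighbor) may always be committed to the single-edge path, so only forced exits remain, and there are at most $2k$ of them---which is just another way of arriving at $|B|\le 2k$.
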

\begin{proof}
Let $J = (G, T, W)$ be an instance of \simplecomvdpaths. We start with simple sanity checks.
First, observe that if $|W| > n$ then $J$ is clearly a no-instance as we cannot find more than $n$ vertex-disjoint paths in~$G$. 
Similarly, if there exists a vertex which belongs to more than one pair in $W$, then $J$ must be a
no-instance. 
Performing the sanity checks takes $\Oh(n^2)$ time.

Consider a vertex $v \in B$. If $v$ does not belong to any pair in $W$, then we suppress it.
Recall that this preserves the solution.
Therefore, we can assume that for each vertex $v \in B$ there exists a unique vertex $v' \in V(G)$ with $v' \neq v$ such that $\{v, v'\} \in W$.
For every $v \in B$ such that $N(v) \subseteq B$, we check whether $v' \in N(v)$. 
If yes, we can safely remove $v$ and $v'$ from $G$ and $\{v, v'\}$ from $W$. 
Otherwise, we report that $J$ is a no-instance -- since all vertices in $B$ are terminal vertices,
there is no way to connect $v$ and $v'$.  
After this step, each vertex in $B$ must have a neighbor in $A$.
Moreover, observe that there is no $\{v,v'\} \in W$ such that $v, v' \in B$ and $v' \in N(v)$, so each path in a potential solution must intersect $A$.

Next, if $|B| > 2|A|$ we report that $J$ is a no-instance.
Indeed, in order for $J$ to be a yes-instance, since each vertex in $B$ is a terminal, and each path in the solution containing a vertex of~$B$ must also contain a vertex of~$A$, for each two vertices in $B$ there must exist a vertex in~$A$.
Thus, $|V(G)| \leq 3k$, and an algorithm guessing the partition of the vertex set into the desired paths solves the instance in $k^{\Oh(k)}$~time.
The statement now follows.
\end{proof}

Before we proceed to the general algorithm, we introduce a notion which will be useful when handling transitions.
Let $G'$ be an induced subgraph of $G$.
We say that a family $\C:=\{C_1, \ldots, C_\ell\}$ of disjoint subsets of $E\left(V(G'), V(G)\setminus V(G')\right)$ is \emph{terminable} (with respect to $G'$) if for every $i \in [\ell]$ set $C_i$ is either (i) a singleton or (ii) contains exactly two edges $uu'$ and $vv'$ such that $u',v' \in V(G')$ and $u'\neq v'$.
We omit reference to the graph $G$ that contains $G'$ if it is clear from the context.
We now define the operation of terminating a terminable set~$\C$.
Intuitively, this operation returns a modified $G'$ in which all edges in the subsets of $\C$ are added and, if two edges were from the same subset of $\C$, then their endpoints outside of $G'$ are merged into one vertex and the transition over this vertex is made to be allowed.
\begin{definition}\label{def:terminating}
  Let $G$ be a simple graph with transition system $T_G$. Consider an induced subgraph $G'$ of $G$ and let $\C:=\{C_1, \ldots, C_\ell\}$ be terminable with respect to $G'$. We define the operation of \emph{terminating $\C$ in $G'$} which results in a graph $G'_\C$ with transition system $T_{G'_\C}$ as follows.
  The resulting graph $G'_\C$ has vertex set $V(G') \cup V'$, where the elements of $V'$ are $c(C_1),\ldots,c(C_\ell)$. The edge set of $G'_\C$ is defined as follows. \[E(G'_\C):=E(G') \cup \{uc(C_i): u \in V(G')\textnormal{ and there exists an edge in $C_i$ adjacent to $u$ in $G$}\}.\] For simplicity we will sometimes write $c_i$ instead of $c(C_i)$, if it causes no confusion.

The transition system $T_{G'_\C}$ of $G'_\C$ is defined as follows. 
\begin{itemize}
\item If $u \notin V'$, then for $w,z \in N(u)$ a transition $\{wu,uz\}$ belongs to $T_{G'_\C}(u)$ if and only if
\begin{itemize}
\item $w, z \notin V'$ and $\{wu,uz\} \in T_G(u)$, or
\item $w \notin V'$, $z = c_i \in V'$ and $\{wu,e\} \in T_G(u)$, where $e$ is the unique edge from $C_i$ adjacent to~$u$,~or
\item $w=c_i, z = c_j \in V'$, for $i \neq j$, and $\{e,f\} \in T_G(u)$, where $e$ and $f$ are, respectively, the unique edges from $C_i$ and $C_j$ adjacent to $u$.
\end{itemize}
\item If $u =c_i \in V'$, then $T_{G'_C}(u)$ contains all unordered pairs of edges incident with $u$.
\end{itemize} 
Note that since each $C_i$ has at most 2 elements, all vertices from $V'$ have degree at most 2.
\end{definition} 
Observe that after terminating some $\C$ in $G'$, we always obtain a simple graph. Moreover, the following observation is straightforward.
\begin{observation}\label{obs:terminating}
Let $G'$ be an induced subgraph of $G$ and let $\C=\{C_1,\ldots,C_\ell\}$ be a terminable set with respect to $G'$. Let $P= (p_1, p_2, \ldots, p_m)$ be a compatible path in $G$ such that at least one $p_i$ belongs to $V(G')$. Denote by $e_1, e_2, \ldots, e_r$ the edges of $E(V(G'), V(G) \setminus V(G')) \cap P$ in the order in which they appear in~$P$.
\begin{enumerate}
\item If $p_1,p_m \in V(G')$ and for each odd $j \in [r-1]$ we have $\{e_j, e_{j+1}\} \in \C$ then there exists a compatible $p_1$-$p_m$ path in~$G'_\C$. 
\item If $p_1 \in V(G')$ (resp.\ $p_1 \notin V(G')$), and for some $i,i' \in [\ell]$ and even (resp.\ odd) $j \in [r-1]$ we have $C_{i}=\{e_j\}, C_{i'}=\{e_{j+1}\} \in \C$, then there exists a compatible $c_i$-$c_{i'}$ path in~$G'_\C$,
\item If for some $i \in [\ell]$ we have $C_{i}=\{e_1\} \in \C$ (resp.\ $C_{i}=\{e_r\} \in \C$), then there exists a compatible $p_1$-$c_{i}$ path (resp.\ $c_{i}$-$p_m$ path) in~$G'_\C$. 
\end{enumerate}
\end{observation}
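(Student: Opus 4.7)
}
The plan is to construct the required path in $G'_\C$ directly from $P$ by replacing each maximal subpath of $P$ that lies in $V(G) \setminus V(G')$ with an appropriate substitute determined by $\C$. Concretely, let $(p_1, \ldots, p_m) = P$ and let $e_1, \ldots, e_r$ be the edges of $P$ crossing the cut between $V(G')$ and $V(G) \setminus V(G')$, listed in the order they appear on $P$. These edges split $P$ into alternating \emph{inner} segments (contained in $V(G')$) and \emph{outer} segments (contained in $V(G) \setminus V(G')$). Each inner segment will be copied verbatim into the new path; each outer segment lying between two crossings $e_j, e_{j+1}$ will be replaced either by a single vertex $c_i \in V'$, when $\{e_j, e_{j+1}\} = C_i$, or by the concatenation of two ``half-paths'' ending/starting at distinct vertices $c_i, c_{i'}$ of $V'$ when $C_i = \{e_j\}$ and $C_{i'} = \{e_{j+1}\}$ are singletons. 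A dangling prefix or suffix of $P$ outside $G'$ is handled analogously, producing the endpoint $c_i$ in the respective cases of the statement.

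First I would check that the sequence obtained in this way is indeed a path in $G'_\C$: each edge used is either an edge of $G'$ that already appears on $P$, or one of the newly introduced edges $uc(C_i)$ from \cref{def:terminating}, where $u$ is an endpoint in $V(G')$ of an edge from $C_i$ on $P$. No vertex of $V(G')$ is repeated because $P$ is a path; no vertex $c_i \in V'$ is repeated because each set $C_i$ is used at most once (the $C_i$ are pairwise disjoint and the $e_j$ are distinct edges of~$P$).

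The second step is to verify compatibility. At internal vertices $u \in V(G')$ this is immediate from the first bullet of \cref{def:terminating}: if the new path takes a transition $\{wu, uz\}$ at $u$ with $w, z \in V(G')$, the transition is inherited from $T_G$; if one or both of $w, z$ lie in $V'$, the corresponding transition in $G$ between the actual edges of $P$ at $u$ witnesses membership in $T_{G'_\C}(u)$ by the second and third sub-bullets. At vertices $c_i \in V'$ every transition is permitted by construction. Combining these observations with the three case distinctions in the statement (both endpoints inside $G'$; one endpoint inside and the other replaced by some $c_i$; or one/both endpoints of the new path being vertices in $V'$) yields the three claims directly. The only subtlety is to match the parity of the index $j$ in case~2 with whether $p_1 \in V(G')$ or $p_1 \notin V(G')$, since this governs whether the crossing $e_j$ is an ``exit'' or ``entry'' edge of $P$ relative to $G'$; this is routine bookkeeping and constitutes the only place where care is needed.
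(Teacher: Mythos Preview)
The paper does not give a proof of \cref{obs:terminating}; it simply declares the observation ``straightforward'' and moves on. Your plan is exactly the natural verification the authors had in mind: cut $P$ at the crossing edges $e_1,\ldots,e_r$, keep each maximal subpath inside $V(G')$, and replace each excursion outside $G'$ (or a dangling end) by the appropriate vertex $c_i\in V'$; then check injectivity of the vertex sequence and compatibility at every internal vertex using the three sub-bullets of \cref{def:terminating}. Your treatment of the three cases and of the parity bookkeeping is correct.

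One small wording issue: the ``half-paths'' you mention when both $C_i=\{e_j\}$ and $C_{i'}=\{e_{j+1}\}$ are singletons cannot be glued across an \emph{outer} segment (there is no edge between $c_i$ and $c_{i'}$ in $G'_\C$), so that alternative never arises in case~1, whose hypothesis forces $\{e_j,e_{j+1}\}\in\C$ for every odd~$j$. This scenario only matters for case~2, where you are extracting a single inner segment bounded by $e_j$ and $e_{j+1}$ and capping it with $c_i$ and $c_{i'}$. This is not a gap in your argument, just a place where the exposition could be tightened.
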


\paragraph{Notions for dynamic programming.}
We now introduce the notions used in our dynamic-programming approach and give results that we will later need to prove the correctness.
Let $(G,T_G,W)$ be an instance of \comvdpaths and let $(\T,\X)$ be a treecut decomposition of $G$ of width~$k$.
For a set $X \subseteq V(G)$, by $W[X]$ we denote the subset of terminal pairs from $W$ with both elements in $X$. 
For a node $t \in V(\T)$ let $G_t=G[Y_t]$.
An \emph{unmatched terminal} for $t$ is a vertex $x \in Y_t$ such that there exists a vertex $y \in Z_t$ satisfying $\{x, y\} \in W$.
We let $U_t$ be the set of unmatched terminals for~$t$.
We will use the fact that the number of unmatched terminals is bounded by the width of $(\T, \X)$ in every yes-instance:
\begin{observation}\label{obs:few-unmatched-terminals}
  If $(G,T_G,W)$ is a yes-instance, then for each node $t$ of $\T$ the number $|U_t|$ of unmatched terminals is at most~$k$.
\end{observation}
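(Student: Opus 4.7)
The plan is to show that each unmatched terminal for $t$ forces at least one edge of the cut $E_t$ into the solution, and that distinct unmatched terminals force distinct such edges, so $|U_t| \le |E_t| \le k$.

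First I would fix a solution, i.e., a family $\{P_e\}_{e \in W}$ of pairwise vertex-disjoint compatible paths in $G$ such that $P_e$ connects the two endpoints of $e$. Since the paths are vertex-disjoint, they are in particular pairwise edge-disjoint (any shared edge would share both of its endpoints). Next I would observe that, by the definition of $U_t$, each $x \in U_t$ lies in a unique terminal pair $\{x, y_x\} \in W$ with $y_x \in Z_t$, and these pairs are pairwise disjoint (as all pairs in $W$ are), so distinct unmatched terminals yield distinct solution paths $P_{\{x, y_x\}}$.

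The next step is the crossing argument. Since $P_{\{x, y_x\}}$ starts in $Y_t$ and ends in $Z_t$, and since $E_t = E(Y_t, Z_t)$ is the edge cut between $Y_t$ and $Z_t$, the path must contain at least one edge of $E_t$. Assign to each $x \in U_t$ some edge $e_x \in E(P_{\{x, y_x\}}) \cap E_t$; by edge-disjointness of the solution paths, the assignment $x \mapsto e_x$ is injective. Hence $|U_t| \le |E_t|$, and by definition of the width of $(\T, \X)$ we have $|E_t| \le k$, giving $|U_t| \le k$ as required.

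There is no real obstacle here; the only thing to double-check is that the root case is handled correctly. For the root $r$, the definition sets $E_r := \emptyset$, but also $Z_r = V(G) \setminus Y_r = \emptyset$, so $U_r = \emptyset$ and the bound $0 \le k$ holds trivially. Thus the argument is uniform across all nodes of $\T$.
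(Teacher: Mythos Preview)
Your proof is correct and takes essentially the same approach as the paper: both argue that the solution paths corresponding to unmatched terminals must each cross the cut $E_t$, and by (vertex-, hence edge-) disjointness this gives $|U_t| \le |E_t| \le k$. The paper phrases this via a flow argument (the solution witnesses a $U_t$--$Z_t$ flow of value $|U_t|$ across the cut $E_t$), whereas you spell out the injection $x \mapsto e_x$ explicitly and handle the root case separately; the content is the same.
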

\begin{proof}
  Consider a solution to the instance and observe that this solution witnesses that there is a flow between $U_t$ and $Z_t$ of value at least $|U_t|$.
  Since $E_t$ is a $Y_t$-$Z_t$ cut in $G$ containing at most $k$ edges and $U_t \subseteq Y_t$, we have $|U_t| \leq |E_t| \leq k$.
\end{proof}
\noindent Since \cref{obs:few-unmatched-terminals} is easily checkable in polynomial time, we will from now on assume that for each node $t$ of $\T$ we have $|U_t| \leq k$.

As mentioned, we are going to describe a dynamic-programming procedure on the treecut decomposition $(\T,\X)$.
Below we introduce a basic notion which will be used to store the information about partial solutions.
\begin{definition}\label{def:record}
  A \emph{record} $R$ for $t \in V(\T)$ is a tuple $(\sigma,\I,\F,\lambda)$ consisting of the following elements.
  \begin{itemize}
  \item Function $\sigma$ is a partition of $E_t$ into sets $I$
    (\emph{internal}), $F$ (\emph{foreign}), $L$ (\emph{leaving}), and $U$ (\emph{unused}), such that for every $v \in V(G)$ we have $|E_G(v) \cap (I \cup F \cup L)| \leq 2$.
    Moreover, for each vertex $v \in V(G)$ with $|E(v) \cap (I \cup F \cup L)| = 2$ either (i) both elements of $E(v) \cap (I \cup F \cup L)$ belong to exactly one of~$I$ or~$F$, or (ii) $v \in Z_t$ and one element of $E(v) \cap (I \cup F \cup L)$ belong to to~$F$ and another one to~$L$.
  \item Set $\I$ is a perfect matching between the elements from $I$, such
    that if $\{y_iz_i,y_jz_j\} \in \I$ then $y_i \neq y_j$ (recall that $y_i, y_j$ denote vertices in $Y_t$ and $z_i, z_j$ vertices in~$Z_t$).
  \item Set $\F$ is a perfect matching between the elements from $F$, such
    that if $\{y_iz_i,y_jz_j\} \in \F$ then $z_i \neq z_j$.
  \item Finally, $\lambda$ is a bijection between $U_t$ and $L$.
\end{itemize}
\end{definition}

Observe that the conditions on $\sigma$ together with the conditions on the matchings $\I$ and $\F$ ensure that $\I \cup \{\{e\} \colon e \in F \cup L\}$ is terminable with respect to the subgraph $G[Y_t]$ of $G$ and that $\F \cup \{\{e\} \colon e \in I \cup L\}$ is terminable with respect to the subgraph $G[Z_t]$ of $G$.

Let $R(t)$ denote the set of all possible records for $t$.
Observe that $|R(t)| \leq 4^k \cdot (k!)^3 = k^{O(k)}$, as there are at most $4^k$ possibilities for $\sigma$, there are at most $k!$ possibilities each for the matchings $\I$ and $\F$, and at most $k!$ possibilities for $\lambda$ because there are at most $k$ unmatched terminals by \cref{obs:few-unmatched-terminals}.

Intuitively, the edges in sets $I, F$, and $L$ will correspond to edges in the solution paths: paths with both terminals in $Y_t$ use edges in $I$, paths with both terminals in $Z_t$ use edges in $F$ and paths with one terminal $y \in Y_t$ and another one $z\in Z_t$ use one edge of $L$ (which is supposed to be the first edge on the solution path from $y$ to $z$ which belongs to $E_t$) and their other edges should belong to~$F$.
The matchings should capture, in case of edges in $I$, which edge is used by path to leave $Y_t$ and then to come back, and in case of edges in $F$, by which edge we enter $Y_t$ and which one is then used to leave.
Finally, $\lambda$ says by which edge we leave $Y_t$ for the first time.
Below we introduce a notion which will help to formalize this intuition.

\begin{definition}\label{def:corr-inst}
  For an instance $(G,T,W)$, a node $t$ of a treecut decomposition $(\T,\X)$ of $G$ and a record $R = (\sigma,\I,\F,\lambda) \in R(t)$, we construct a \emph{corresponding instance} $(G_R,T_R,W_R)$ of \comvdpaths as follows.
  Let $\C=\{C_1,\ldots,C_\ell\}:= \I \cup \{\{e\} : e \in F \cup L\}$.
  Let the graph $G_R$, together with the transition system~$T_R$, be obtained by terminating $\C$ in $G_t$. Denote by $V_R=\{c(C_1),\ldots,c(C_\ell)\}$ the set of vertices $V(G_R) \setminus V(G_t)$. 
  The set $W_R$ contains
  \begin{enumerate}[(i)]
  \item every element of $W[Y_t]$, 
  \item the pair $\{c_i,c_j\}$ for every $c_i, c_j \in V_R$ such that $C_i \cup C_j \in \F$, and
  \item the pair $\{a,c_i\}$ for every $a \in U_t$ and $c_i \in V_R$ such that $C_i=\{\lambda(a)\}$.
  \end{enumerate}
\end{definition}
\noindent Note that the set $\C$ was defined in such a way that the pairs added to $W_R$ are disjoint.

A record $R$ is \emph{valid}, if its corresponding instance $(G_R,T_R,W_R)$ is a yes-instance of \comvdpaths. 
A corresponding instance should capture how the potential solution we construct behaves on $G_t$. 

\begin{definition}\label{def:correspondence}
Let $(\T,\X)$ be a treecut decomposition of $G$, let $t \in V(\T)$ and let  $J=(G,T,W)$ be an instance of \comvdpaths. Assume that there exists a solution $S=\{P_1, P_2, \ldots,P_{|W|}\}$ to $J$. We say that solution $S$ \emph{corresponds} to a record $(\sigma,\I,\F,\lambda)$ for the node $t$, if the following conditions are satisfied for every $a_i$-$b_i$ path $P_i \in S$ such that $E(P_i) \cap E_t \neq \emptyset$. Let $e^i_1, e^i_2, \ldots,e^i_{r_i}$ denote the elements of $P_i \cap E_t$ in the order of their appearance on~$P_i$.
\begin{enumerate}
\item If $a_i, b_i \in Y_t$, then $\{e^i_j, e^i_{j+1}\} \in \I$ for each odd $j \in [r_i-1]$.
\item If $a_i, b_i \in Z_t$, then $\{e^i_j, e^i_{j+1}\} \in \F$ for each odd $j \in [r_i-1]$.
\item If $a_i \in Y_t, b_i \in Z_t$, then $\lambda(a_i)=e^i_1$ (note that in this case $a_i$ is an unmatched terminal, that is, $a_i \in U_t$) and $\{e^i_j, e^i_{j+1}\} \in \F$ for each even $j \in [r_i-1]$.
\item An edge $e \in E_t$ belongs to $U$ if and only if $e \not\in \bigcup_{i \in [|W|]} E(P_i)$.
\end{enumerate}
\end{definition}

The dynamic-programming procedure on the treecut decomposition $(\T, \X)$ of $G$ computes the set $D(t)$ of valid records for a node $t$ of~$\T$. In other words, in $D(t)$ we store the information about these behaviors of a potential solution on $E_t$, which can be extended to $G_t$. Note that $(G,T,W)$ is a yes-instance of \comvdpaths if and only if $D(r)=\{(\emptyset,\emptyset,\emptyset,\emptyset)\}$. 

The last operation which will be used in the algorithm is the simplification of an instance. Intuitively, for a record $R$, the simplified instance is a smaller instance of the \comvdpaths problem which is equivalent to the original instance $J$ assuming that, if $J$ is an yes-instance, then $R$ corresponds to some solution for~$J$. Note that the construction of the simplified instance can be seen as dual to the construction of the corresponding instance.
\begin{definition}\label{def:simplification}
  Let $(\T,\X)$ be a treecut decomposition of $G$, let $t \in V(\T)$ and let  $J=(G,T,W)$ be an instance of \comvdpaths. The operation of \emph{simplification} of the instance $J$ in node $t$ in accordance with record $R = (\sigma,\I,\F,\lambda) \in R(t)$ returns an instance $(G_Q,T_Q,W_Q)$ as follows.
  Let $\C=\{C_1,\ldots,C_\ell\}:= \F \cup \{\{e\} : e \in I \cup L\}$.
  Graph $G_Q$ and its transition system $T_Q$ are obtained by doing the termination of $\C$ with respect to $G[Z_t]$. 
Let $V_Q:=\{c(C_1),\ldots,c(C_\ell)\}$.
The set of terminal pairs $W_Q$ contains
\begin{enumerate}[(i)]
\item every element of $W[Z_t]$, 
\item the pair $\{c_i,c_j\}$ for every $c_i, c_j \in V_Q$ such that $C_i \cup C_j \in \I$, and
\item the pair $\{c_i,b\}$ for every $\{a, b\} \in W$ with $a \in U_t$, $b \in Z_t$, and $c_i \in V_Q$ such that $C_i=\{\lambda(a)\}$.
\end{enumerate}
\end{definition}

Observe that each vertex in $V_Q$ has degree at most 2, and the degree of vertex in $V(G_Q) \setminus V_Q$ is at most its degree in $G$.

The following lemmata reveal how the introduced notions are related to each other. 

\begin{lemma}\label{lem:valid-rec}
Let $(\T,\X)$ be a treecut decomposition of $G$ and let $J=(G,T,W)$ be an instance of \comvdpaths which admits a solution $S$. Then for every node $t \in V(\T)$ there exists a unique record $R \in R(t)$ such that $S$ corresponds to $R$. 

On the other hand, if $S$ corresponds to some record $R$, then $R$ must be valid.
\end{lemma}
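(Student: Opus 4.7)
The plan is to split the lemma into three claims: existence of a corresponding record $R \in R(t)$ for $S$; uniqueness of $R$; and validity of any corresponding record. Existence and uniqueness follow by inspecting \cref{def:correspondence}, which essentially dictates $R$ from $S$, while validity requires exhibiting a solution to the corresponding instance $(G_R, T_R, W_R)$ from \cref{def:corr-inst}.

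For existence, I will define $R = (\sigma, \I, \F, \lambda)$ directly from $S$. Each edge $e \in E_t$ is used by at most one path of $S$ (paths are vertex-disjoint, hence edge-disjoint); I place $e$ into $I$, $F$, $L$, or $U$ according to the type of that path (both endpoints in $Y_t$; both in $Z_t$; one endpoint in $U_t$ and $e$ is the first $E_t$-edge along the path; or no path). I set $\I$ and $\F$ to pair consecutive $E_t$-edges on a common path as in conditions (1)--(3) of \cref{def:correspondence}, and set $\lambda(a)$ to the first $E_t$-edge on the path leaving $a \in U_t$. Verifying that this $R$ satisfies \cref{def:record} is then routine: the degree bound at each vertex follows from vertex-disjointness of $S$ combined with the fact that each path uses at most two edges at any vertex; the case distinction on a vertex $v$ with $|E(v) \cap (I \cup F \cup L)| = 2$ follows by splitting on the type of path through $v$, with the $v \in Z_t$-exception (one edge in $F$, one in $L$) arising exactly when the path is $U_t$-to-$Z_t$ and its $\lambda$-edge is immediately followed at $v$ by another $E_t$-edge; and the injectivity conditions $y_i \neq y_j$ on $\I$ and $z_i \neq z_j$ on $\F$ follow because a simple path visits distinct vertices on each return to a side. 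Uniqueness is then immediate because each of $\sigma, \I, \F, \lambda$ is pinned down by the four conditions of \cref{def:correspondence}.

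For validity, I will build a solution $S_R$ to $(G_R, T_R, W_R)$ by rerouting each path of $S$ through the new vertices in $V_R$. A path with both endpoints in $Y_t$ keeps its $Y_t$-portion intact, and each maximal $Z_t$-detour bracketed by $\I$-matched edges $\{y_iz_i, y_jz_j\}$ is replaced by the two-edge segment $y_i, c, y_j$, where $c$ is the $V_R$-vertex associated to this pair; this yields a path in $G_R$ connecting a pair in $W[Y_t]$. A path with both endpoints in $Z_t$ yields one segment $c_i, y_i, \ldots, y_j, c_j$ in $G_R$ for each maximal $Y_t$-excursion between two $\F$-matched edges, matching a type-(ii) pair in $W_R$. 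A path from $a \in U_t$ to $b \in Z_t$ yields an $a$-to-$c_i$ segment for $C_i = \{\lambda(a)\}$ (a type-(iii) pair), plus additional type-(ii) segments for any further $Y_t$-excursions. Vertex-disjointness of $S_R$ follows from that of $S$ inside $V(G_t)$ together with the fact that each $c \in V_R$ is touched by a single original path; compatibility with $T_R$ at vertices of $V(G_t)$ follows directly from \cref{def:terminating} (transitions of $T_G$ are preserved under the substitution of each edge in some $C_i$ by the edge to $c_i$), and at vertices of $V_R$ from the fact that all transitions there are permitted.

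The main obstacle will be the bookkeeping in the validity step: a single $Z_t$-to-$Z_t$ or $U_t$-to-$Z_t$ path of $S$ may produce multiple terminal pairs in $W_R$, and I must verify that the rerouted segments together cover $W_R$ exactly, with no duplicated endpoints in $V_R$. This ultimately reduces to the facts that $\F$ is a perfect matching on $F$, $\lambda$ is a bijection between $U_t$ and $L$, and the elements of $\C = \I \cup \{\{e\}: e \in F \cup L\}$ are pairwise disjoint by \cref{def:record}.
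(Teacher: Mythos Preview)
Your proposal is correct and follows essentially the same approach as the paper, just with considerably more detail: the paper dismisses existence and uniqueness as ``clear'' from \cref{def:correspondence}, and for validity simply appeals to the construction of the corresponding instance together with \cref{obs:terminating}. Your explicit path-rerouting for validity is precisely what \cref{obs:terminating} encapsulates, so the only difference is that you unpack that observation by hand (and additionally verify vertex-disjointness of the resulting family, a point the paper leaves implicit).
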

\begin{proof}
It is clear that for a fixed $S$ and $t$ there exists an unique $R=(\sigma,\I,\F,\lambda) \in R(t)$ satisfying conditions 1.-4. in the \cref{def:correspondence}. 

The fact that there exists a solution to the corresponding instance of $R$ follows from the construction of the corresponding instance and \cref{obs:terminating}.
\end{proof}

\begin{lemma}\label{lem:simpl-safe}
Consider an instance $J=(G,T,W)$ of \comvdpaths. Let $(\T,\X)$ be a treecut decomposition of~$G$, let $t$ be a fixed node of $\T$, and let $R=(\sigma,\I,\F,\lambda) \in R(t)$. Let $J_Q$ be the result of the simplification of $J$ in $t$ in accordance with~$R$.
\begin{enumerate}
\item If $J_Q$ admits a solution and $R$ is valid, then $J$ admits a solution.
\item If $S$ is a solution to $J$ and $S$ corresponds to $R$, then $J_Q$ admits a solution. 
\end{enumerate}
\end{lemma}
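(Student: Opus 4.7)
The plan is to prove both parts by explicit construction, relying on \cref{obs:terminating} to transport compatible paths across the termination operation. The key observation underlying everything is that $\C_Q := \F \cup \{\{e\} : e \in I \cup L\}$ (used in the simplification) and $\C_R := \I \cup \{\{e\} : e \in F \cup L\}$ (used in the corresponding instance) are \emph{complementary}: for each $e \in E_t$ that ends up as a singleton terminator on one side, its mate under $\I$, $\F$, or $\lambda$ determines how it is handled on the other side.

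For part 2, I start with a solution $S$ to $J$ corresponding to $R$ and exhibit a solution $S_Q$ to $J_Q$. For each path $P \in S$, consider its edges in $E_t$ in the order of appearance. If $P$ does not meet $E_t$ and lies in $G[Z_t]$, I take $P$ itself; it connects a pair in $W[Z_t]$. If $P$ meets $E_t$, then by the definition of correspondence its cut-edges are paired by $\F$ or tied by $\lambda$ in a pattern that matches precisely the structure of $\C_Q$. Applying \cref{obs:terminating} to the $Z_t$-part of $P$ (with $G' = G[Z_t]$ and the set $\C_Q$) produces a compatible path in $G_Q$; a case check using the three items of \cref{def:correspondence} shows that its endpoints are exactly one of the pairs listed in \cref{def:simplification}(i)--(iii). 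Vertex-disjointness of the constructed paths in $G_Q$ follows from vertex-disjointness of $S$ and the fact that each terminator $c_i \in V_Q$ is generated by cut-edges used by at most one path of~$S$.

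For part 1, I combine a solution $S_R$ to the corresponding instance $(G_R, T_R, W_R)$, which exists since $R$ is valid, with a solution $S_Q$ to $J_Q$. The construction produces one path of the final solution for each pair in $W$ by splicing. Concretely, a path in $S_R$ that uses a two-element terminator $c_k$ with $C_k = \{e, e'\} \in \I$ enters $c_k$ via the $Y_t$-endpoint of $e$ and leaves via the $Y_t$-endpoint of $e'$; I replace the fragment $y_1 c_k y_2$ by the edge $e$, then a path in $S_Q$ from $c_i$ to $c_j$ (with $C_i = \{e\}$, $C_j = \{e'\}$, which is a terminal pair in $W_Q$ since $C_i \cup C_j \in \I$), with $c_i, c_j$ replaced by the $Z_t$-endpoints of $e, e'$, then the edge $e'$. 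Symmetric splices handle the $\F$-terminators in $S_Q$ and the $\lambda$-terminators used by paths with one endpoint in $U_t$. Applying \cref{obs:terminating} on both sides shows that each spliced object is a compatible walk in $G$; since $S_R$ and $S_Q$ are each internally vertex-disjoint and the splice identifies only terminator vertices with cut-edge endpoints, the resulting walks are vertex-disjoint and use each cut-edge at most once, hence they are simple paths and form a solution to~$J$.

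The main obstacle is bookkeeping in the splicing step: keeping the case analysis (paths of type $I$, $F$, or $L$, and whether a path stays on one side, crosses once, or crosses multiple times) aligned with the four items of \cref{def:correspondence} and the three items of \cref{def:simplification}. Once the dictionary between $\C_R$-terminators and $\C_Q$-terminators is written down, compatibility is a direct application of \cref{obs:terminating} on each side, and disjointness is forced by the assumption $|E_G(v) \cap (I \cup F \cup L)| \leq 2$ together with the matching conditions on~$\I$ and~$\F$.
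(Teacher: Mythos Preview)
Your approach matches the paper's: part~2 is a direct application of \cref{obs:terminating}, and part~1 is proved by splicing the solutions $S_R$ and $S_Q$ together at terminator vertices, with simplicity and disjointness following from the fact that each edge of $E_t$ is used at most once.

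The one point where your sketch is thinner than the paper is the \emph{recursive} nature of the splice in part~1. When you replace an $\I$-terminator $c_k$ in an $S_R$-path by a $c_i$--$c_j$ path from $S_Q$, that inserted path may itself pass through $\F$-terminators (the degree-$2$ vertices of $V_Q$, which are \emph{not} terminal endpoints in $W_Q$); these must in turn be replaced by $S_R$-paths, which may again contain $\I$-terminators, and so on. Your phrase ``symmetric splices handle the $\F$-terminators in $S_Q$'' gestures at this but does not spell out that the two kinds of replacement feed into each other. The paper makes this explicit as an alternating Step~1/Step~2 procedure and bounds the number of rounds by $2k$ via $|V_R|,|V_Q|\le k$; your observation that each cut-edge appears at most once is precisely the ingredient that makes this terminate, so you have the right idea, but the iteration (and its termination) should be stated rather than left implicit.
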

\begin{proof}
Let $J_Q=(G_Q,T_Q,W_Q)$. Recall that $V(G_Q)=V(G[Z_t]) \cup V_Q$, $V_Q=\{c(C_1), c(C_2), \ldots,c(C_\ell)\}$ and $\C=\{C_1, C_2, \ldots,C_\ell\}$ is a set terminated with respect to $G[Z_t]$. Note that in both statements of the lemma, $R$ must be valid -- in the first one by assumption and in the second one it follows from \cref{lem:valid-rec}. Let thus $J_R = (G_R,T_R,W_R)$ be a yes-instance corresponding to $R$, obtained by terminating $\C'=\{C'_1, C'_2, \ldots, C'_{\ell'}\}$. Let $V_R=\{c'(C'_1), c'(C'_2), \ldots,c'(C'_{\ell'})\}$ (we will also write $c'_i$ instead of $c'(C'_i)$) and let $S_R$ be a solution to $J_R$.

For the first statement, assume that $S_Q$ is a solution to $J_Q$. 
For each $\{a,b\} \in W$ we construct a compatible $a$-$b$ path $P^*$ to include in a solution for $J$ as follows.
We iteratively construct a sequence $P^*$ of elements of $V(G_R) \cup V(G_Q)$ with the property that, at each iteration, two consecutive vertices of $P^*$ either form an edge from $E(G_R) \cup E(G_Q)$ or from $E_t$. 
We claim that, at the end of the procedure, $P^*$~is a compatible $a$-$b$ path in $G$.

We first consider the case in which $a,b \in Y_t$.
Observe that there is an $a$-$b$ path $P \in S_R$ in~$G_R$. 
We start with $P^*$ being the consecutive vertices of~$P$.
If there are no vertices from the set $V_R$ in $P$, then $P^*$ is the desired $a$-$b$ path in $G$.
Otherwise, we proceed to Step~1.

Step 1 (Replacing vertices from $V_R$):
Denote by $e_1,e_2, \ldots, e_m$ the edges from $E(Y_t,V_R) \cap E(P)$ in the order in which they appear on $P$. 
Since $a,b \in Y_t$ and by the construction of $J_R$ we must have $\{e_1,e_2\}, \{e_3,e_4\}, \ldots, \{e_{m-1},e_m\} \in \I$.
This implies that for each $i \in \{1,3,\ldots,m-1\}$ there exists a $c(\{e_i\})$\nobreakdash-$c(\{e_{i+1}\})$ path in $S_Q$; call this path~$P_{i, i + 1}$.
We replace each element of $V_R$ adjacent to edges $e_i$ and $e_{i+1}$ in $P^*$ by the interior vertices of~$P_{i,i+1}$.
Observe that the result respects the transitions of~$G$ by the definition of termination of a set.
Observe that in this way we may have added some vertices from~$V_Q$ to the sequence~$P^*$.
If this has happened, we take care of them in Step 2.

Step 2 (Replacing vertices from $V_Q$):
Let $e'_1,e'_2, \ldots, e'_{m'}$ be the edges from $E(Z_t,V_Q) \cap E(P_{i,i+1})$ in the order in which they appear on $P_{i,i+1}$.
By the construction of $J_Q$ we have $\{e'_1,e'_2\},\ldots, \{e'_{m'-1},e'_{m'}\} \in \F$.
Moreover, for each $j \in \{1,3,\ldots,m'-1\}$ there exists a $c'(\{e'_j\})$\nobreakdash-$c'(\{e'_{j+1}\})$ path $P_{j,j+1}$ in~$S_R$.
We replace each element of $V_Q$ adjacent to edges $e'_j$ and $e'_{j+1}$ in $P^*$ by the interior vertices of the $c(\{e'_j\})$\nobreakdash-$c(\{e'_{j+1}\})$ path $P'_{j,j+1}$ from $S_R$.
Observe again that the result respects the transitions of~$G$ by the definition of termination of a set.
Again, in Step 2 we can add to the sequence some vertices from $V_R$.
If this happens, we go back to Step~1.

Since the paths in $S_R$ and $S_Q$ are pairwise disjoint, we never add a vertex to $P^*$ twice. 
Moreover, $|V_R|,|V_Q| \leq k$, so after at most $2k$ iterations, we obtain an $a$-$b$ path $P^*$ which uses only vertices from $Y_t \cup Z_t$, as required. 

The case in which $a,b \in Z_t$ is analogous; the only difference is that the initial path $P^*$ is taken from $S_Q$, and therefore it can contain vertices from $V_Q$ and, in that case, we start with Step~2.
If $a \in Y_t$ and $b \in Z_t$, we start with the $c'(\{\lambda(a)\})$\nobreakdash-$b$ path $P^*$ from $S_Q$ and, before performing Step~1, we replace $c'(\{\lambda(a)\})$ in $P^*$ by the vertices of the $a$-$c(\{\lambda(a)\})$ path in~$S_R$.

It is straightforward to verify that we use every path from $S_R$ and $S_Q$ at most once when we construct paths for all terminal pairs~$\{a,b\}$.
Therefore, since $S_R$ and $S_Q$ are sets of pairwise vertex-disjoint paths, we obtain a set $S$ of pairwise vertex-disjoint paths in $G$ connecting each pair in $W$.

For the second statement, observe that if $S$ is a solution to $J$, then we can derive a construction of every $a$-$b$ path in $G_Q$ from \cref{obs:terminating} (analogously to the proof of \cref{lem:valid-rec}).  
\end{proof}

\paragraph{The algorithm.} We are now ready to show how to proceed with a given instance $J=(G,T,W)$ of the \comvdpaths problem. 
Let $(\T,\X)$ be a treecut decomposition of $G$.

Observe that if $t$ is a leaf of $\T$ and $R \in R(t)$, the corresponding instance $(G_R, T_R, W_R)$ of $R$ is an instance of \simplecomvdpaths. Indeed, since $Y_t = X_t$, $Y_t$ has at most $k$ elements and the vertices in $V_R$ are of degree at most~2. This means that to compute the set $D(t)$ for a leaf $t$, for every element $R$ of $R(t)$ we find a corresponding instance and solve \simplecomvdpaths on it. Since $|R(t)| \leq k^{\Oh(k)}$, we obtain the following.
\begin{lemma} \label{lem:treecut-leaf}
There is an algorithm which takes as input $(G,T_G,W)$ of \comvdpaths, a treecut decomposition
$(\T,\X)$ of $G$ of width $k$ and a leaf $t \in V(\T)$, and computes $D(t)$ in time $k^{\Oh(k)} \cdot n^2$.
\end{lemma}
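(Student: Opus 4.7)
The plan is straightforward once one notices that for a leaf $t$ the situation collapses to \simplecomvdpaths. Since $t$ has no descendants, $Y_t = X_t$, and because $X_t \subseteq V(\widetilde{H}_t)$ while the width of $(\T,\X)$ is $k$, we have $|X_t| \le k$ and $|E_t| \le k$. Hence $G_t = G[X_t]$ has at most $k$ vertices, and the edges across the cut $E_t$ contribute at most $k$ ``virtual'' endpoints in any corresponding instance.

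Next, I would enumerate all records $R = (\sigma,\I,\F,\lambda) \in R(t)$. As noted in the excerpt there are at most $4^k\cdot(k!)^3 = k^{\Oh(k)}$ of them, and each can be written down in $k^{\Oh(k)}$ time. For each record $R$ I would construct its corresponding instance $(G_R,T_R,W_R)$ directly from \cref{def:corr-inst}: this requires knowing $E_t$, $U_t$, and $W[Y_t]$, which are precomputed once per leaf from $(G,T_G,W)$ and the treecut decomposition in $\Oh(n^2)$ time. The actual construction of $(G_R,T_R,W_R)$ for a single record then takes only $k^{\Oh(k)}$ time, since $|V(G_R)| \le |X_t| + |E_t| \le 2k$ and the transitions in $T_R$ involve only edges incident to vertices of $G_R$.

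Now the key observation: in $(G_R,T_R,W_R)$, partition the vertex set as $A := V(G_t) = X_t$ and $B := V_R$. By the definition of terminating a set, every vertex in $V_R$ has degree at most $2$, so $(G_R,T_R,W_R)$ together with the partition $(A,B)$ is a legitimate instance of \simplecomvdpaths with $|A| \le k$. Therefore I can invoke \cref{lem:simplecvdp} to test whether $R$ is valid in time $k^{\Oh(k)} + \Oh(|V(G_R)|^2) = k^{\Oh(k)}$. Collecting all the valid records yields $D(t)$, and summing over the $k^{\Oh(k)}$ records gives total time $k^{\Oh(k)} \cdot n^2$, where the $n^2$ factor absorbs the one-time preprocessing of $E_t$, $U_t$, and $W[Y_t]$.

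There is no substantive obstacle here, only some definition-chasing: one must verify that $V_R$ really consists of vertices of degree at most two (which is immediate from the cardinality bound $|C_i| \le 2$ built into \cref{def:record} and \cref{def:terminating}) and that the corresponding instance is well-defined as a forbidden-transition graph with the transition system constructed in \cref{def:terminating}. Once this is in place, the lemma is just a direct invocation of \cref{lem:simplecvdp} inside a loop over $R(t)$.
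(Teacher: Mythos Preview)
Your proposal is correct and follows essentially the same approach as the paper: observe that for a leaf $t$ we have $Y_t=X_t$ with $|X_t|\le k$, so each corresponding instance $(G_R,T_R,W_R)$ is an instance of \simplecomvdpaths with $|A|\le k$, and then loop over the $k^{\Oh(k)}$ records invoking \cref{lem:simplecvdp} on each. The paper's own proof is just the short paragraph preceding the lemma statement; your write-up simply makes the bookkeeping (the bounds $|X_t|\le k$ via $X_t\subseteq V(\widetilde{H}_t)$, the $\Oh(n^2)$ preprocessing, and the degree-$\le 2$ check for $V_R$) more explicit.
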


Next, we proceed to the non-leaf nodes.
A dynamic programming step will consist of three stages. For a non-leaf node $t$, we again construct a corresponding instance, but since its size does not have to be bounded by some function of $k$, we apply some further modifications.
First, we apply a reduction rule for each of the thin children (see below). Then we perform a simplification for each bold child~$t'$ of~$t$ and each $R \in R(t)$. After these, we argue that the graph obtained this way (for a fixed record $R$) is again an instance of \simplecomvdpaths, which can be solved efficiently.

Assume we solve an instance $(G,T_G,W)$ and let $t \in V(\T)$ be a non-leaf node. The safeness of the following reduction rule follows directly from its definition.
\begin{reduction*} Assume that $\T$ is a nice decomposition and let $s \in V(\T)$ be a thin child of~$t$.
  If $D(s)$ is empty, we report a no-instance.
  Otherwise, we proceed with the first option that applies on the following list.
  Herein, by \emph{terminating} a terminable set $\C$, we mean to replace $G$ by the result of the termination of $\C$ with respect to $G[Z_s]$. (In particular, this means to remove $Y_s$ from $G$.)
\begin{enumerate}
\item If $E_s=\{y_iz_i\}$, and:
\begin{itemize}
\item if $((y_iz_i \to L) ,\emptyset,\emptyset, a \mapsto y_iz_i) \in D(s)$, for some $\{a,b\} \in W$ such that $U_s=\{a\}$, then we terminate $\{\{y_iz_i\}\}$ and we replace $\{a,b\}$ in $W$ with $\{c(\{y_iz_i\}),b\}$.
\item if $((y_iz_i \to U),\emptyset,\emptyset, \emptyset) \in D(s)$ and $U_s=\emptyset$, then we remove $Y_s$ from~$G$.
\end{itemize}
\item If $E_s=\{y_iz_i,y_jz_j\}$, $U_s = \emptyset$, and:
\begin{itemize}
\item if $((y_iz_i,y_jz_j \to F),\emptyset,\{y_iz_i,y_jz_j\},\emptyset) \in D(s)$, then we terminate $\C=\{\{y_iz_i,y_jz_j\}\}$.
\item if $((y_iz_i,y_jz_j \to U),\emptyset,\emptyset,\emptyset) \in D(s)$, then we remove $Y_s$ from~$G$.
\item if $((y_iz_i,y_jz_j \to I),\{y_iz_i,y_jz_j\},\emptyset,\emptyset) \in D(s)$, then we terminate $\{\{y_iz_i\},\{y_jz_j\}\}$ and we add $\{c(\{y_iz_i\}),c(\{y_jz_j\})\}$ to~$W$.
\end{itemize}
\item If $E_s=\{y_iz_i,y_jz_j\}$, $U_s = \{a\}$, and $\{a,b\} \in W$, and:
\begin{itemize}
\item if $((y_iz_i \to L,y_jz_j \to U),\emptyset,\emptyset,a \mapsto y_iz_i) \in D(s)$ and $((y_jz_j \to L, y_iz_i \to U),\emptyset,\emptyset,a \mapsto y_jz_j) \in D(s)$, then we terminate $\{\{y_iz_i,y_jz_j\}\}$ and we add $\{c(\{y_iz_i,y_jz_j\}),b\}$ to~$W$.
\item if $(y_iz_i \to L, y_jz_j \to U),\emptyset,\emptyset,a \mapsto y_iz_i) \in D(s)$, then we terminate $\{\{y_iz_i\}\}$ and we add $\{c(\{y_iz_i\}),b\}$ to~$W$.
\end{itemize}
\item If $E_s=\{y_iz_i,y_jz_j\}$, $U_s = \{a_1,a_2\}$, and $\{a_1,b_1\},\{a_2,b_2\} \in W$ and:
\begin{itemize}
\item 
if $((y_iz_i,y_jz_j \to L),\emptyset,\emptyset,a_1 \mapsto y_iz_i,a_2 \mapsto y_jz_j) \in D(s)$ and 
$((y_iz_i,y_jz_j \to L),\emptyset,\emptyset,a_1 \mapsto y_jz_j,a_2 \mapsto y_iz_i) \in D(s)$, then
we terminate $\{\{y_iz_i,y_jz_j\}\}$. Let $c=c(\{y_iz_i,y_jz_j\})$. We add to $G$ a twin $c'$ of $c$, copying the transitions on the incident edges, and we add $\{c,b_1\}$ and $\{c',b_2\}$ to $W$.
\item if $((y_iz_i,y_jz_j \to L),\emptyset,\emptyset,a_1 \mapsto y_iz_i, a_2 \mapsto y_jz_j) \in D(s)$, then we terminate $\{\{y_iz_i\},\{y_jz_j\}\}$ and add $\{c(\{y_iz_i\}),b_1\}$ and $\{c(\{y_jz_j\}),b_2\}$ to $W$.
\end{itemize}
\item In all other cases, we report that $(G,T,W)$ is a no-instance.
\end{enumerate}
\end{reduction*}

Finally, we are ready to prove the following.
\begin{lemma} \label{lem:treecut-nonleaf}
There is an algorithm which takes an instance $J=(G,T,W)$ of \comvdpaths, a nice treecut decomposition
$(\T,\X)$ of $G$ of width $k$ and a non-leaf node $t \in V(\T)$, and computes $D(t)$ in time $k^{\Oh(k)} \cdot n^2$, assuming that for each child $t'$ of $t$ the set $D(t')$ is already computed.
\end{lemma}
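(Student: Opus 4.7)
The plan is to iterate over all candidate records $R \in R(t)$, of which there are $k^{\Oh(k)}$, and decide the validity of each by reducing the corresponding instance $(G_R, T_R, W_R)$ to an instance of \simplecomvdpaths, which is then solved via \cref{lem:simplecvdp}. The reduction proceeds in two phases. First, for each thin child $s$ of $t$, I would apply the Reduction Rule. Since $|E_s| \leq 2$, the case analysis inside the rule exhausts all possible behaviors of a solution across $E_s$, and which case applies is determined entirely by the already-computed set $D(s)$. Because $\T$ is nice, distinct thin children have disjoint external neighborhoods in $G$, so the rule can be applied to each thin child independently and safely; each application replaces $Y_s$ by at most two terminated vertices of degree at most two, possibly updating $W$ along the way.

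Second, for each bold child $t' \in A_t$, I would branch over all records $R_{t'} \in D(t')$ and, for each choice, apply the simplification (\cref{def:simplification}) of the current instance at $t'$ according to $R_{t'}$. This replaces $Y_{t'}$ by at most $|E_{t'}| \leq k$ terminated vertices of degree at most two. By \cref{thm:bold-children} there are at most $2k+1$ bold children, so the branching incurs a factor of $\prod_{t' \in A_t} |D(t')|$. After all children have been processed, the remaining graph consists of the vertices of $X_t$ together with the at most $k$ interface vertices $V_R$ introduced when building $G_R$, to which are attached the terminated vertices (each of degree at most two). Taking $A := X_t \cup V_R$, so that $|A| \leq 2k$, and letting $B$ be the terminated vertices gives an instance of \simplecomvdpaths that \cref{lem:simplecvdp} solves in $k^{\Oh(k)} + \Oh(n^2)$ time. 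I declare $R$ valid if and only if at least one branching yields a yes-answer.

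For correctness, the forward direction follows by iteratively applying \cref{lem:simpl-safe}(1) at each bold child, together with the safeness of the Reduction Rule at each thin child: a solution to the final \simplecomvdpaths instance lifts, one simplification and reduction at a time, back to a solution of $(G_R, T_R, W_R)$, witnessing the validity of $R$. For the backward direction, a solution $S_R$ of $(G_R, T_R, W_R)$ determines, by \cref{lem:valid-rec} applied at each bold child $t'$, a unique record $R_{t'} \in D(t')$ to which $S_R$ corresponds; the branching that picks exactly these $R_{t'}$'s then yields a yes-instance of \simplecomvdpaths by \cref{lem:simpl-safe}(2). The main obstacle will be the bookkeeping around successive terminations: I need to verify that terminations at different children commute and do not damage transitions or terminal pairs introduced for other children, which rests on the pairwise disjointness of the $Y_{t'}$s and on the fact that each termination only modifies vertices inside its own subtree together with fresh vertices placed in $G[Z_{t'}]$.
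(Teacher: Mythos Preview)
Your proposal is correct and follows essentially the same approach as the paper: loop over $R\in R(t)$, build the corresponding instance $J_R$, apply the Reduction Rule to each thin child, branch over a record in $D(t')$ for each bold child $t'\in A_t$ and simplify accordingly, then solve the resulting \simplecomvdpaths\ instance via \cref{lem:simplecvdp}; correctness is argued through \cref{lem:valid-rec} and \cref{lem:simpl-safe} exactly as you outline. The only cosmetic difference is that the paper takes $A=X_t$ for the \simplecomvdpaths\ instance (the $V_R$ vertices already have degree at most two), whereas you take $A=X_t\cup V_R$; both choices work, and as in the paper the branching over bold children costs $\prod_{t'\in A_t}|D(t')|\le (k^{\Oh(k)})^{2k+1}=k^{\Oh(k^2)}$.
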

\begin{proof}
First, we loop over all possible $R \in R(t)$; recall that $|R(t)| \leq k^{\Oh(k)}$. For a fixed $R$, we compute the corresponding instance $J_R=(G_R,T_R,W_R)$ of \comvdpaths. Recall that $V_R$ is a subset of vertices of $G_R$, which was added to $G_t$ during the construction of $J_R$ and each vertex in $V_R$ has degree at most two.

For the computed instance $J_R$, we apply the above reduction rule for each of the thin children of $t \in V(\T)$.
Note that each vertex $v$ added to $G_R$ during any of these reductions for thin children (denote the set of these vertices by $B_B$) is of degree at most two. Moreover, since the treecut decomposition of our graph is nice, $N(v) \subseteq X_t \cup V_R$. 

Then we loop over all possible functions $\mu$ that map each element $t'$ of $A_t$ to some element of $D(t')$. 
By \cref{thm:bold-children}, there are at most $2k+1$ elements of $A_t$, so there are at most $(k^{\Oh(k)})^{2k + 1} = k^{\Oh(k^2)}$ such functions. 
For the current $R$ and $\mu$, we perform for each $t' \in A_t$ the simplification according to~$\mu(t')$. 
Denote by $\widetilde{J}=(\widetilde{G},\widetilde{T},\widetilde{W})$ the instance obtained from $J_R$ after applying this sequence of simplifications (i.e., $\widetilde{G}$ is induced by vertices from $X_t \cup V_R \cup B_B$ and all $V_Q$'s coming from the simplifications).

We claim that $\widetilde{J}$ is an instance of \simplecomvdpaths. 
Indeed, we observed that after the simplification process, the degree of a vertex which was not added during the procedure is at most its degree in the original graph. So the only vertices which might have degree bigger than 2 are the ones which belong to~$X_t$. We thus can compute whether $\widetilde{J}$ is a yes-instance in $k^{\Oh(k)} + O(n^2)$ time using Lemma~\ref{lem:simplecvdp}.

We claim that $\widetilde{J}$ is a yes-instance if and only if $R$ is valid.
To see that, first assume that $R \in R(t)$ is a valid record.
Thus $J_R$ admits a solution.
By \cref{lem:valid-rec}, this means that there exists a valid record $R_{t'} \in D(t')$ for each child $t'$ of $t$. 
The reduction rule for thin children is safe and thus the solution is preserved.
For the bold children, consider an iteration in which $\mu$ assigns $R_{t'}$ to $t'$, for each $t' \in A_t$. 
The second statement of \cref{lem:simpl-safe} implies that the sequence of simplifications (starting from the instance $J_R$), consecutively in each node $t' \in A_t$, preserves the existence of a solution, so $\widetilde{J}$ is a yes-instance of \simplecomvdpaths.

Conversely, assume that the algorithm adds $R$ to $D(t)$, so the instance $\widetilde{J}$ of \simplecomvdpaths is a yes-instance. Then by the first statement of \cref{lem:simpl-safe}, all the instances obtained in the sequence of simplifications are also yes-instances. Since the reduction rule for thin children is safe, the corresponding instance $J_R$ of $R$ must be also a yes-instance.
Therefore, our algorithm computes correctly the set of valid records for $t$.

The procedure takes time $k^{\Oh(k^2)}$ times the time needed to solve an instance of \simplecomvdpaths; together $k^{\Oh(k^2)} \cdot n^2$. 
\end{proof}

We conclude the section with the following theorem.
\begin{theorem}\label{thm:treecut-width}
There is an algorithm which takes as input $(G,T,W)$ of \comvdpaths and returns an answer in time $k^{\Oh(k^2)}\cdot n^3$.
\end{theorem}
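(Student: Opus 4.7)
The plan is to assemble the machinery developed in this section into a bottom-up dynamic program over a nice treecut decomposition; the real work has already been done in \cref{lem:treecut-leaf} and \cref{lem:treecut-nonleaf}, and only the top-level glue remains.

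First I would obtain a suitable decomposition. Since the treecut-width $k$ of $G$ is not known in advance, I would invoke \cref{thm:approx-decomp} with guesses $k' = 1, 2, 4, \ldots$, doubling each round, until it succeeds. The first successful call occurs for some $k' \leq 2k$, so the total cost of this phase is dominated by the last call and is $2^{\Oh(k^2 \log k)} \cdot n^2$, yielding a treecut decomposition $(\T, \X)$ of width at most $2k' \leq 4k$. I would then apply \cref{thm:nice-decomp} in $\Oh(n^3)$ time to transform $(\T, \X)$ into a nice decomposition of the same width and no more nodes. Note that, because $\X$ is a partition of $V(G)$, the decomposition has $\Oh(n)$ nodes.

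Next I would compute the sets $D(t)$ of valid records by processing the nodes of $\T$ in post-order. Leaves are handled by \cref{lem:treecut-leaf} in time $k^{\Oh(k)} \cdot n^2$; each internal node $t$, once its children $t'$ have had $D(t')$ computed, is handled by \cref{lem:treecut-nonleaf} in time $k^{\Oh(k^2)} \cdot n^2$. Summing over the $\Oh(n)$ nodes gives $k^{\Oh(k^2)} \cdot n^3$ for the dynamic programming, which dominates the preprocessing and matches the claimed bound. At the root $r$ we have $E_r = \emptyset$, so the only conceivable record is $(\emptyset, \emptyset, \emptyset, \emptyset)$; I would output \emph{yes} exactly when this record lies in $D(r)$. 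Correctness of this acceptance criterion is essentially built in: by \cref{lem:valid-rec}, any solution $S$ to $(G, T, W)$ corresponds to a unique record at $r$, which must then be valid and hence in $D(r)$, while conversely the corresponding instance of $(\emptyset, \emptyset, \emptyset, \emptyset)$ at the root is $(G, T, W)$ itself, so its validity gives a solution.

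The only real sanity check concerns the width slack: the decomposition produced by \cref{thm:approx-decomp} and the iteration trick has width $\Theta(k)$ rather than $k$, but this only changes the hidden constants inside the $k^{\Oh(k^2)}$ factor, so the stated running-time bound is unaffected. There is no single hard step in this proof; the work consists entirely in verifying that the preprocessing, the per-node costs, and the node count multiply out to $k^{\Oh(k^2)} \cdot n^3$, as required.
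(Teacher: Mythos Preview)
Your proposal is correct and follows essentially the same approach as the paper: obtain an approximate treecut decomposition via \cref{thm:approx-decomp}, make it nice via \cref{thm:nice-decomp}, fill in $D(t)$ bottom-up using \cref{lem:treecut-leaf,lem:treecut-nonleaf}, and accept at the root iff the trivial record is valid. The only cosmetic differences are that you add the doubling search to discover $k$ (the paper tacitly treats $k$ as given), and you phrase the acceptance test as ``$(\emptyset,\emptyset,\emptyset,\emptyset)\in D(r)$'' rather than ``$D(r)=\{(\emptyset,\emptyset,\emptyset,\emptyset)\}$''; since at the root $E_r=\emptyset$ and $U_r=\emptyset$, these are equivalent. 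One minor caution: your argument that $\T$ has $\Oh(n)$ nodes because $\X$ partitions $V(G)$ is not literally watertight, as empty bags are permitted; the paper is equally informal here, and in practice the decomposition algorithms of \cref{thm:approx-decomp,thm:nice-decomp} do produce $\Oh(n)$ nodes.
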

\begin{proof}
  We start by computing a treecut decomposition of width $2k$.
  By \cref{thm:approx-decomp} this can be done in time $k^{\Oh(k^2)}\cdot n^2$.
  Then we use \cref{thm:nice-decomp} to obtain a nice treecut decomposition in time $\Oh(n^3)$.
  We compute the set of valid records for each leaf of $\T$ in time $k^{\Oh(k^2)}\cdot n^2$ (\cref{lem:treecut-leaf}) and then for each non-leaf, by leaf-to-root recursion, in time $k^{\Oh(k^2)} \cdot n^3$ (\cref{lem:treecut-nonleaf}).
  We return a positive answer if and only if $D(r)=\{(\emptyset,\emptyset,\emptyset,\emptyset)\}$, where $r$ is the root of $\T$.
\end{proof}

\subsection{Edge-colored graphs and treewidth}\label{sec:edgecol-treewidth}

Our main result on properly colored paths and cycles in edge-colored graphs of bounded treewidth is as follows:
\begin{theorem}\label{thm:properly-tw}
Given an undirected graph $G$ with an edge coloring $\Ecol : E(G) \to [\ell]$
and a tree decomposition $(\T,\beta)$ of $G$ of width less than $k$,
one can verify if $G$ admits a properly colored Hamiltonian Cycle
in deterministic time $2^{\Oh(k)} \cdot \Oh(|V(G)|+|V(\T)|+\ell)$.
\end{theorem}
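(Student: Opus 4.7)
The plan is to combine a standard dynamic programming scheme for detecting Hamiltonian cycles on bounded-treewidth graphs with an algebraic encoding of the coloring constraint that avoids paying $\ell^k$ in the state space. First, I would convert the given tree decomposition $(\mathcal{T},\beta)$ to a nice tree decomposition of the same width in $\Oh(k\cdot|V(\mathcal{T})|)$ time, root it arbitrarily, and process it bottom-up. For the uncolored Hamiltonian cycle problem, a deterministic $2^{\Oh(k)}\cdot|V(G)|$ algorithm is available via the rank-based / representative-set approach of Bodlaender, Cygan, Kratsch and Nederlof, in which the DP table at each bag is indexed by \emph{abstract states}, namely a status function $\mathrm{st}\colon \beta(t)\to\{0,1,2\}$ (number of chosen cycle-edges incident to each bag vertex) together with a perfect matching $M$ on the ``dangling'' vertices $D_t:=\mathrm{st}^{-1}(1)$ indicating which pairs of endpoints are linked by a partial path, and where the table is reduced after each operation to a representative family of size $2^{\Oh(k)}$.

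The new ingredient is how to add the proper-coloring constraint. For each retained abstract state $s=(\mathrm{st},M)$, instead of storing a scalar, I would store a rank-one tensor $T_s=\bigotimes_{v\in D_s} f_v$ on the dangling vertices, where each $f_v\in R^{\ell}$ is a color vector indexed by $[\ell]$; the entry $f_v(c)$ is the (weighted) count of partial completions in which the dangling edge currently sticking out at $v$ has color~$c$. This representation uses only $\Oh(k\ell)$ storage per abstract state. The goal is to show that every elementary DP transition preserves the rank-one form and can be implemented in $\Oh(k\ell)$ extra time per state, so that the total running time is $2^{\Oh(k)}\cdot(|V(G)|+|V(\mathcal{T})|+\ell)$.

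The transitions would be realized as follows. Introduce-vertex steps leave the tensor unchanged (the new vertex has status $0$). Introduce-edge for edge $uv$ of color $c$: if the edge is not used in the cycle, the tensor is untouched; if it becomes a first edge at $u$, then $f_u$ is replaced by the indicator at $c$; if it \emph{closes} a dangling endpoint at $u$ (status $1\to 2$), the proper-coloring constraint ``old color $c'\neq c$'' is enforced by the rewriting
\[
\sum_{c'\neq c} f_u(c')\;=\;\Bigl(\sum_{c'} f_u(c')\Bigr)-f_u(c),
\]
a local $\Oh(\ell)$ contraction that removes $u$ from $D$ while preserving the product form. Forget steps require $\mathrm{st}(v)=2$ and do not touch the tensor. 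The crucial case is join of two children with tensors $\bigotimes f_v$ and $\bigotimes g_v$: vertices dangling in exactly one child are passed through, while for every vertex $v$ dangling in \emph{both} children (its two dangling edges are glued, turning $v$ from status $1+1$ to status $2$), the ``different colors'' constraint is enforced by the scalar contraction
\[
\Bigl(\sum_c f_v(c)\Bigr)\Bigl(\sum_c g_v(c)\Bigr)-\sum_c f_v(c)g_v(c),
\]
again computable in $\Oh(\ell)$. A positive answer to the original problem corresponds to the representative family at the root containing the empty-state entry with nonzero tensor value.

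The main obstacle I expect is reconciling the rank-one color tensor with the rank-based reduction of the matching part: a priori, within a fixed abstract state $s$ different partial completions can correlate the colors of distinct dangling endpoints that are joined by paths inside the partial solution, which would make the ``true'' color tensor of rank higher than one. The delicate part of the proof is therefore to show that the representative-set machinery can be set up so that, up to the canonical equivalence it uses, each state admits a product decomposition of the color tensor; equivalently, one should reformulate the cut-and-count / representative-set DP so that the weight of each partial solution factors as $\prod_{v\in D_s} g_v(c_v)$ for some vertex-local functions $g_v$ depending only on which colored edges have been used so far incident to~$v$. Once this factorization lemma is in place, the four transition rules above fit together and the claimed running time follows from multiplying the number of representative abstract states ($2^{\Oh(k)}$) by the per-state cost ($\Oh(k\ell)$) and summing over the $\Oh(|V(G)|+|V(\mathcal{T})|)$ nodes of the decomposition.
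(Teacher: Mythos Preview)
Your overall framework (rank-based DP plus an algebraic gadget for colors) is right, but the specific encoding you propose does not work, and the obstacle you flag is fatal rather than merely delicate. Storing only the marginal color vectors $f_v$ and tensoring them loses information: if two dangling vertices $u,v$ are the endpoints of one partial path, the pair $(\text{color at }u,\text{color at }v)$ is determined by which path was chosen, so the true color tensor on $D_s$ is almost never rank one. For a concrete failure, take a bag where $u,v$ are connected inside by either a single edge of color~$1$ or a two-edge path with endpoint colors $(2,3)$; the true tensor is supported on $\{(1,1),(2,3)\}$, but your product $f_u\otimes f_v$ would put mass on $(1,3)$ and $(2,1)$ as well. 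No rearrangement of the representative-set machinery can manufacture a product decomposition here, so the ``factorization lemma'' you hope for simply does not hold, and the join rule you wrote (which presupposes independence of $f_v$ and $g_v$ across $v$) gives wrong counts.

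The paper's proof keeps the naive state $(f,M,\zeta)$ with the \emph{actual} color $\zeta(v)$ at each dangling vertex --- so a priori $\ell^{|Z|}$ states --- and instead factorizes the \emph{compatibility predicate} rather than the partial solutions. Over a field $\mathbb{F}_{2^a}$ with $2^a>\ell$, the test ``$\zeta_P(v)\neq\zeta_Q(v)$ for all $v$'' is nonvanishing of $\prod_{v\in Z}(\zeta_P(v)-\zeta_Q(v))$; as a polynomial in the variables $(\zeta_P(v))_{v\in Z}$ this is multilinear with only $2^{|Z|}$ monomials. Tensoring the coefficient vector of this polynomial with the usual cut matrix $\mathbf{C}$ (which handles the matching $M$) yields a matrix $\mathbf{E}$ with $2^{|Z|-1}\cdot 2^{|Z|}\le 4^k$ columns, and Lemma~\ref{lem:properly-tw3} shows that any row basis of $\mathbf{E}$ is a representative set. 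Gaussian elimination then cuts the number of colored traces to at most $4^k$ at every node, giving the $2^{\Oh(k)}$ bound without $\ell$ in the exponent. The conceptual shift you were missing is: factor the bilinear ``fit'' form $\tau_P\times\tau_Q\to\mathbb{F}$, not the per-state color tensor.
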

The main highlight of Theorem~\ref{thm:properly-tw} is the lack of the dependency on $\ell$
in the exponential part of the running time bound.
For sake of simplicity, we do not analyze in detail the base of the exponent
in the running time bound of the algorithm of Theorem~\ref{thm:properly-tw}.

The structure of the algorithm of Theorem~\ref{thm:properly-tw} follows the
outline of the typical \emph{rank-based} algorithms for connectivity problems
on graphs of bounded treewidth~\cite{BodlaenderCKN15}.
We refer to~\cite[Chapter~11]{cygan2015parameterized} for an exposition of the rank-based approach in the case of \textsc{Steiner Tree} and to~\cite{ZiobroP19} for a different short exposition of the presented approach for
\textsc{Hamiltonian Cycle}.

We start with describing a naive approach and then show how to reduce its complexity.

A \emph{separation} of a graph $G$ is a pair $(A,B)$ of subgraphs of $G$
such that each edge of $G$ belongs to exactly one of the subgraphs $A$ or $B$.
The \emph{order} of the separation $(A,B)$ is $|V(A) \cap V(B)|$. 
A \emph{partial solution} for a separation $(A,B)$ is a subgraph $P$ of $A$
that is a family of vertex-disjoint paths with both endpoints in $V(A) \cap V(B)$ and 
such that
every vertex of $V(A) \setminus V(B)$ is on one of the paths.
Clearly, if $C$ is a Hamiltonian cycle in $G$ and $(A,B)$ is a separation in $G$
with $V(B) \setminus V(A) \neq \emptyset$,
then $C \cap A := (V(C) \cap V(A), E(C) \cap E(A))$ is a partial solution for $(A,B)$. 
If $P$ is a partial solution for $(A,B)$ and $Q$ is a partial solution for $(B,A)$,
then we say that \emph{$P$ and $Q$ fit each other} if $P \cup Q$ is a Hamiltonian cycle in $G$.

The \emph{trace} of a partial solution $P$ for $(A,B)$ is a pair $(f_P, M_P)$ where
\begin{itemize}
\item $f_P : V(A) \cap V(B) \to \{0,1,2\}$ and $f_P(v)$ is the degree of $v$ in $P$;
\item $M_P$ is a matching on the vertex set $f_P^{-1}(1)$, matching endpoints
of the paths of $P$.
\end{itemize}
Note that the set of possible traces for $(A,B)$ is the same as the set of possible traces for $(B,A)$. Two traces $(f_P,M_P)$ and $(f_Q,M_Q)$ \emph{fit each other} if
\begin{itemize}
\item $f_P(v) + f_Q(v) = 2$ for every $v \in V(A) \cap V(B)$; and
\item $M_P \uplus M_Q$ is a single cycle on vertex set $f_P^{-1}(1) = f_Q^{-1}(1)$.
\end{itemize}
The following observation is straightforward.
\begin{lemma}\label{lem:fit1}
If $P$ is a partial solution for $(A,B)$ and $Q$ is the partial solution for $(B,A)$,
   then $P$ fits $Q$ if and only if the trace of $P$ fits the trace of $Q$.
\end{lemma}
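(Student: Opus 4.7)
The statement is an iff, and both directions are essentially bookkeeping on the cycle/path structure, so I would organise the proof around two direct implications with the same degree-splitting observation doing most of the work.

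For the forward direction, assume $P \cup Q$ is a Hamiltonian cycle $C$ in $G$. Since the edge sets $E(A)$ and $E(B)$ are disjoint, for every vertex $v$ we have $\deg_C(v) = \deg_P(v) + \deg_Q(v)$, and $C$ being a Hamiltonian cycle gives $\deg_C(v) = 2$. Restricting to $v \in V(A) \cap V(B)$ yields $f_P(v) + f_Q(v) = 2$. For the matching condition, I would traverse $C$ and use the following contraction: replace each maximal $P$-subpath by an edge between its endpoints (which lie in $f_P^{-1}(1)$) and similarly for $Q$-subpaths. The cycle $C$ then becomes a single alternating cycle on $f_P^{-1}(1) = f_Q^{-1}(1)$ whose $P$-edges are exactly $M_P$ and whose $Q$-edges are exactly $M_Q$, so $M_P \uplus M_Q$ forms one cycle.

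For the backward direction, assume the traces fit. I first check that every vertex of $G$ has degree $2$ in $P \cup Q$: a vertex of $V(A) \setminus V(B)$ is internal to some path of $P$ by the definition of a partial solution, so it has degree $2$ in $P$ and degree $0$ in $Q$; symmetrically for $V(B) \setminus V(A)$; for $v \in V(A) \cap V(B)$ the assumption $f_P(v) + f_Q(v) = 2$ gives the same conclusion. Thus $P \cup Q$ is a disjoint union of cycles in $G$. To rule out more than one cycle, I would use the same contraction as above in reverse: each connected component of $P \cup Q$ corresponds to a cycle of the multigraph $M_P \uplus M_Q$ on $f_P^{-1}(1)$ (a component using no vertex of $f_P^{-1}(1)$ would have to be a cycle lying entirely in $P$ or entirely in $Q$, which is impossible because every path of $P$, resp.\ $Q$, has both endpoints in $V(A) \cap V(B)$ where the degree in $P \cup Q$ splits). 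Since $M_P \uplus M_Q$ is a single cycle by assumption, $P \cup Q$ is a single cycle; since it covers all vertices, it is a Hamiltonian cycle.

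The only mild subtlety I expect is making the contraction argument rigorous when some vertices of $V(A)\cap V(B)$ have $f_P(v) = 2$ (so they are internal to $P$ and not touched by $Q$) or $f_P(v) = 0$ (so they are internal to $Q$); these simply do not appear in $f_P^{-1}(1)$ and therefore do not participate in the matching side of the trace, exactly as one wants. Degenerate zero-edge paths of $P$ or $Q$, if permitted at all, correspond to $f_P(v) = 0$ vertices and are absorbed into the $Q$-side (or vice versa), so they cause no trouble. No serious obstacle beyond this case analysis.
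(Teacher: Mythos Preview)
Your proposal is correct. Note that the paper does not actually give a proof of this lemma: it merely prefaces the statement with ``The following observation is straightforward.'' Your write-up is precisely the natural way to fill in these details---degree-counting at the interface $V(A)\cap V(B)$ to get $f_P+f_Q=2$, and contracting maximal $P$- and $Q$-subpaths to identify the cycle structure of $P\cup Q$ with the multigraph $M_P\uplus M_Q$ on $f_P^{-1}(1)$---so there is nothing to compare.
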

Lemma~\ref{lem:fit1} is the base of the naive algorithm for Hamiltonian cycle on graphs
of bounded treewidth.
Given a tree decomposition $(\T,\beta)$, where $\T$ is a rooted tree, for a node $t \in V(\T)$
we use the following notation:
\begin{itemize}
\item $V_t$ is the union of $\beta(s)$ over all descendants $s$ of $t$ (including $t$),
\item $\bar{V}_t := (V(G) \setminus V_t) \cup \beta(t)$,
\item $G_t$ is the subgraph $G[V_t] \setminus E(G[\beta(t)])$,
\item $\bar{G}_t$ is the subgraph $G[\bar{V}_t]$.
\end{itemize}
Note that $(G_t,\bar{G}_t)$ is a separation. 
The algorithm, in the bottom-up fashion, computes for every $t \in V(\T)$ the family
$\mathcal{A}(t)$ of all traces $(f,M)$ for which there exists a partial solution
for $(G_t,\bar{G}_t)$ with trace $(f,M)$. 
If the width of the decomposition is less than $k$, then the number of possible traces
is $2^{\Oh(k \log k)}$, yielding $2^{\Oh(k \log k)} \cdot (|V(G)|+|V(\T)|)$ running time bound of the algorithm.

This naive algorithm can be easily adjusted to accommodate edge colors. Assume
that we are given an edge coloring $\Ecol : E(G) \to [\ell]$ and look for a properly 
colored Hamiltonian cycle. 
In the definition of a partial solution, we require all paths of $P$ to be properly colored.
Furthermore, partial solutions $P$ for $(A,B)$
and $Q$ for $(B,A)$ \emph{fit each other} if their union is a properly colored Hamiltonian cycle,
that is, for every vertex $v \in V(A) \cap V(B)$ that is an endpoint of both a path of $P$
and a path of $Q$, the edge of $P$ incident with $v$ and the edge of $Q$ incident with $v$
are of different colors. 
To accommodate this, the \emph{colored trace} for a partial solution $P$ is a triple $(f_P,M_P,\Tcol_P)$ where $f_P$ and $M_P$ have the meaning as before and $\Tcol_P : f_P^{-1}(1) \to [\ell]$
assigns to every vertex $v$ of degree $1$ in $P$ the color of the unique edge of $P$ incident
with $v$. 
Then, $(f_P, M_P, \Tcol_P)$ and $(f_Q,M_Q,\Tcol_Q)$ fit each other 
if and only if $(f_P,M_P)$ and $(f_Q,M_Q)$ agree as before and also $\Tcol_P(v) \neq \Tcol_Q(v)$
for every $v \in f_P^{-1}(1) = f_Q^{-1}(1)$. 
Again, we have a straightforward analog of Lemma~\ref{lem:fit2}:
\begin{lemma}\label{lem:fit2}
If $P$ is a partial solution for $(A,B)$ and $Q$ is the partial solution for $(B,A)$
in an edge-colored graph $G$,
   then $P$ fits $Q$ if and only if the colored trace of $P$ fits the colored trace of $Q$.
\end{lemma}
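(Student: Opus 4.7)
The plan is to reduce to \cref{lem:fit1} and then separately verify the proper-coloring condition at the gluing vertices. Write $R = P \cup Q$.

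For the forward direction, assume $P$ fits $Q$, so $R$ is a properly colored Hamiltonian cycle. In particular, $R$ is a Hamiltonian cycle, so by \cref{lem:fit1} the (uncolored) traces $(f_P,M_P)$ and $(f_Q,M_Q)$ fit each other; that is, $f_P(v)+f_Q(v)=2$ for every $v \in V(A) \cap V(B)$ and $M_P \uplus M_Q$ is a single cycle on $f_P^{-1}(1) = f_Q^{-1}(1)$. It remains to check that $\Tcol_P(v) \neq \Tcol_Q(v)$ for every $v \in f_P^{-1}(1)$. Such a vertex $v$ has degree exactly one in $P$ and degree exactly one in $Q$; the two edges of $R$ incident to $v$ are therefore the unique $P$-edge at $v$ and the unique $Q$-edge at $v$, whose colors are by definition $\Tcol_P(v)$ and $\Tcol_Q(v)$. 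Since $R$ is properly colored at $v$, these two colors must differ.

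For the backward direction, assume the colored traces fit. Then in particular $(f_P,M_P)$ and $(f_Q,M_Q)$ fit, so by \cref{lem:fit1} the graph $R = P \cup Q$ is a Hamiltonian cycle of $G$. It remains to argue that $R$ is properly colored, i.e., that for every vertex $v \in V(G)$ the two edges of $R$ incident to $v$ have distinct colors. There are three cases. If $v \in V(A) \setminus V(B)$, then both edges of $R$ at $v$ come from $P$, and since by assumption every path of $P$ is properly colored, the two colors differ. The case $v \in V(B) \setminus V(A)$ is symmetric. Finally, if $v \in V(A) \cap V(B)$, then $f_P(v)+f_Q(v)=2$, so either both edges of $R$ at $v$ come from one side (handled as in the previous case, using that paths of $P$ and of $Q$ are properly colored) or $f_P(v) = f_Q(v) = 1$, in which case one edge at $v$ has color $\Tcol_P(v)$ and the other $\Tcol_Q(v)$; by the colored-trace fitting condition these are distinct.

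No step is really a \emph{main obstacle}: once \cref{lem:fit1} is granted, the whole content is in the book-keeping at gluing vertices, which is exactly what the third coordinate of the colored trace was designed to encode. The only thing to be careful about is to handle the three cases for $v$ above and to recall that in the colored setting a partial solution is required to consist of properly colored paths, so the only new place where impropriety could arise is at vertices of $V(A) \cap V(B)$ with $f_P(v) = f_Q(v) = 1$.
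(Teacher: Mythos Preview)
Your proof is correct and is precisely the natural argument the paper has in mind: the paper does not actually prove this lemma but merely states it as ``a straightforward analog'' of \cref{lem:fit1}, and your reduction to \cref{lem:fit1} together with the case analysis at the gluing vertices is exactly that straightforward analog spelled out.
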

Note that there are at most $2^{\Oh(k \log k)} \ell^k$ possible colored traces for a separation
of order at most $k$. 
By following the standard dynamic programming algorithm for Hamiltonian cycle on
graphs of bounded treewidth (see e.g.~\cite{ZiobroP19}), we obtain
\begin{theorem}\label{thm:properly-tw-trivial}
Given an undirected graph $G$ with an edge coloring $\Ecol : E(G) \to [\ell]$
and a tree decomposition $(\T,\beta)$ of $G$ of width less than $k$,
one can verify if $G$ admits a properly colored Hamiltonian Cycle
in deterministic time $2^{\Oh(k(\log k + \log \ell))} \cdot \Oh(|V(G)|+|V(\T)|+\ell)$.
\end{theorem}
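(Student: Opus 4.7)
The plan is to implement a standard dynamic programming on a nice tree decomposition whose state at each node~$t$ is the family $\mathcal{A}(t)$ of colored traces $(f, M, \Tcol)$ realizable by partial solutions for $(G_t, \bar{G}_t)$; correctness at the root will then follow from Lemma~\ref{lem:fit2}. First, I would convert $(\T,\beta)$ into a nice tree decomposition of the same width in time $\Oh(k \cdot (|V(G)|+|V(\T)|))$, with the usual node types: empty-bag leaves, introduce-vertex, introduce-edge, forget-vertex, and join nodes. Bags at the root will be forced empty, so acceptance reduces to checking whether $(\emptyset,\emptyset,\emptyset) \in \mathcal{A}(r)$.

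The transitions extend the standard Hamiltonian-cycle treewidth DP with extra bookkeeping for $\Tcol$. A leaf sets $\mathcal{A}(t) = \{(\emptyset,\emptyset,\emptyset)\}$. An introduce-vertex node extends every child trace by $f(v):=0$. A forget-vertex node drops all traces with $f(v) \neq 2$ and restricts $f$, $M$, $\Tcol$ to $\beta(t) \setminus \{v\}$. An introduce-edge node for an edge $uv$ of color $c$ branches, per child trace, into ``skip'' and ``take''; taking requires $f(u), f(v) < 2$, and whenever $x \in \{u,v\}$ already has $f(x)=1$ in the child, I would additionally enforce $\Tcol(x) \neq c$ to preserve proper coloring, then promote $f(x)$ to~$2$, remove $x$ from the domain of $\Tcol$, and merge in $M$ the two path-segments whose endpoint was $x$. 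A join node iterates over pairs $(f_1, M_1, \Tcol_1), (f_2, M_2, \Tcol_2)$ from its two children and keeps those satisfying $f_1(v) + f_2(v) \leq 2$ for every $v \in \beta(t)$ and $\Tcol_1(v) \neq \Tcol_2(v)$ whenever $f_1(v) = f_2(v) = 1$, then merges the matchings accordingly while rejecting any combination that closes a strict sub-cycle.

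Correctness is a straightforward induction on the tree: each trace placed in $\mathcal{A}(t)$ is witnessed by a partial solution of $(G_t,\bar{G}_t)$, and conversely every partial solution's colored trace is recorded. At the root $r$ the separation is trivial, so by Lemma~\ref{lem:fit2} realizability of the empty colored trace is exactly the existence of a properly colored Hamiltonian cycle. For the running time, a colored trace on a bag of size at most $k$ is described by at most $3^k$ choices for $f$, at most $k!$ matchings $M$, and at most $\ell^k$ color functions $\Tcol$, so $|\mathcal{A}(t)| \leq 2^{\Oh(k \log k)} \ell^k = 2^{\Oh(k(\log k + \log \ell))}$. Introduce-vertex, introduce-edge and forget nodes each cost $|\mathcal{A}(t)| \cdot \mathrm{poly}(k)$; the bottleneck is a join node at cost $|\mathcal{A}(t_1)| \cdot |\mathcal{A}(t_2)|$, still $2^{\Oh(k(\log k + \log \ell))}$. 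Summing over the $\Oh(k\cdot(|V(G)|+|V(\T)|))$ nodes of the nice decomposition and including a one-time $\Oh(\ell)$ cost for reading and hashing the colors yields the claimed bound.

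The only genuinely delicate part is the bookkeeping for $M$: at an introduce-edge node the promotion of $f(x)$ from $1$ to $2$ must correctly fuse the two partial path-segments ending at $x$ (updating the two newly exposed endpoints' entries in~$M$), and at a join node two matchings must be combined without accidentally closing a proper sub-cycle unless the whole solution is complete. Both are standard manipulations in the Hamiltonicity-on-treewidth algorithm of~\cite{ZiobroP19}, and the addition of $\Tcol$ introduces only a pointwise inequality check at each ``merging'' event, so no new algorithmic obstacle arises relative to the uncolored case.
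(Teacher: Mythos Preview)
Your proposal is correct and follows essentially the same approach as the paper. In fact, the paper is even terser: it merely counts the colored traces as $2^{\Oh(k\log k)}\ell^k$ and then states that ``by following the standard dynamic programming algorithm for Hamiltonian cycle on graphs of bounded treewidth (see e.g.~\cite{ZiobroP19})'' the bound follows, whereas you have actually spelled out the node-by-node transitions and the join bottleneck.
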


We now introduce the rank-based approach.
Fix a separation $(A,B)$ and fix a function $f: V(A) \cap V(B) \to \{0,1,2\}$. 
Let $Z := f^{-1}(1)$; for every trace $(f,M)$, $M$ is a matching on vertex set $Z$.
A \emph{cut} of $Z$ is an unordered pair $\{Z_1,Z_2\}$ such that $Z_1$ and $Z_2$ form a partition
of $Z$; a cut \emph{agrees} with a matching $M$ on $Z$ if no edge of $M$ connects the two sides
of a cut. Note that there are $2^{|Z|-1}$ cuts.

For two matchings $M_1$ and $M_2$ on $Z$, let $\mathbf{M}[M_1,M_2]$ be $1$ if
$M_1 \uplus M_2$ is a single cycle and $0$ otherwise.
For a matching $M$ and a cut $C$, let $\mathbf{C}[M,C]$ be $1$ if $M$ agrees with $C$
and $0$ otherwise. The crux of the rank-based approach lies in the following identity
(see e.g.\ Lemma~11.9 of~\cite{cygan2015parameterized}):
\begin{lemma}\label{lem:rank-based}
If $\mathbf{M}$ and $\mathbf{C}$ are treated as matrices over $\mathbb{F}_2$, then
$$\mathbf{M} = \mathbf{C} \cdot \mathbf{C}^T.$$
\end{lemma}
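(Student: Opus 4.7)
The plan is to verify the claimed matrix identity entry by entry. Fix matchings $M_1, M_2$ on $Z$. By definition of matrix multiplication over $\mathbb{F}_2$,
\[
(\mathbf{C}\mathbf{C}^T)[M_1, M_2] \;=\; \sum_{C} \mathbf{C}[M_1, C]\cdot \mathbf{C}[M_2, C] \pmod 2,
\]
which counts, modulo $2$, the number of cuts $C$ of $Z$ that simultaneously agree with $M_1$ and with $M_2$. So the goal is to show this count has parity $1$ if and only if $M_1 \uplus M_2$ is a single cycle.

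To do so, I would first analyze the structure of $H := M_1 \uplus M_2$. Since $M_1$ and $M_2$ are perfect matchings on $Z$ (every vertex of $Z = f^{-1}(1)$ has degree $1$ in each), $H$ is a $2$\nobreakdash-regular multigraph on $Z$, whose connected components are therefore cycles (an edge shared by $M_1$ and $M_2$ is treated as a length-$2$ cycle in the multigraph). Let $k$ denote the number of such components.

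Next, I would show that a cut $\{Z_1, Z_2\}$ agrees with both $M_1$ and $M_2$ exactly when no edge of $H$ crosses the cut, which in turn is equivalent to each connected component of $H$ lying entirely within $Z_1$ or entirely within $Z_2$. The number of ways to independently assign each of the $k$ components to one of the two sides is $2^k$, and since cuts are unordered pairs we obtain exactly $2^{k-1}$ such cuts. The one mild subtlety here is to remain consistent with the convention (implicit in the count $2^{|Z|-1}$ of all cuts stated in the paper) that empty sides are permitted, so that the halving is clean for all $k \ge 1$.

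Finally, reducing modulo $2$, the quantity $2^{k-1}$ is $1$ when $k = 1$ and $0$ when $k \ge 2$. Since $\mathbf{M}[M_1, M_2]$ is defined to be $1$ precisely when $M_1 \uplus M_2$ is a single cycle, i.e., $k = 1$, this matches $(\mathbf{C}\mathbf{C}^T)[M_1, M_2]$ entry for entry, proving $\mathbf{M} = \mathbf{C}\cdot \mathbf{C}^T$. The argument is essentially a parity computation rather than a conceptually hard step; the one place to be careful is the correspondence between agreeing cuts and component-wise two-colorings of $H$, so that the factor $2^{k-1}$ (and not, say, $2^k$ or $2^{k-1}-1$) is obtained.
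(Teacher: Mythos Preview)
Your argument is correct and is exactly the standard proof of this identity. Note that the paper itself does not give a proof of this lemma; it simply cites Lemma~11.9 of~\cite{cygan2015parameterized}, and the argument you have written is essentially the one found there.
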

Let $\mathcal{A}$ be a family of traces for $(A,B)$.
We say that $\mathcal{A}' \subseteq \mathcal{A}$ \emph{represents} $\mathcal{A}$
if for every trace $\tau$, if there exists a trace $\tau_1 \in \mathcal{A}$ fitting $\tau$,
then there exists a trace $\tau_2 \in \mathcal{A}'$ fitting $\tau$.

Assume that all elements of $\mathcal{A}$ have $f$ as the first coordinate.
Let $\mathbf{M}[\mathcal{A},\cdot]$ be the submatrix of $\mathbf{M}$ induced by the rows
of the matchings $\{M~|~(f,M) \in \mathcal{A}\}$. 
Then, if $\mathcal{A}' \subseteq \mathcal{A}$ is such that
$\mathbf{M}[\mathcal{A}',\cdot]$ spans the same subspace as $\mathbf{M}[\mathcal{A},\cdot]$,
then $\mathcal{A}'$ represents $\mathcal{A}$. 
By the factorization of Lemma~\ref{lem:rank-based}, we infer that:
\begin{lemma}\label{lem:rank-based2}
Assume that $\mathcal{A}$ is a family of traces for $(A,B)$ with the same first coordinate
$f$. If $\mathcal{A}' \subseteq \mathcal{A}$ is such that 
$\mathbf{C}[\mathcal{A}',\cdot]$ spans the same subspace as $\mathbf{C}[\mathcal{A},\cdot]$,
then $\mathcal{A}'$ represents $\mathcal{A}$. 
\end{lemma}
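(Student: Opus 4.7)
The plan is to unfold the definitions of ``fits'' and ``represents'' into a purely matrix-theoretic statement, and then apply the factorization $\mathbf{M} = \mathbf{C}\mathbf{C}^T$ from \cref{lem:rank-based}. Since every element of $\mathcal{A}$ has the same first coordinate~$f$, a trace $\tau$ can only possibly fit such an element if $\tau = (f', M)$ for the unique $f'$ satisfying $f(v) + f'(v) = 2$ on $V(A) \cap V(B)$, so in particular $f'^{-1}(1) = f^{-1}(1) = Z$ and $M$ is a matching on $Z$. Under this reduction, the condition that $(f, M_1) \in \mathcal{A}$ fits $\tau$ becomes exactly $\mathbf{M}[M_1, M] = 1$ (in $\mathbb{F}_2$), so showing that $\mathcal{A}'$ represents $\mathcal{A}$ reduces to: whenever some row $M_1$ of $\mathbf{M}[\mathcal{A}, \cdot]$ has a $1$ in column~$M$, some row $M_2$ of $\mathbf{M}[\mathcal{A}', \cdot]$ also has a $1$ in column~$M$.

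Next, I would exploit the spanning hypothesis. Since $\mathbf{C}[\mathcal{A}',\cdot]$ spans the same $\mathbb{F}_2$-subspace as $\mathbf{C}[\mathcal{A},\cdot]$, for every $M_1$ with $(f, M_1) \in \mathcal{A}$ there exists a subset $S \subseteq \{M' \colon (f, M') \in \mathcal{A}'\}$ such that
\[
\mathbf{C}[M_1, \cdot] \;=\; \sum_{M_2 \in S} \mathbf{C}[M_2, \cdot]
\]
as vectors in $\mathbb{F}_2^{2^{|Z|-1}}$. Right-multiplying this identity by the column $\mathbf{C}[M,\cdot]^T$ and invoking \cref{lem:rank-based} (so that $\mathbf{C}\mathbf{C}^T = \mathbf{M}$ entry-wise) yields
\[
\mathbf{M}[M_1, M] \;=\; \sum_{M_2 \in S} \mathbf{M}[M_2, M] \quad \text{in } \mathbb{F}_2.
\]

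Finally, I would close the argument by the pigeonhole-over-$\mathbb{F}_2$ step. If $\tau_1 = (f, M_1) \in \mathcal{A}$ fits $\tau = (f', M)$, the left-hand side above equals $1$, so the sum on the right-hand side is nonzero in $\mathbb{F}_2$; hence there exists at least one $M_2 \in S$ with $\mathbf{M}[M_2, M] = 1$. Unfolding, $\tau_2 := (f, M_2) \in \mathcal{A}'$ and $M_2 \uplus M$ is a single cycle on~$Z$, so $\tau_2$ fits $\tau$, as required.

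I do not anticipate a genuine obstacle here: once the fitting condition is translated into the matrix entry $\mathbf{M}[M_1, M]$ and the spanning hypothesis is applied through the factorization, the conclusion is immediate. The only point to be careful with is that the first-coordinate compatibility between $\mathcal{A}$ and $\tau$ is forced (all rows share the same $f$), so the reduction to a single $(f,f')$-block, where $\mathbf{M}$ and $\mathbf{C}$ are actually defined, is justified.
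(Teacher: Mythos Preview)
Your proof is correct and follows the same approach as the paper: the paper derives the lemma in one line from the factorization $\mathbf{M}=\mathbf{C}\mathbf{C}^T$ of \cref{lem:rank-based}, noting just before the lemma that equal row spans for $\mathbf{M}$ already yield representation, and that equal row spans for $\mathbf{C}$ imply equal row spans for $\mathbf{M}=\mathbf{C}\mathbf{C}^T$. Your argument is simply a more explicit, entry-wise unfolding of the same idea.
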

Consequently, we can improve the naive algorithm for the \textsc{Hamiltonian Cycle} problem
as follows.
At node $t \in V(\T)$, replace $\mathcal{A}(t)$ with a subset $\mathcal{A}'(t)$
representing $\mathcal{A}(t)$ as folows: for every possible $f : \beta(t) \to \{0,1,2\}$,
restrict $\mathcal{A}(t)$ to $\mathcal{A}(t,f)$ consisting of traces with the first coordinate $f$, 
compute $\mathcal{A}'(t,f) \subseteq \mathcal{A}(t,f)$ representing $\mathcal{A}(t,f)$
using Lemma~\ref{lem:rank-based2} and a Gaussian elimination
on $\mathbf{C}[\mathcal{A}(t,f),\cdot]$ over $\mathbb{F}_2$,
and declare $\mathcal{A}'(t) = \bigcup_{f} \mathcal{A}'(t,f)$ to be a set
representing $\mathcal{A}(t)$. 
Note that $\mathcal{A}'(t,f)$ is of size at most $2^{|f^{-1}(1)|-1}$ as the number
of columns of $\mathbf{C}$ is $2^{|f^{-1}(1)|-1}$; hence $\mathcal{A}'(t)$ is
of size at most
$$\sum_{k=0}^{|\beta(t)|} \binom{|\beta(t)|}{k} 2^{k-1} 2^{|\beta(t)|-k} \leq 4^{|\beta(t)|} \leq 4^k.$$

By now-standard methods (see, e.g., the exposition of~\cite{ZiobroP19}), one can compute
in a bottom-up fashion representatives $\mathcal{A}'(t)$ for $t \in V(\T)$;
the computation at note $t$ takes into account the representatives
at children of $t$ and takes time $2^{\Oh(k)}$ per child. 

By following the same outline,
to prove Theorem~\ref{thm:properly-tw}, it suffices to show the following:
\begin{lemma}\label{lem:properly-tw}
Let $G$ be a graph with edge coloring $\Ecol : E(G) \to [\ell]$,
$(A,B)$ be a separation of order $k$,
and $\mathcal{A}$ be a family of colored traces for $(A,B)$.
Then, there exists a polynomial-time algorithm
that, given $\mathcal{A}$ and integers $k$ and $\ell$, 
finds a subset $\mathcal{A}' \subseteq \mathcal{A}$
that represents $\mathcal{A}$ and is of size at most $6^k$.
\end{lemma}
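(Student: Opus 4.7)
The plan is to lift the cut-matrix factorization of \cref{lem:rank-based2} to the colored setting by replacing the Boolean ``colors differ'' test (which has rank $\Theta(\ell)$) by a rank-$2$ polynomial test; this should convert the $4^k$ bound of the uncolored case into~$6^k$.

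First, I would work over a characteristic-$2$ extension $\mathbb{F}$ of $\mathbb{F}_2$ with $|\mathbb{F}| \geq \ell$ (say, $\mathbb{F}_{2^{\lceil \log_2 \ell\rceil}}$), assign distinct elements $X_1, \ldots, X_\ell \in \mathbb{F}$ to the $\ell$ colors, and define a symbolic fit matrix $\mathbf{F}$ over $\mathbb{F}$, indexed by colored traces, by
\[
  \mathbf{F}[\tau_P, \tau_Q] = \mathbf{1}\bigl[f_P + f_Q = 2 \text{ pointwise}\bigr] \cdot \mathbf{M}[M_P, M_Q] \cdot \prod_{v \in f_P^{-1}(1)} \bigl(X_{\Tcol_P(v)} - X_{\Tcol_Q(v)}\bigr).
\]
Because the $X_c$ are distinct, $\mathbf{F}[\tau_P, \tau_Q] \neq 0$ iff $\tau_P$ fits $\tau_Q$. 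The argument of \cref{lem:rank-based2} then applies verbatim: if $\mathcal{A}' \subseteq \mathcal{A}$ has the property that its rows in $\mathbf{F}$ span those of $\mathcal{A}$, then $\mathcal{A}'$ represents $\mathcal{A}$, because any nonzero entry $\mathbf{F}[\tau_1, \tau^\star]$ with $\tau_1 \in \mathcal{A}$ forces a nonzero entry in some row of $\mathcal{A}'$ at column $\tau^\star$.

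The main step will be bounding $\operatorname{rank}_{\mathbb{F}}(\mathbf{F})$. Since $\mathbf{F}$ vanishes outside the blocks where $f_P + f_Q = 2$ pointwise, I would fix such a block, set $Z = f_P^{-1}(1) = f_Q^{-1}(1)$ with $m = |Z|$, and reindex rows by $(M_P, \Tcol_P)$ and columns by $(M_Q, \Tcol_Q)$; the block is then the tensor product
\[
  \mathbf{M}|_Z \;\otimes\; \bigotimes_{v \in Z} A_v, \qquad A_v[c, c'] = X_c - X_{c'}.
\]
By \cref{lem:rank-based}, which is valid over any characteristic-$2$ field, $\operatorname{rank}_{\mathbb{F}}(\mathbf{M}|_Z) \leq 2^{m-1}$. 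Each $A_v = X\mathbf{1}^T - \mathbf{1}X^T$ with $X = (X_1, \ldots, X_\ell)^T$ has rank at most~$2$. Multiplicativity of rank under tensor products gives block rank at most $2^{m-1} \cdot 2^m = 2^{2m-1}$. There are $\binom{k}{m} 2^{k-m}$ choices of $f_P$ with $|f_P^{-1}(1)| = m$, so summing yields
\[
  \operatorname{rank}_{\mathbb{F}}(\mathbf{F}) \leq \sum_{m=0}^{k} \binom{k}{m}\, 2^{k-m} \cdot 2^{2m-1} = \tfrac{1}{2}(2 + 4)^k = \tfrac{6^k}{2}.
\]

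On the algorithmic side, I would use the explicit factorizations of $\mathbf{M}$ and each $A_v$ to construct, for every $\tau \in \mathcal{A}$, an explicit vector $v_\tau \in \mathbb{F}^D$ of dimension $D = \Oh(6^k)$ by combining the row $\mathbf{C}[M_\tau, \cdot]$ of the cut matrix with one of the two rank-$1$ factors of each $A_v$ evaluated at $X_{\Tcol_\tau(v)}$; a symmetric construction produces a vector $w_{\tau_Q}$ with $\mathbf{F}[\tau_P, \tau_Q] = v_{\tau_P} \cdot w_{\tau_Q}$. Running Gaussian elimination over $\mathbb{F}$ on $(v_\tau)_{\tau \in \mathcal{A}}$ extracts a subset of size at most $\operatorname{rank}_{\mathbb{F}}(\mathbf{F}) \leq 6^k$ whose rows span the rest, giving the required $\mathcal{A}'$ in polynomial time. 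The main conceptual obstacle is exactly the choice of $A_v$: the naive Boolean indicator $\mathbf{1}[c \neq c']$ has rank $\Theta(\ell)$ and would destroy the independence from $\ell$, whereas $X_c - X_{c'}$ preserves the nonzero-iff-differ property while reducing the per-vertex rank contribution to~$2$.
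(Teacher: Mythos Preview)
Your proposal is correct and follows essentially the same route as the paper: the paper also works over $\mathbb{F}_{2^a}$ with $2^a>\ell$, encodes the color-compatibility test as $\prod_{v\in Z}(\Tcol_P(v)-\Tcol_Q(v))$, and tensors the resulting $2^{|Z|}$-column coefficient matrix with the cut matrix~$\mathbf{C}$ to obtain a $2^{2|Z|-1}$-column matrix on which Gaussian elimination is run. Your presentation via the rank-$2$ matrices $A_v=X\mathbf{1}^T-\mathbf{1}X^T$ is just the tensor-factor view of the paper's monomial-coefficient matrix~$\mathbf{D}$, and your block-sum $\sum_m\binom{k}{m}2^{k-m}2^{2m-1}=6^k/2$ makes explicit the step the paper leaves implicit when deducing \cref{lem:properly-tw} from \cref{lem:properly-tw2}.
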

By partitioning $\mathcal{A}$ according to the first coordinate, it suffices to prove the following:
\begin{lemma}\label{lem:properly-tw2}
Let $G$ be a graph with edge coloring $\Ecol : E(G) \to [\ell]$,
$(A,B)$ be a separation of order $k$,
$f : V(A) \cap V(B) \to \{0,1,2\}$,
and $\mathcal{A}$ be a family of colored traces for $(A,B)$ with the first coordinate $f$.
Then, there exists an algorithm
that, given $\mathcal{A}$, $f$, and integers $k$ and $\ell$, 
  in time polynomial in the input size and $2^k$,
finds a subset $\mathcal{A}' \subseteq \mathcal{A}$
that represents $\mathcal{A}$ and is of size at most $4^k$.
\end{lemma}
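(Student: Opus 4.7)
The plan is to follow the rank-based template established in the proof of Lemma~\ref{lem:rank-based2}: construct a factorization $\mathbf{M}_{\text{col}} = U V^T$ over $\mathbb{F}_2$ in which the matrix $U|_\mathcal{A}$ has $\mathbb{F}_2$-rank at most $4^{|Z|}\le 4^k$, where $Z := f^{-1}(1)$. Then selecting $\mathcal{A}' \subseteq \mathcal{A}$ as a subset of rows forming a basis of the row span of $U|_\mathcal{A}$ gives the desired representation: for any $\tau_1 \in \mathcal{A}$, if $U[\tau_1,\cdot] = \sum_{\tau_2 \in \mathcal{A}''} U[\tau_2,\cdot]$ for some $\mathcal{A}'' \subseteq \mathcal{A}'$ over $\mathbb{F}_2$, then $\sum_{\tau_2 \in \mathcal{A}''} \mathbf{M}_{\text{col}}[\tau_2,\tau] = \mathbf{M}_{\text{col}}[\tau_1,\tau]$ for every partner $\tau$, so by the standard XOR-to-OR argument any fit of $\tau_1$ with $\tau$ forces some $\tau_2 \in \mathcal{A}''$ to fit $\tau$ as well.

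To build $U$, I would first split $\mathbf{M}_{\text{col}}[(M,\Tcol),(M',\Tcol')] = \mathbf{M}[M,M'] \cdot \mathbf{D}[\Tcol,\Tcol']$ with $\mathbf{D}[\Tcol,\Tcol'] = \prod_{v \in Z}[\Tcol(v)\neq\Tcol'(v)]$, handle the matching part via the cut-factorization $\mathbf{M} = \mathbf{C}\mathbf{C}^T$ from Lemma~\ref{lem:rank-based} (contributing $2^{|Z|-1}$ columns indexed by cuts of $Z$), and apply the $\mathbb{F}_2$-identity $[c\neq c'] = 1 + [c = c']$ to rewrite the coloring part:
\[
\mathbf{D}[\Tcol,\Tcol'] = \prod_{v \in Z}(1 + [\Tcol(v)=\Tcol'(v)]) = \sum_{X \subseteq Z}[\Tcol|_X = \Tcol'|_X] = \sum_{X}\sum_{g : X \to [\ell]} [\Tcol|_X = g]\cdot[\Tcol'|_X = g].
\]
Combining the two layers yields $\mathbf{M}_{\text{col}} = W W^T$, with $W$ indexed by triples $(C,X,g)$ and entries $W[(M,\Tcol),(C,X,g)] = \mathbf{C}[M,C]\cdot[\Tcol|_X = g]$.

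The main obstacle --- and the principal technical step --- is to bound the $\mathbb{F}_2$-rank of $W|_\mathcal{A}$ by $4^{|Z|}$ independently of $\ell$. The naive column count is $2^{|Z|-1}\cdot(\ell+1)^{|Z|}$, which scales with $\ell$. The intuition for why we can do better is that, for each fixed pair $(C,X)$, the columns $\{(C,X,g)\}_{g: X\to[\ell]}$ form a family of disjoint indicators on $\mathcal{A}$: each row $(M,\Tcol)$ with $\mathbf{C}[M,C]=1$ contributes a nonzero entry in exactly one column of this block (the unique $g = \Tcol|_X$), so the block contributes to the row span only through the partition of $\{(M,\Tcol) \in \mathcal{A} : \mathbf{C}[M,C]=1\}$ by the value of $\Tcol|_X$. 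Tracking the $\mathbb{F}_2$-linear dependencies between these partition indicators across different $X$'s and combining with the cut structure should give a contribution of $2^{|Z|+1}$ per matching cut, for a total effective rank bounded by $2^{|Z|-1}\cdot 2^{|Z|+1} = 4^{|Z|}$. I expect this bookkeeping to be the delicate part of the argument, since universally $\mathbf{D}$ can have $\mathbb{F}_2$-rank as large as $\ell^{|Z|}$, so the bound must genuinely exploit the fact that $\mathcal{A}$ is finite and that we only need to represent the row span, not realize $\mathbf{D}$ as a whole.

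For the algorithm, construct $W|_\mathcal{A}$ explicitly, keeping only the columns $(C,X,g)$ with $g = \Tcol|_X$ for some $(M,\Tcol) \in \mathcal{A}$ (at most $|\mathcal{A}|$ per $(C,X)$, giving $\Oh(4^{|Z|}|\mathcal{A}|)$ columns in total), then perform Gaussian elimination over $\mathbb{F}_2$ to extract a basis of the row span; the running time is polynomial in $|\mathcal{A}|$, $\ell$, and $2^k$, and the resulting $\mathcal{A}'$ has size at most $4^{|Z|} \le 4^k$ as required.
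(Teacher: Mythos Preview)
Your overall strategy---factor the ``colored fit'' matrix and bound its rank---is sound, and the reduction to a basis of the row span is exactly the right mechanism.  However, the step you flag as ``the delicate part'' is not merely delicate; as stated over $\mathbb{F}_2$ it is false, and this is a genuine gap.

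Take $|Z|=2$ with the unique perfect matching, and let $\mathcal{A}$ consist of all $\ell^2$ colorings $\Tcol:\{u,v\}\to[\ell]$.  In your matrix $W$, the block of columns with $X=Z$ has entries $[\Tcol|_Z=g]$, i.e.\ it is the $\ell^2\times\ell^2$ identity.  Hence the $\mathbb{F}_2$-row rank of $W|_{\mathcal A}$ is $\ell^2$, while $4^{|Z|}=16$; for any $\ell>4$ your Gaussian elimination returns a basis of size $\ell^2$, not $16$.  More generally, the $\mathbb{F}_2$-rank of the color matrix $\mathbf D$ alone already grows with~$\ell$ (for even $\ell$ the $\ell\times\ell$ matrix $J-I$ is invertible over $\mathbb{F}_2$, so the tensor over $|Z|$ coordinates has rank $\ell^{|Z|}$).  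The intuition you sketch---that the disjoint-indicator structure across different $X$ collapses to $2^{|Z|+1}$ effective columns per cut---cannot hold, because for $X=Z$ the indicators are literally the standard basis vectors and contribute full rank.

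The paper's proof avoids this obstruction by leaving $\mathbb{F}_2$.  One embeds the colors $1,\ldots,\ell$ as distinct nonzero elements $a_1,\ldots,a_\ell$ of a field $\mathbb{F}=\mathbb{F}_{2^a}$ with $2^a>\ell$, and observes that
\[
\pi(\Tcol_P,\Tcol_Q)\ :=\ \prod_{v\in Z}\bigl(\Tcol_P(v)-\Tcol_Q(v)\bigr)\ \neq 0 \quad\Longleftrightarrow\quad \forall v\in Z:\ \Tcol_P(v)\neq\Tcol_Q(v).
\]
For fixed $\Tcol_P$, the map $\Tcol_Q\mapsto\pi(\Tcol_P,\Tcol_Q)$ is a multilinear polynomial in the $|Z|$ variables $(x_v)_{v\in Z}$ and therefore has at most $2^{|Z|}$ monomials.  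Writing its coefficient vector as a row gives a matrix $\mathbf D$ over $\mathbb{F}$ with only $2^{|Z|}$ columns, independent of $\ell$.  Tensoring with the cut matrix $\mathbf C$ (now viewed over $\mathbb{F}$) yields a matrix $\mathbf E$ with $2^{|Z|-1}\cdot 2^{|Z|}\le 4^{|Z|}$ columns; one checks that for every colored trace $\tau_Q$ there is a single vector $w$ over $\mathbb{F}$ with $\mathbf E[\tau,\cdot]\cdot w\neq 0$ iff $\tau$ fits $\tau_Q$, so a basis of the row span of $\mathbf E[\mathcal A,\cdot]$ over $\mathbb{F}$ represents $\mathcal A$.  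The move to characteristic~$2$ (so that Lemma~\ref{lem:rank-based} still applies to the matching part) combined with the field extension (so that ``all coordinates differ'' becomes a low-degree polynomial) is precisely the missing idea.
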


The rest of this section is devoted to the proof of Lemma~\ref{lem:properly-tw2}.

Let $a$ be such that $2^a > \ell$ and let $\mathbb{F} = \mathbb{F}_{2^a}$ be the field
of characteristic $2$ with $2^a$ elements. Operations on $\mathbb{F}$ can be done in time
polynomial in $a = \Oh(\log \ell)$. Furthermore, $\mathbb{F}_2 \subseteq \mathbb{F}$.
We replace the range of colors $[\ell]$ with non-zero elements of $\mathbb{F}$:
for every color $i \in [\ell]$ we pick a distinct non-zero element $a_i \in \mathbb{F}$.
Henceforth, we assume that $\Ecol$ and all functions $\Tcol_P$ for $(f,M_P,\Tcol_P) \in \mathcal{A}$
have range $\{a_i~|~i \in [\ell]\} \subseteq \mathbb{F} \setminus \{0\}$.

Let $Z = f^{-1}(1)$. 
For two functions $\Tcol_P,\Tcol_Q : Z \to \mathbb{F}$, 
define
$$\pi(\Tcol_P, \Tcol_Q) = \prod_{v \in Z}(\Tcol_P(v) - \Tcol_Q(v)).$$
Note that $\pi$ can be treated as a $2|Z|$-variate multilinear polynomial of degree $|Z|$
with variables $(\Tcol_P(v))_{v \in Z}$ and $(\Tcol_Q(v))_{v \in Z}$.
Furthermore, we have that
\begin{equation}\label{eq:ptw:1}
\pi(\Tcol_P,\Tcol_Q) = 0 \quad \Leftrightarrow \quad \exists_{v \in Z} \Tcol_P(v) = \Tcol_Q(v).
\end{equation}
For a function $\Tcol_P : Z \to \mathbb{F}$, define a $|Z|$-variate multilinear
polynomial
$$\pi_{\Tcol_P}((x_v)_{v \in Z}) = \pi(\Tcol_P, \{v \mapsto x_v~|~v \in Z\}).$$
Consider a matrix $\mathbf{D}$ with rows indexed by possible
colored traces $(f,M_P,\Tcol_P)$ for $(A,B)$ and columns by all $2^{|Z|}$ multilinear monomials 
on variables $(x_v)_{v \in Z}$. 
The row $\mathbf{D}[(f,M_P,\Tcol_P), \cdot]$ contains the coefficients of $\pi_{\Tcol_P}$. 

Let $\mathbf{C}'$ be a matrix with rows indexed by possible colored traces
$(f,M_P,\Tcol_P)$ for $(A,B)$ and columns by all cuts of $Z$.
The row $\mathbf{C}'[(f,M_P,\Tcol_P),\cdot]$ equals $\mathbf{C}[(f,M_P),\cdot]$,
where every element is treated as an element of $\mathbb{F}$.
Finally, let $\mathbf{E}$ be a matrix with rows indexed by possible
colored traces $\tau$ for $(A,B)$
and $\mathbf{E}[\tau,\cdot]$ is the tensor product
of $\mathbf{C}'[\tau,\cdot]$ and $\mathbf{D}[\tau,\cdot]$.
Note that $\mathbf{E}$ has $2^{|Z|-1} \cdot 2^{|Z|} = 2^{2|Z|-1}$ columns.

Lemma~\ref{lem:properly-tw2} follows from the following lemma by applying Gaussian elimination
on the rows of $\mathbf{E}$ corresponding to the elements of $\mathcal{A}$.
\begin{lemma}\label{lem:properly-tw3}
Let $\mathcal{A}' \subseteq \mathcal{A}$ be such that
$\mathbf{E}[\mathcal{A}',\cdot]$ spans the same subspace as $\mathbf{E}[\mathcal{A},\cdot]$.
Then, $\mathcal{A}'$ represents $\mathcal{A}$.
\end{lemma}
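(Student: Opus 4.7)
The plan is to show that, for any target colored trace \(\tau = (f',M_Q,\Tcol_Q)\) with \(f'(v) = 2 - f(v)\), the property ``\(\tau_1\) fits \(\tau\)'' of a trace \(\tau_1 \in \mathcal{A}\) is determined by the non-vanishing of a fixed linear functional \(\phi_\tau\) applied to the row \(\mathbf{E}[\tau_1,\cdot]\). The lemma then reduces to a one-line linear-algebra observation: a linear functional that is nonzero on some vector in the row span of \(\mathbf{E}[\mathcal{A},\cdot]\) must be nonzero on one of the rows from \(\mathcal{A}'\) that span the same subspace.

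The first step is to encode fitness as a single algebraic expression. Since all traces in \(\mathcal{A}\) share the first coordinate \(f\), a trace \(\tau_1 = (f,M_1,\Tcol_1)\) fits \(\tau\) iff \(M_1 \uplus M_Q\) is a single cycle and \(\Tcol_1(v) \ne \Tcol_Q(v)\) for every \(v \in Z\). By the definition of \(\mathbf{M}\) and by~\eqref{eq:ptw:1}, these two conditions hold simultaneously iff
\[
\mathbf{M}[M_1,M_Q] \cdot \pi(\Tcol_1,\Tcol_Q) \;\ne\; 0 \quad \text{in } \mathbb{F};
\]
note that \(\mathbf{M}[M_1,M_Q] \in \{0,1\}\) is unambiguous because \(\mathbb{F}_2 \subseteq \mathbb{F}\).

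The second step is to expand both factors so that the \(\tau_1\)-dependence factors through \(\mathbf{E}[\tau_1,\cdot]\). Lemma~\ref{lem:rank-based}, read entry-wise in \(\mathbb{F}\), gives
\[
\mathbf{M}[M_1,M_Q] \;=\; \sum_{C} \mathbf{C}'[\tau_1,C] \cdot \mathbf{C}'[\tau_Q,C],
\]
where the sum ranges over cuts of \(Z\) and \(\tau_Q\) is any trace with matching \(M_Q\). On the polynomial side, \(\mathbf{D}[\tau_1,\cdot]\) lists the coefficients of \(\pi_{\Tcol_1}\) in the multilinear monomial basis \(\{\prod_{v \in S} x_v\}_{S \subseteq Z}\), so substituting \(x_v = \Tcol_Q(v)\) yields
\[
\pi(\Tcol_1,\Tcol_Q) \;=\; \sum_{S \subseteq Z} \mathbf{D}[\tau_1,S] \cdot e_S(\tau),\qquad e_S(\tau) := \prod_{v \in S} \Tcol_Q(v).
\]
Multiplying these two expansions and using \(\mathbf{E}[\tau_1,(C,S)] = \mathbf{C}'[\tau_1,C] \cdot \mathbf{D}[\tau_1,S]\) gives
\[
\mathbf{M}[M_1,M_Q] \cdot \pi(\Tcol_1,\Tcol_Q) \;=\; \sum_{(C,S)} \mathbf{E}[\tau_1,(C,S)] \cdot \bigl(\mathbf{C}'[\tau_Q,C] \cdot e_S(\tau)\bigr) \;=:\; \phi_\tau\!\bigl(\mathbf{E}[\tau_1,\cdot]\bigr),
\]
a linear functional whose coefficient vector depends only on \(\tau\).

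The last step is the linear-algebra conclusion. If some \(\tau_1 \in \mathcal{A}\) fits \(\tau\), then \(\phi_\tau(\mathbf{E}[\tau_1,\cdot]) \ne 0\), so \(\phi_\tau\) is not identically zero on the row span of \(\mathbf{E}[\mathcal{A},\cdot]\); since \(\mathbf{E}[\mathcal{A}',\cdot]\) spans the same subspace, \(\phi_\tau\) must be nonzero on \(\mathbf{E}[\tau_2,\cdot]\) for some \(\tau_2 \in \mathcal{A}'\), and any such \(\tau_2\)---which has first coordinate \(f\)---then fits \(\tau\) by the equivalence above. The main hurdle I anticipate is the first step: packaging the topological ``single-cycle'' condition together with the per-vertex ``colors differ everywhere'' condition into one algebraic witness \(\mathbf{M}[M_1,M_Q]\cdot\pi(\Tcol_1,\Tcol_Q)\) that factors through the tensor-product structure underlying \(\mathbf{E}\). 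Once this joint factorization is in place, the argument is a direct adaptation of the classical rank-based reasoning for \textsc{Hamiltonian Cycle}.
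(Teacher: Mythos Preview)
Your proof is correct and follows essentially the same approach as the paper: both encode fitness with a fixed colored trace \(\tau_Q\) as the non-vanishing of \(\mathbf{E}[\tau_1,\cdot]\) paired against a vector depending only on \(\tau_Q\), and then conclude by linearity. Your coefficient vector \(\bigl(\mathbf{C}'[\tau_Q,C]\cdot e_S(\tau)\bigr)_{(C,S)}\) is exactly the tensor product \(v_1\otimes v_2\) that the paper writes out explicitly, so the two arguments are the same up to notation.
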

\begin{proof}
Let $\tau_Q = (f_Q,M_Q,\Tcol_Q)$ be a colored trace for $(B,A)$
and let $\tau_P = (f,M_P,\Tcol_P) \in \mathcal{A}$ be a trace fitting $\tau_Q$. 
Note that $f_Q^{-1}(1) = Z$.

Let $v_1$ be a vector over $\mathbb{F}$ with elements indexed by all cuts of $Z$
with value $1$ if the corresponding cut agrees with $M_Q$ and $0$ otherwise. 
By Lemma~\ref{lem:rank-based}, $\mathbf{C}[(f,M),\cdot] \cdot v_1 \neq 0$
if and only if the trace $(f,M)$ fits $(f_Q,M_Q)$. 

Let $v_2$ be a vector over $\mathbb{F}$ with elements indexed by all $2^{|Z|}$ multilinear monomials
over variables $(x_v)_{v \in Z}$;
the value of $v_2$ at monomial $\prod_{v \in I} x_v$ for $I \subseteq Z$ equals
$\prod_{v \in I} \Tcol_Q(v)$. 
For a colored trace $(f,M_R,\Tcol_R)$, by~\eqref{eq:ptw:1}, we have
that $\Tcol_R(v) \neq \Tcol_Q(v)$ for every $v \in Z$ if and only if
$\mathbf{D}[(f,M_R,\Tcol_R),\cdot] \cdot v_2 \neq 0$. 

Consequently, for a colored trace $\tau = (f,M_R,\Tcol_R)$, we have that $\tau$ fits $\tau_Q$
if and only if $\mathbf{C}'[\tau,\cdot] \cdot v_1 \neq 0$ and $\mathbf{D}[\tau,\cdot] \cdot v_2 \neq 0$. 
The latter is equivalent to $\mathbf{E}[\tau,\cdot] \cdot (v_1 \otimes v_2) \neq 0$, 
where $v_1 \otimes v_2$ is the tensor product of $v_1$ and $v_2$. 

Since $\tau_P$ fits $\tau_Q$, 
      $\mathbf{E}[\tau_P,\cdot] \cdot (v_1 \otimes v_2) \neq 0$.
Since $\mathbf{E}[\mathcal{A}',\cdot]$ spans the same subspace as $\mathbf{E}[\mathcal{A},\cdot]$,
there exist elements $\tau_1,\tau_2,\ldots,\tau_r \in \mathcal{A}'$ and coefficients $\lambda_1,\ldots,\lambda_r$, such that
$$\mathbf{E}[\tau_P,\cdot] = \sum_{i=1}^r \lambda_i \mathbf{E}[\tau_i,\cdot].$$
Since $\mathbf{E}[\tau_P,\cdot] \cdot (v_1 \otimes v_2) \neq 0$, there exists $1 \leq i \leq r$ such that
$\lambda_i \neq 0$ and $\mathbf{E}[\tau_P,\cdot] \cdot (v_1 \otimes v_2) \neq 0$. 
Hence, $\tau_i$ fits $\tau_Q$ and we have $\tau_i \in \mathcal{A}'$. This finishes the proof of the lemma.
\end{proof}
Recall that $\mathbf{E}$ has $2^{2|Z|-1} \leq 4^{|Z|}$ columns.
Hence, with Gaussian elimination one can find $\mathcal{A}'$ as in Lemma~\ref{lem:properly-tw3} of size at most $4^{|Z|}$.
This finishes the proof of Theorem~\ref{thm:properly-tw}.

\section{Two disjoint shortest paths}\label{sec:2dsp}

\begin{quote}
\textsc{Directed Two Disjoint Shortest Paths Problem (2-DSPP) with transition restrictions}\\
\textbf{Input:} A directed graph $G=(V,E)$ with transition system $T$, a length function $w: E\rightarrow\mathbb R_{\ge 0}$ and two pairs of vertices ($s_1,t_1$),($s_2,t_2$) in $G$.\\
\textbf{Target:} Find two disjoint (vertex-disjoint or edge-disjoint) paths $P_1$ and $P_2$ in $G$ such that for both $i=1,2$, path $P_i$ is a shortest path (even in the graph $G$ with no transition restrictions) from $s_i$ to $t_i$ and $P_i$ is also $T$-compatible.
\end{quote}

Our algorithm is an adaption of the algorithm for \textsc{Directed Two Disjoint Shortest Paths Problem} (assuming that every dicycle in $G$ has positive length) of B{\'{e}}rczi and Kobayashi~\cite{DBLP:conf/esa/Berczi017}.
Roughly speaking, we show that transition restrictions are not a barrier for using the same strategy.
Note that in this section we consider directed graphs with parallel edges.
Transitions and transition systems for directed graphs are defined in the natural way analogous to the undirected case.

We follow the notations from the paper of B{\'{e}}rczi and Kobayashi~\cite{DBLP:conf/esa/Berczi017} for convenience. We define $E_i$ to be the set of edges that appear in some shortest path (without transition restrictions) from $s_i$ to $t_i$ for $i=1,2$. By this definition, an $s_i$-$t_i$ path is a shortest $T$-compatible $s_i$-$t_i$ path if and only if it consists of edges of $E_i$ and is also $T$-compatible for $i=1,2$. Thus the edge-disjoint (vertex-disjoint) \textsc{2-DSPP with transition restrictions} is equivalent to finding two edge-disjoint (vertex-disjoint) $T$-compatible paths $P_1$ and $P_2$ such that $P_i$ is from $s_i$ to $t_i$, $E(P_i)\subseteq E_i$ and $P_i$ satisfies the transition restrictions for $i=1,2$. Each set $E_i$ can be computed in polynomial time using the method from the paper of B{\'{e}}rczi and Kobayashi~\cite{DBLP:conf/esa/Berczi017}. First, we compute the distance $d_i(v)$ from $s_i$ to $v$ for $i=1,2$, using Dijkstra's algorithm. Let ${\cal E}_i=\{uv\mid d_i(v)-d_i(u)=w(uv)\}$. Then $E_i=\{uv\in{\cal E}_i\mid \text{there exists a path from }v \text{ to }t_i\text{ in }{\cal E}_i\}$.

\begin{theorem}
If the length of every directed cycle is positive, both edge-disjoint and vertex-disjoint variants of \textsc{2-DSPP with transition restrictions} can be solved in polynomial time.
\end{theorem}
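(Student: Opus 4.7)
My plan is to follow the polynomial-time algorithm of B\'erczi and Kobayashi~\cite{DBLP:conf/esa/Berczi017} and verify that each step carries over in the presence of transitions. The edge sets $E_1$ and $E_2$ can be computed as described in the paragraph preceding the statement, independently of $T$. By the positive-length hypothesis, $(V, E_i)$ is a DAG for $i=1,2$, which is the structural backbone used throughout~\cite{DBLP:conf/esa/Berczi017}.

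To encode transitions inside a transition-free auxiliary digraph, I would use the standard ``arc-splitting'' gadget: each $v \in V(G)$ is replaced by a gadget whose nodes are copies $v_e$, one for every arc $e$ incident with $v$; each arc $e = uv \in E_1 \cup E_2$ becomes an arc $u_e \to v_e$ of weight $w(e)$ (call these \emph{main} arcs); each permitted transition $\{e,f\} \in T(v)$ with $e$ entering and $f$ leaving $v$ becomes a zero-weight arc $v_e \to v_f$ (a \emph{transition} arc). After attaching artificial sources/sinks $s_i^\star, t_i^\star$ by zero-weight arcs to the copies of $s_i$ and $t_i$, the shortest $s_i^\star$-$t_i^\star$ paths in $G^\star$ are in length-preserving bijection with the shortest $T$-compatible $s_i$-$t_i$ paths in $G$ using only arcs of $E_i$. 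Moreover $G^\star$ has no zero-length directed cycle, since transition arcs alone do not form cycles inside a gadget (they always go from an ``in-type'' copy to an ``out-type'' copy), so any cycle in $G^\star$ uses at least one main arc and its projection to $G$ is a closed compatible walk of positive total length.

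For the edge-disjoint variant the reduction is tight: two compatible $G$-paths share an arc iff the corresponding $G^\star$-paths share the matching main arc, and if they ever share a transition arc $v_e \to v_f$ they already share both $e$ and $f$ in $G$. Hence the edge-disjoint case of our problem reduces verbatim to edge-disjoint \textsc{2-DSPP} on $G^\star$, which is polynomial by~\cite{DBLP:conf/esa/Berczi017}.

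The main obstacle is the vertex-disjoint variant, since vertex-disjointness in $G$ is strictly stronger than vertex-disjointness in $G^\star$: two paths may touch the same gadget at different copies $v_e, v_{e'}$. The plan is to replay the case analysis of~\cite{DBLP:conf/esa/Berczi017} with ``common vertex of $G^\star$'' replaced by ``common transition gadget.'' The key structural property enabling this is that inside every gadget any in-type copy $v_e$ reaches any out-type copy $v_f$ via at most one zero-weight transition arc; consequently, whenever the argument of~\cite{DBLP:conf/esa/Berczi017} swaps the tails of two shortest paths at a common vertex, the same swap can be performed ``one step inside'' the corresponding gadget while keeping both resulting paths $T$-compatible and of the same total length. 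Verifying that this local modification is the only change needed in the B\'erczi--Kobayashi uncrossing / augmenting-path machinery is the step I expect to consume the bulk of the proof.
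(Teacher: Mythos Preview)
Your edge-disjoint reduction via the arc-splitting gadget is correct and is a genuinely different route from the paper's. The paper keeps the original graph, builds the B\'erczi--Kobayashi product digraph $\mathcal G$, and then \emph{prunes} edges of $\mathcal G$ that would encode forbidden transitions; to decide which edges to prune it develops auxiliary lemmas for finding (one or two) $T$-compatible paths in a DAG. Your approach instead encodes the transitions once and for all into $G^\star$ and then invokes~\cite{DBLP:conf/esa/Berczi017} as a black box. This is arguably cleaner; the only missing sanity check is that you must first verify $\dist_{G^\star}(s_i^\star,t_i^\star)=\dist_G(s_i,t_i)$ for $i=1,2$ (otherwise no $T$-compatible shortest path exists and you should reject), since B\'erczi--Kobayashi applied to $G^\star$ looks for shortest \emph{$T$-compatible} paths, not shortest unrestricted paths.

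For the vertex-disjoint variant, however, your plan has a real gap. The B\'erczi--Kobayashi algorithm does not proceed by uncrossing or tail-swapping at common vertices; their vertex-disjoint case is handled by the standard $v\mapsto(v^-,v^+)$ split followed by the edge-disjoint algorithm. The paper explicitly observes that this split destroys transition information, and your arc-splitting gadget has exactly the same defect you identify: two paths can share a vertex $v$ of $G$ while using different copies $v_e,v_{e'}$ in $G^\star$. Your proposed fix---swap the two paths ``one step inside'' the gadget---does not work in general: if $P_1$ uses the transition $v_e\to v_f$ and $P_2$ uses $v_{e'}\to v_{f'}$, the swapped paths would need the arcs $v_e\to v_{f'}$ and $v_{e'}\to v_f$, i.e.\ the transitions $\{e,f'\}$ and $\{e',f\}$, which need not be permitted. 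So the step you flag as ``the bulk of the proof'' is not a matter of bookkeeping; the swap you rely on is simply unavailable. The paper's fix is different: it splits each $v$ into $v^-,v^+$ with one \emph{parallel} $v^-v^+$ edge per incoming arc, so that the chosen parallel edge remembers the incoming arc and the transition system can be rebuilt on $v^+$; it then redoes the product construction, handling the delicate type-(iii) crossings by solving vertex-disjoint $T$-compatible $2$-paths in an auxiliary DAG.
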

\begin{corollary}
If the length of every edge is positive, both edge-disjoint and vertex-disjoint variants of \textsc{2-DSPP with transition restrictions} can be solved in polynomial time.
\end{corollary}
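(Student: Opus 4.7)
The plan is to mirror the B{\'e}rczi--Kobayashi algorithm on an auxiliary directed graph that absorbs the transition constraints, so that disjointness and shortest-path structure can be handled by their technique essentially unchanged. First, exactly as described in the excerpt, compute the edge sets $E_1$ and $E_2$ using Dijkstra; since every dicycle has positive length, $(V, E_i)$ is a DAG for $i = 1, 2$. Now construct an auxiliary digraph $H_i$ whose vertex set is $E_i$ together with two markers $s_i^\ast$ and $t_i^\ast$, with an arc from $e = uv$ to $f = vw$ of length $w(f)$ whenever $\{e, f\} \in T(v)$, and the natural marker-arcs $s_i^\ast \to e$ for edges of $E_i$ leaving $s_i$ and $e \to t_i^\ast$ for edges of $E_i$ entering $t_i$. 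Then $T$-compatible shortest $s_i$-$t_i$ paths in $G$ are in bijection with $s_i^\ast$-$t_i^\ast$ directed paths in $H_i$, and $H_i$ is itself a DAG (inheriting acyclicity from $E_i$).

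The next step is to combine $H_1$ and $H_2$ and apply the B{\'e}rczi--Kobayashi algorithm. Their algorithm is a dynamic-programming sweep in order of the potential $d_1(v) + d_2(v)$, maintaining at each level a compact representation of reachable pairs of partial-path states. In our setting, a ``state'' for the $i$-th path is an auxiliary vertex of $H_i$, namely the last edge used so far by $P_i$; since the potential of such a vertex $e = uv$ depends only on the underlying $G$-endpoint $v$, the level ordering lifts cleanly. The disjointness test must be lifted too: two partial paths $Q_1$ in $H_1$ and $Q_2$ in $H_2$ are admissible iff the $G$-vertices (for the vertex-disjoint variant), respectively $G$-edges (for the edge-disjoint variant), that they cover are disjoint. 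Transition compatibility is now built into the arc set of $H_i$, so no further bookkeeping is required on that front.

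The main obstacle is to verify that B{\'e}rczi--Kobayashi's key structural lemma, which produces the polynomial-size representation of reachable state pairs at each potential level, survives this lifting of disjointness from $G$ to groups of auxiliary vertices in $H$. The proof of that lemma is local to a fixed potential level and uses only the symmetry of the disjointness relation together with a bound on the number of states per level; in our reduction each $G$-vertex spawns only polynomially many $H$-vertices at the same potential level (one per incoming arc of $E_i$), so both properties carry over with at most a polynomial blow-up in running time. The edge-disjoint variant is a direct specialization in which the ``groups'' are singletons and the argument is even tighter. Stitching these pieces together yields the claimed polynomial-time algorithm for both variants, and the corollary for positive edge lengths follows immediately since positive edge lengths imply positive dicycle lengths.
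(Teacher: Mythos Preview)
The corollary in the paper is a one-line consequence of the immediately preceding theorem: if every edge has positive length then every dicycle has positive length, so the theorem applies. The paper gives no separate proof. Your last sentence captures exactly this, and that alone is the proof the paper intends.

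Everything before your last sentence is a sketch of the \emph{theorem}, not the corollary, and that sketch takes a different route from the paper and has genuine gaps. The paper follows B\'erczi--Kobayashi closely: it keeps the original graph, contracts $E_0=E_1\cap E_2$, reverses $E_2^\ast$, and builds the product digraph $\mathcal G$ on $E_1^\ast\times\overline{E_2^\ast}$; transitions are then handled \emph{locally} by deleting those edges of $\mathcal G$ whose associated subproblem (a compatible path, or two compatible disjoint paths, inside an acyclic piece $G_v$) fails. Your proposal instead passes to a line-graph $H_i$ and appeals to an unspecified ``key structural lemma'' of B\'erczi--Kobayashi that you claim ``uses only the symmetry of the disjointness relation together with a bound on the number of states per level''. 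That description does not match their actual argument, which hinges on the contraction of $E_0$ and the acyclicity of the resulting $G^\ast$; you would need to say precisely which lemma you mean and redo its proof in your lifted setting. More concretely, for the vertex-disjoint variant the paper explicitly warns that the naive vertex-splitting reduction \emph{loses} transition information and devises a bespoke construction with parallel $v^-v^+$ edges indexed by incoming arcs to recover it; your sketch does not address this at all. So as a proof of the corollary your last sentence suffices and matches the paper; as a proof of the theorem, the preceding paragraphs are not yet a proof.
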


For a set $F$ of directed edges, let $\overline{F}$ be the set of edges obtained by reversing all edges of $F$, that is, $\overline{F}=\{vu\mid uv\in F\}$. For a directed edge $e=uv$, let $\overline{e}=vu$ denote the edge obtained by reversing $e$. 
For two paths $P$ and $Q$ with consecutive edges $e^p_1,e^p_2,\ldots,e^p_{|P|}$ and, respectively, $e^q_1,e^q_2,\ldots,e^q_{|Q|}$ such that $\text{head}(e^p_{|P|})=\text{tail}(e^q_1)$, by $P \cdot Q$ we denote the concatenation of paths $P$ and $Q$, i.e., $P \cdot Q = e^p_1,e^p_2,\ldots,e^p_{|P|},e^q_1,e^q_2,\ldots,e^q_{|Q|}$. 
Note that if $P$ and $Q$ are vertex-disjoint except for $\text{head}(e^p_{|P|})=\text{tail}(e^q_1)$, then $P \cdot Q$ is a path, too.

\subsection{Edge-disjoint case}
We show that the edge-disjoint case of \textsc{2-DSPP with transition restrictions} can be solved in polynomial time.  We use the method of B{\'{e}}rczi and Kobayashi~\cite{DBLP:conf/esa/Berczi017}, which reduces the problem of \textsc{Edge Disjoint 2-DSPP} to finding a path in a graph $\cal G$ constructed from the input graph $G$. Based on that, we just need to delete edges of $\cal G$ which correspond to forbidden transitions of $G$ with respect to $T$ and it suffices to find the path in the remaining subgraph of $\cal G$.

We repeat the procedure of B{\'{e}}rczi and Kobayashi~\cite{DBLP:conf/esa/Berczi017} briefly here for consistency. Let $G$ be a graph (without transition systems $T$) such that the length of every dicycle in $G$ is positive. First, we compute $E_i$ for $i=1,2$. Then we create four new vertices $s_1',s_2',t_1',t_2'$, create four edges $s_1's_1$, $s_2's_2$, $t_1t_1'$, $t_2t_2'$ of length $0$ respectively, and add $s_i's_i$, $t_it_i'$ to $E_i$ for $i=1,2$. Let $E_0=E_1\cap E_2, E_1^*=E_1\setminus E_0, E_2^*=E_2\setminus E_0$.  We remove all edges of $E(G)\setminus (E_1\cup E_2)$, contract all edges of $E_0$ and reverse all edges of $E_2^*$. Finally we get a new graph $G^*=(V^*,E^*=E_1^*\cup \overline{E_2^*})$. Let $V_0\subseteq V$ be the set of vertices that are newly created after contracting $E_0$. For $v\in V_0$, we use $G_v$ to denote the subgraph of $G-(E(G)\setminus(E_1\cup E_2))$ induced by the vertices corresponding to $v$ before contracting. For an edge $e\in E^*$, let $f(e)\in E(G)$ be the edge corresponding to $e$ before the contracting and reversing operations.

The following two lemmas show that $G_v$ is acyclic for every $v\in V_0$ and $G^*$ is acyclic.
\begin{lemma}\cite{DBLP:conf/esa/Berczi017}\label{acyclic-V}
The edge set $E_i$ forms no dicycle in $G$ for $i=1,2$.
\end{lemma}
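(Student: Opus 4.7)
The plan is to exploit the fact that every edge in $E_i$ is \emph{tight} with respect to the distance function $d_i$ from $s_i$, and to combine this with the hypothesis that every directed cycle in $G$ has positive length.

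First, I would recall from the definition of $E_i$ given just before the lemma that $E_i \subseteq \mathcal{E}_i = \{uv \mid d_i(v) - d_i(u) = w(uv)\}$. That is, every edge $uv \in E_i$ satisfies the equality $d_i(v) - d_i(u) = w(uv)$, because edges of $E_i$ lie on some shortest $s_i$-$t_i$ path.

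Next, I would argue by contradiction. Suppose $E_i$ contains a directed cycle $C$ with consecutive vertices $v_0, v_1, \ldots, v_{k-1}, v_k = v_0$ and edges $e_j = v_{j-1} v_j \in E_i$ for $j \in [k]$. Summing the tightness condition along $C$ gives
\[
\sum_{j=1}^{k} w(e_j) \;=\; \sum_{j=1}^{k} \bigl(d_i(v_j) - d_i(v_{j-1})\bigr) \;=\; d_i(v_k) - d_i(v_0) \;=\; 0,
\]
since the sum telescopes and $v_k = v_0$. Thus $C$ is a directed cycle of total length zero in $G$, contradicting the hypothesis that every directed cycle in $G$ has positive length. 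Hence no such cycle exists, proving the lemma.

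The proof is essentially routine once the tightness property of edges in $E_i$ is made explicit; the only substantive ingredient is the positivity of dicycle lengths, which is already assumed in the ambient theorem. I do not foresee any real obstacle — the main thing to be careful about is simply to cite the correct characterization of $E_i$ (through $\mathcal{E}_i$) rather than going back to the definition via shortest paths, which would make the telescoping argument a notational detour.
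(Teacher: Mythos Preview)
Your argument is correct: the telescoping of the tightness equalities $d_i(v)-d_i(u)=w(uv)$ along a putative dicycle forces its total length to be zero, contradicting the standing assumption that every dicycle has positive length. The paper does not give its own proof of this lemma but merely cites it from~\cite{DBLP:conf/esa/Berczi017}; your proof is the standard one and is entirely adequate.
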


\begin{lemma}\cite{DBLP:conf/esa/Berczi017}\label{contract-V}
In the graph $G$, suppose that $C$ is a dicycle in $E_1\cup \overline{E_2}$. Then $E_1\cap E(C)\subseteq E_2$ and $E_2\cap \overline{E(C)}\subseteq E_1$.
\end{lemma}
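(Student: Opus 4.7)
The plan is to extract arithmetic identities from the two distance functions $d_1, d_2$ around $C$, upgrade distance-tight edges to genuine $E_i$-membership, and then obtain the second inclusion from the first by a symmetry argument.

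Write $C = v_0, v_1, \ldots, v_{k-1}, v_0$, with $e_j$ the edge from $v_j$ to $v_{j+1}$ (indices mod $k$), and partition $\{0, \ldots, k-1\}$ into $A = \{j : e_j \in E_1\}$ and $B$, the rest; every $j \in B$ satisfies $e_j \in \overline{E_2}$, i.e., $v_{j+1}v_j \in E_2$. Lemma~\ref{acyclic-V} forces both $A$ and $B$ to be non-empty, as otherwise $C$ would be a dicycle lying inside $E_1$ or (after reversal) inside $E_2$. For $j \in A$, the membership $v_j v_{j+1} \in E_1$ yields the tight identity $d_1(v_{j+1}) - d_1(v_j) = w(e_j)$, while $v_jv_{j+1} \in E(G)$ only gives the triangle inequality $d_2(v_{j+1}) - d_2(v_j) \le w(e_j)$. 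For $j \in B$, the edge $v_{j+1}v_j \in E_2$ gives exactly $d_2(v_{j+1}) - d_2(v_j) = -w(e_j)$, and $v_{j+1}v_j \in E(G)$ yields $d_1(v_{j+1}) - d_1(v_j) \ge -w(e_j)$. Summing around $C$, the telescoping sums of $d_1$- and $d_2$-differences both vanish, so the above (in)equalities force $\sum_{j \in A} w(e_j) \le \sum_{j \in B} w(e_j)$ from the $d_1$-sum and the reverse inequality from the $d_2$-sum; both sums thus coincide and every individual inequality is tight. In particular, for $j \in A$ we obtain $v_jv_{j+1} \in {\cal E}_2$, and for $j \in B$ we obtain $v_{j+1}v_j \in {\cal E}_1$.

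The main subtlety is upgrading from ${\cal E}_i$ to $E_i$, which additionally requires a path from the head of the edge to $t_i$ inside ${\cal E}_i$. For a fixed $j \in A$, I would walk forward along $C$ starting at $v_{j+1}$: as long as the next index $j'$ is in $A$, the cycle edge $v_{j'}v_{j'+1}$ is a forward $G$-edge that lies in ${\cal E}_2$ by the previous step, so the walk extends inside ${\cal E}_2$. Since $B \ne \emptyset$, the walk must eventually first reach some $j^\star \in B$ at the vertex $v_{j^\star}$; but then $v_{j^\star+1}v_{j^\star} \in E_2$, so by the very definition of $E_2$ the vertex $v_{j^\star}$ has a path to $t_2$ inside ${\cal E}_2$. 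Concatenating gives a path from $v_{j+1}$ to $t_2$ in ${\cal E}_2$, certifying $e_j \in E_2$ and hence $E_1 \cap E(C) \subseteq E_2$.

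For the second inclusion, I would not repeat the argument but invoke symmetry: reversing $C$ produces a dicycle $C'$ in $\overline{E_1} \cup E_2$, and the first inclusion applied to $C'$ with the roles of indices $1$ and $2$ swapped yields $E_2 \cap E(C') \subseteq E_1$, which is exactly $E_2 \cap \overline{E(C)} \subseteq E_1$. The expected main obstacle is the bootstrapping step from ${\cal E}_i$-tightness to $E_i$-membership; the argument closes precisely because Lemma~\ref{acyclic-V} forces $B \ne \emptyset$, so the forward walk terminates at an anchor vertex whose reachability to $t_2$ in ${\cal E}_2$ is granted for free by the definition of $E_2$.
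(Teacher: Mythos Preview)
The paper does not supply its own proof of this lemma; it is quoted verbatim from B\'erczi and Kobayashi~\cite{DBLP:conf/esa/Berczi017} and used as a black box. Your argument is therefore not comparable against anything in the present paper, but it is a correct proof of the cited statement. The telescoping of $d_1$- and $d_2$-differences around $C$ forces every triangle inequality to be tight, placing the $A$-edges into $\mathcal{E}_2$ and the (reversed) $B$-edges into $\mathcal{E}_1$; the walk-to-an-anchor step then upgrades $\mathcal{E}_2$-membership to $E_2$-membership exactly as you describe, and the symmetry via $C' = \overline{C}$ is legitimate.

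One small point you gloss over: for the telescoping sums and triangle inequalities to be meaningful you need all $d_1$- and $d_2$-values along $C$ to be finite. This is easy but worth a sentence: since $A,B \neq \emptyset$, some vertex of $C$ is reachable from $s_1$ and some from $s_2$; then propagating along the $G$-edges underlying $C$ (forward $A$-edges and backward $B$-edges) shows every vertex of $C$ is reachable from both sources.
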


Then we define a new digraph $\cal G$ whose vertex set is $W=E_1^*\times \overline{E_2^*}$. There is a directed edge from $(e_1,e_2)$ to $(e_1',e_2')$ if one of three cases holds.
\begin{enumerate}[label=(\roman*)]
\item $e_1=e_1'$ and $\text{head}_{G^*}(e_2)=\text{tail}_{G^*}(e_2')=v$. There is no path from $\text{head}_{G^*}(e_1)$ to $v$ in $G^*$. Moreover, if $v\in V_0$, then $G_v$ contains a path from $\text{tail}_{G}(e_2')$ to $\text{head}_{G}(e_2)$.

\item $e_2=e_2'$ and $\text{head}_{G^*}(e_1)=\text{tail}_{G^*}(e_1')=v$. There is no path from $\text{head}_{G^*}(e_2)$ to $v$ in $G^*$. Moreover, if $v\in V_0$, then $G_v$ contains a path from $\text{head}_{G}(e_1)$ to $\text{tail}_{G}(e_1')$.

\item $\text{head}_{G^*}(e_2)=\text{tail}_{G^*}(e_2')=\text{head}_{G^*}(e_1)=\text{tail}_{G^*}(e_1')=v$. If $v\in V_0$, then $G_v$ contains two edge-disjoint paths from $\text{head}_{G}(e_1)$ to $\text{tail}_{G}(e_1')$ and from $\text{tail}_{G}(e_2')$ to $\text{head}_{G}(e_2)$ respectively.
\end{enumerate}

Finally the following lemma reduces the edge-disjoint version of \textsc{2-DSPP} to finding a path in $\cal G$ from $(s_1's_1,t_2't_2)$ to $(t_1t_1',s_2s_2')$. Note that $s_i,t_i\in V(G)$ might be the endpoints of edges of $E_0$ for $i=1,2$. In this case, although we might contract the edges incident to $s_i,t_i\in V(G)$ and replace these vertices with new vertices. Therefore, we slightly abuse the notation and use $s_i$ and $t_i$ to denote the vertex adjacent to $s_i'$ and $t_i'$ respectively in $G^*$ for $i=1,2$, for the sake of simplicity.

\begin{lemma}\cite{DBLP:conf/esa/Berczi017}\label{mainE2DSPP}
There is a directed path in $\cal G$ from $(s_1's_1,t_2't_2)$ to $(t_1t_1',s_2s_2')$ if and only if $G$ has two edge-disjoint paths $P_1$ and $P_2$ such that $P_i$ is from $s_i'$ to $t_i'$ and $P_i\subseteq E_i$ for $i=1,2$.
\end{lemma}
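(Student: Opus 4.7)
The plan is to interpret a vertex $(e_1,e_2)$ of $\mathcal G$ as a synchronized ``state'' of two partial paths: $e_1$ is the edge currently being used on the forward path $P_1$ (viewed in $G^\ast$), and $e_2$ is the edge currently being used on $P_2$, but seen after the reversal, i.e.\ $f(e_2)$ appears on $P_2$ traversed from $t_2$ back to $s_2$. Since every edge of $E_2^\ast$ has been reversed and the edges of $E_0$ contracted, and because \cref{acyclic-V} and \cref{contract-V} together give that $G^\ast$ and each $G_v$ (for $v\in V_0$) are acyclic, $G^\ast$ admits a topological order that I would fix once and use as the scheduling device throughout.

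For the sufficiency direction ($\Leftarrow$), I would take two edge-disjoint paths $P_1,P_2$ with $P_i\subseteq E_i$, split them canonically into their $E_i^\ast$-parts and $E_0$-parts, and build a walk in $\mathcal G$ by a greedy simulation: at a configuration $(e_1,e_2)$ whose heads in $G^\ast$ are $u_1,u_2$, advance along whichever of the two paths is currently ``behind'' in the fixed topological order of $G^\ast$; if the two heads coincide at some $v\in V_0$, advance both simultaneously by invoking case (iii), using the fact that the fragments of $P_1$ and $P_2$ through $G_v$ are edge-disjoint subpaths of $G_v$ witnessing exactly what case (iii) demands. The ``no path from head to tail'' side-conditions in (i) and (ii) are precisely what the topological-order scheduling produces, because a $G^\ast$-path from $\text{head}(e_1)$ to the next advancing vertex would force a different advancement choice; checking this carefully is where \cref{acyclic-V} and \cref{contract-V} enter.

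For the necessity direction ($\Rightarrow$), I would take a directed path $(e_1^{(0)},e_2^{(0)}),\dots,(e_1^{(k)},e_2^{(k)})$ in $\mathcal G$ from $(s_1's_1,t_2't_2)$ to $(t_1t_1',s_2s_2')$, and read off the two sequences of first and second coordinates. From the transition rules, the first coordinates form a sequence of $E_1^\ast$-edges that are consecutive in $G^\ast$ except at positions where the head is a contracted vertex $v\in V_0$, in which case I splice in the $G_v$-path guaranteed by the corresponding case. Undoing the reversal on the second coordinate yields an analogous $P_2$ in $G$. Edge-disjointness then decomposes into two parts: outside the $V_0$-vertices it holds because $E_1^\ast\cap f(\overline{E_2^\ast})=\emptyset$ by definition; at each $V_0$-vertex it follows because case (iii) explicitly demands edge-disjoint $G_v$-subpaths, while cases (i) and (ii) only insert one new subpath at a time.

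The main obstacle I expect is the bookkeeping at contracted vertices $v\in V_0$: the same $v$ can be visited many times by the constructed walk in $\mathcal G$, and each visit consumes edges of $G_v$, so one must argue that the subpaths chosen across all visits to the same $v$ can be made globally edge-disjoint inside $G_v$. The correct way to set this up is to process the visits to $v$ in the order imposed by the $\mathcal G$-walk and to use the existence statements in (i)--(iii) as an invariant maintained along the walk, rather than as isolated local guarantees; verifying this invariant, together with the ``no path from head to tail'' conditions that prevent one coordinate from overtaking the other in $G^\ast$, is the technically delicate step.
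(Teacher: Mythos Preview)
Your overall strategy is correct and coincides with the approach of B\'erczi and Kobayashi (the paper itself does not reprove the lemma; it is quoted from~\cite{DBLP:conf/esa/Berczi017}, and the analogous \cref{mainV2DSPP} in Section~5.2 is proved in full by exactly the scheme you describe). In particular, for the ($\Leftarrow$) direction the paper does not fix a topological order but uses directly the trichotomy that in the acyclic graph $G^\ast$ and for any pair of vertices $u_1,u_2$ at least one of ``no $G^\ast$-path from $u_1$ to $u_2$'', ``no $G^\ast$-path from $u_2$ to $u_1$'', or $u_1=u_2$ holds; this yields an edge of type (i), (ii), or (iii) at every configuration $(e_1^i,e_2^j)$. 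Your topological-order scheduling is an equivalent implementation.

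The part you flag as the ``main obstacle'' is not actually an obstacle, and you should drop the hedging. Because $G^\ast$ is acyclic, the first coordinates of the $\mathcal G$-path (after deleting repetitions) form a \emph{simple} path $P_1^\ast$ in $G^\ast$, and likewise for the second coordinates. Hence each $v\in V_0$ occurs as $\text{head}_{G^\ast}(e_1^i)$ for at most one step~$i$ and as $\text{head}_{G^\ast}(e_2^j)$ for at most one step~$j$. Moreover, if both occur, then $i=j$: were $i<j$, the step at~$i$ would be of type~(ii), whose side condition forbids a $G^\ast$-path from $\text{head}_{G^\ast}(e_2^i)$ to~$v$, yet the suffix of $P_2^\ast$ from $\text{head}_{G^\ast}(e_2^i)$ to $\text{head}_{G^\ast}(e_2^j)=v$ is such a path; the case $j<i$ is symmetric. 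Consequently each $v\in V_0$ contributes at most one splicing step, and if two subpaths of $G_v$ are inserted at that step they are already guaranteed edge-disjoint by case~(iii). No global invariant across visits is needed.
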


To solve the edge-disjoint version of \textsc{Directed Two Disjoint Shortest Paths Problem (2-DSPP) with transition restrictions}, we will show that it suffices to delete the edges in $\cal G$ which correspond to forbidden transitions of $G$ and find a path in the remaining graph of $\cal G$ from $(s_1's_1,t_2't_2)$ to $(t_1t_1',s_2s_2')$.
For every edge in $\cal G$, we check whether it corresponds to forbidden transitions according to the following three cases and delete the edge if it corresponds to forbidden transitions.
Suppose the edge is from some vertex $(e_1,e_2)\in W$ to another vertex $(e_1',e_2')\in W$.
\begin{itemize}
\item[$\bullet$] The edge is of type (i), i.e., $e_1=e_1'$ and $\text{head}_{G^*}(e_2)=\text{tail}_{G^*}(e_2')=v$.  If $v\in V_0$, let $G^s$ be the subgraph of $G$ consisting of all edges of $G_v$ together with $f(e_2)$ and $f(e_2')$. In this case, if there is no $T$-compatible paths in $G^s$ from $\text{tail}_G(f(e_2'))$ to $\text{head}_G(f(e_2))$, then remove the edge from $\cal G$. If $v\notin V_0$ and $\{\overline{e_2'}, \overline{e_2}\} \notin T_G(v)$, then remove the edge from $\cal G$.

\item[$\bullet$] The edge is of type (ii), i.e., $e_2=e_2'$ and $\text{head}_{G^*}(e_1)=\text{tail}_{G^*}(e_1')=v$. If $v\in V_0$, let $G^s$ be the subgraph of $G$ consisting of all edges of $G_v$ together with $f(e_1)$ and $f(e_1')$. In this case, if $v\in V_0$ and there is no $T$-compatible path in $G^s$ from $\text{tail}_G(f(e_1))$ to $\text{head}_G(f(e_1'))$, then remove the edge from $\cal G$. If $v\notin V_0$ and $\{e_1, e_1'\} \notin T_G(v)$, then remove the edge from $\cal G$.
\item[$\bullet$] The edge is of type (iii), i.e., $\text{head}_{G^*}(e_2)=\text{tail}_{G^*}(e_2')=\text{head}_{G^*}(e_1)=\text{tail}_{G^*}(e_1')=v$. If $v\in V_0$, let $G^s$ be the subgraph of $G$ consisting of all edges of $G_v$ together with $f(e_1), f(e_1'), f(e_2)$ and $f(e_2')$. In this case, if $G^s$ does not contain two $T$-compatible edge-disjoint paths such that one path is from $\text{tail}_G(f(e_1))$ to $\text{head}_G(f(e_1'))$ and the other path is from $\text{tail}_G(f(e_2'))$ to $\text{head}_G(f(e_2))$, then remove the edge from $\cal G$. If $v\notin V_0$ and $\{e_1, e_1'\} \notin T_G(v)$ or if $v\notin V_0$ and $\{\overline{e_2'}, \overline{e_2}\} \notin T_G(v)$, then remove the edge from $\cal G$.
\end{itemize}

We need to check whether there exists a $T$-compatible path between two given vertices in a (direced) forbidden-transition graph.
Szeider shows a dichotomy of NP-complete and linear-time solvable for the problem of finding a $T$-compatible path between two given vertices of an (undirected) graph~\cite{Szeider}.
In contrast, the following lemma shows that in a directed acyclic graph, we can find a $T$-compatible path between two given vertices in polynomial time.

\begin{lemma} \label{pathDAG}
In a directed acyclic graph $G=(V,E)$ with transition system $T_{G}$, we can compute if there is a directed $T$-compatible path $P$ from $s$ to $t$ for $s,t\in V(G)$ in polynomial time.
\end{lemma}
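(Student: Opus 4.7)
The plan is to reduce the problem to a standard reachability question in an auxiliary directed graph, exploiting the fact that the acyclicity of $G$ turns compatibility of walks into compatibility of paths for free.

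The key observation I will use is that in a directed acyclic graph, every directed walk is automatically a simple path: if a walk $v_1, v_2, \ldots, v_k$ revisited some vertex, say $v_i = v_j$ with $i < j$, then the subwalk from $v_i$ to $v_j$ would traverse a directed closed walk in $G$ and thus contain a directed cycle, contradicting the DAG property. Consequently, it suffices to decide whether there is a $T$-compatible \emph{walk} (not necessarily simple) from $s$ to $t$ in $G$.

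To test the existence of a $T$-compatible walk, I will build an auxiliary directed graph $H$ whose vertex set is $E(G) \cup \{s^\star, t^\star\}$. For every pair of edges $e = uv$, $f = vw$ of $G$ sharing the common vertex $v = \text{head}(e) = \text{tail}(f)$, include an arc from $e$ to $f$ in $H$ if and only if $\{e, f\} \in T_G(v)$. Additionally, add an arc from $s^\star$ to every edge of $G$ leaving $s$, and an arc from every edge of $G$ entering $t$ to $t^\star$. Then a directed $s^\star$-$t^\star$ path in $H$ corresponds bijectively (via reading off the sequence of edges of $G$) to a $T$-compatible walk from $s$ to $t$ in $G$, and by the observation above such a walk is automatically a simple path. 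Conversely, any $T$-compatible directed $s$-$t$ path in $G$ yields a directed $s^\star$-$t^\star$ path in $H$ in an obvious way.

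The algorithm therefore constructs $H$ in time polynomial in $|V(G)| + |E(G)| + |T_G|$ and runs any standard $\Oh(|V(H)| + |E(H)|)$-time reachability routine (e.g., BFS) from $s^\star$ to decide whether $t^\star$ is reachable. Correctness follows immediately from the two directions described above, and the running time is polynomial. There is no serious obstacle here; the only subtlety is ensuring that the walk-versus-path distinction is handled, which is precisely where acyclicity of $G$ is used.
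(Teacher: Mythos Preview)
Your proof is correct and essentially identical to the paper's: both construct the line-graph-style auxiliary digraph on $E(G)\cup\{s^\star,t^\star\}$ (the paper calls the extra vertices $s_0,t_0$), encode permitted transitions as arcs, and reduce to reachability, using acyclicity of $G$ to collapse the walk/path distinction. The only cosmetic difference is that you state the ``walk in a DAG is a path'' observation up front, whereas the paper invokes it inside the ``only if'' direction.
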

\begin{proof}
We construct a directed graph $\tilde{G}$ as follows. First create two vertices $s_{0},t_{0}$. Then for every edge $e\in E(G)$, create a vertex $v_e$. For any two edges $e,e'\in E(G)$, create an edge $v_ev_{e'}$ if $ee'\in E(T_{G}(v))$ for some $v\in V(G)$. Finally, create edges $s_0v_e$ for every $e\in E(G)$ such that $\text{tail}_G(e)=s$ and create edges $v_{e'}t_0$ for every $e'\in E(G)$  such that $\text{head}_G(e')=t$. We claim that we can find a directed path $P'$ from $s_0$ to $t_0$ in $\tilde{G}$ if and only if there is a directed $T$-compatible path $P$ from $s$ to $t$ in $G$. For the ``if'' direction, suppose that there is such a path $P= e_1,e_2,\ldots,e_\ell$ in $G$, where $e_1,\ldots,e_\ell$ are the consecutive edges of $P$. Then we can obviously get the path $P'=s_0v_{e_1},v_{e_1}v_{e_2},\ldots,v_{e_{\ell}}t_0$ by the definition of $\tilde{G}$. For the ``only if'' direction, suppose that there is a directed path $P'=s_0v_{e_{i_1}},v_{e_{i_1}}v_{e_{i_2}},\ldots,v_{e_{i_{\ell}}}t_0$ in $\tilde{G}$. Then $P=e_{i_1},e_{i_2},\ldots,e_{i_{\ell}}$ is a directed $T$-compatible walk from $s$ to $t$ in $G$. Since $G$ is acyclic, $P$ is also a path. This completes the proof of the claim. We can build the graph $\tilde{G}$ in $O(|E|^{2})$-time and find an $s_0t_0$ path in $\tilde{G}$ using DFS in $O(|E|^{2})$ time. Thus the lemma holds.
\end{proof}

For $v\in V_0$, by Lemma~\ref{acyclic-V}, there is no dicycle in $G_v$. Moreover, observe that we cannot have a vertex in $V(G) \setminus V(G_v)$ adjacent to more than one edge from $E(G^s) \setminus E(G_v)$, so $G^s$ is also acyclic. So we can decide whether or not to remove the edges of type (i) or (ii) from $\cal G$ in polynomial time according to lemma~\ref{pathDAG}. For the edges of type (iii), we need to compute if there are two edge-disjoint $T$-compatible paths in a directed acyclic graph. We show that it can be done in polynomial time and the algorithm is an adaption of the algorithm of finding two vertex-disjoint paths in DAG by Perl and Shiloach~\cite{DBLP:journals/jacm/PerlS78}.
\begin{lemma}
In a directed acyclic graph $G=(V,E)$ with transition system $T_{G}$, we can solve the edge-disjoint version of \textsc{2-DSPP with transition restrictions} in polynomial time.
\end{lemma}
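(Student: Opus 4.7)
The plan is to reduce the problem to finding two vertex-disjoint directed paths in an auxiliary DAG, a classical problem solvable in polynomial time by the algorithm of Perl and Shiloach~\cite{DBLP:journals/jacm/PerlS78}. This extends to two paths the line-graph-style construction already used in the proof of \cref{pathDAG} for a single $T$-compatible path.

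First, I compute the edge sets $E_1, E_2$ of edges appearing in some shortest $s_i$-$t_i$ path ($i = 1, 2$) exactly as in B\'erczi and Kobayashi~\cite{DBLP:conf/esa/Berczi017}. Since $G$ is acyclic, the Dijkstra step (or a topological-order DP) runs in polynomial time, and afterwards any $s_i$-$t_i$ walk using only edges of $E_i$ is automatically a shortest $s_i$-$t_i$ path. Hence it suffices to find two edge-disjoint $T$-compatible paths $P_1, P_2$ with $P_i$ going from $s_i$ to $t_i$ and $E(P_i) \subseteq E_i$.

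Next, I construct an auxiliary digraph $\tilde G$ whose vertex set is $\{v_e : e \in E_1 \cup E_2\}$ together with four new terminal vertices $\sigma_1, \sigma_2, \tau_1, \tau_2$. For $i \in \{1, 2\}$, I add an arc $\sigma_i \to v_e$ whenever $e \in E_i$ and $\operatorname{tail}_G(e) = s_i$, an arc $v_e \to \tau_i$ whenever $e \in E_i$ and $\operatorname{head}_G(e) = t_i$, and an arc $v_e \to v_{e'}$ whenever $\operatorname{head}_G(e) = \operatorname{tail}_G(e')$ and $\{e, e'\} \in T_G(\operatorname{head}_G(e))$. Since any directed cycle in $\tilde G$ would project to a closed walk in $G$, the graph $\tilde G$ is acyclic. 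A $T$-compatible path $e_1, \ldots, e_r$ in $G$ corresponds bijectively to the directed path $v_{e_1}, \ldots, v_{e_r}$ in $\tilde G$, and two such paths in $G$ are edge-disjoint iff the corresponding $\tilde G$-paths are internally vertex-disjoint. The problem thus reduces to finding two internally vertex-disjoint directed paths in $\tilde G$: one from $\sigma_1$ to $\tau_1$ visiting only $v_e$-nodes with $e \in E_1$, and one from $\sigma_2$ to $\tau_2$ visiting only $v_e$-nodes with $e \in E_2$.

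Finally, I apply a Perl-Shiloach-style dynamic program on the product state space, restricted to pairs $(u_1, u_2)$ with $u_1 \neq u_2$, $u_1 \in \{\sigma_1, \tau_1\} \cup \{v_e : e \in E_1\}$, and $u_2 \in \{\sigma_2, \tau_2\} \cup \{v_e : e \in E_2\}$; a transition extends either coordinate along an arc of $\tilde G$, and the answer is yes iff $(\tau_1, \tau_2)$ is reachable from $(\sigma_1, \sigma_2)$. Since $\tilde G$ is a DAG, the product state graph is a DAG as well, so the DP can be evaluated in time polynomial in $|V(\tilde G)|^2 = \Oh(|E(G)|^2)$. The main subtlety I expect to address is handling nodes $v_e$ for $e \in E_0 = E_1 \cap E_2$, which either path may use but not both simultaneously: this is captured exactly by the $u_1 \neq u_2$ vertex-disjointness constraint, so the reduction is faithful and the overall algorithm runs in polynomial time.
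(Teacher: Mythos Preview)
Your high-level strategy is the same as the paper's: pass to a line-graph-style DAG so that edge-disjointness becomes vertex-disjointness, then run a Perl--Shiloach product construction on pairs of states. The paper does this in one step, working directly on pairs of edges $(e_1,e_2)$ with $e_1\neq e_2$, whereas you first build $\tilde G$ and then take the product; these are equivalent reformulations.

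However, your description of the product DP contains a real gap. You write that a transition ``extends either coordinate along an arc of $\tilde G$'' and that the answer is yes iff $(\tau_1,\tau_2)$ is reachable from $(\sigma_1,\sigma_2)$ under the sole constraint $u_1\neq u_2$. That equivalence is false: the constraint $u_1\neq u_2$ only says the two paths are at distinct vertices \emph{at the same moment of the product walk}, not that they are vertex-disjoint overall. For instance, in $\tilde G$ with arcs $\sigma_1\to a\to b\to\tau_1$ and $\sigma_2\to a\to\tau_2$, the product walk $(\sigma_1,\sigma_2)\to(a,\sigma_2)\to(b,\sigma_2)\to(b,a)\to(b,\tau_2)\to(\tau_1,\tau_2)$ respects $u_1\neq u_2$ throughout, yet both projected paths pass through~$a$. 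The whole point of Perl--Shiloach is the level rule: compute a topological level $\ell(\cdot)$ and only allow advancing the coordinate whose current head has (weakly) larger level (with a special case once one coordinate has reached its terminal). The paper's proof makes this explicit via the four transition types (1)--(4) governed by the comparison $\ell(\mathrm{head}(e_1))$ vs.\ $\ell(\mathrm{head}(e_2))$. With that rule in place, your reduction goes through; without it, the ``if'' direction of your reachability claim fails.
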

\begin{proof}
  First we modify the graph $G$ as follows.
  We create four vertices $s_1',s_2',t_1',t_2'$ and update $V(G)$ as $V(G)\leftarrow V(G)\cup\{s_1',s_2',t_1',t_2'\}$.
  We create four edges $\{s_1's_1,s_2's_2,t_1t_1',t_2t_2'\}$ and update $E(G)$ as $E(G)\leftarrow E(G)\cup\{s_1's_1,s_2's_2,t_1t_1',t_2t_2'\}$.
  Also, for $i=1,2$, we update $T_G(s_i)$ as 
  \[T_G(s_i) \leftarrow T_G(s_i) \cup\{\{e, e'\}\mid e=s_i's_i\text{ and tail}_G(e')=s_i\},\]
  and we update $T_G(t_i)$ as 
  \[T_G(t_i) \leftarrow T_G(t_i) \cup\{\{e, e'\} \mid e'=t_it_i'\text{ and head}_G(e)=t_i\}.\]
  For every vertex $v\in V(G)$, define the level $\ell(v)$ as the length of a longest directed path in $G$ starting from~$v$.
  Since $G$ is acyclic, this can be computed by repeatedly removing a vertex of~$G$.
  Then we create a graph $\tilde{G}$ as follows.
  Let the vertex set of $\tilde{G}$ be $V(\tilde{G})=\{(e_1,e_2)\mid e_1,e_2\in E(G)\text{ and }e_1\neq e_2\}$.
  For every $(e_1,e_2),(e_1',e_2')\in V(\tilde{G})$, create an edge from $(e_1,e_2)$ to $(e_1',e_2')$ if one of the following cases holds:
\begin{enumerate}[label=(\arabic*)]
    \item $e_1=e_1'$, $\ell(\text{head}_G(e_2))\geq \ell(\text{head}_G(e_1))$, $\{e_2, e_2'\} \in T_G(\text{head}_G(e_2))$.
    \item $e_2=e_2'$, $\ell(\text{head}_G(e_1))\geq \ell(\text{head}_G(e_2))$, $\{e_1, e_1'\} \in T_G(\text{head}_G(e_1))$.
    \item $e_1=e_1'=t_1t_1'$, $\ell(\text{head}_G(e_2))<\ell(t_1')$, $\{e_2, e_2'\} \in T_G(\text{head}_G(e_2))$.
    \item $e_2=e_2'=t_2t_2'$, $\ell(\text{head}_G(e_1))<\ell(t_2')$, $\{e_1, e_1'\} \in T_G(\text{head}_G(e_1))$.
\end{enumerate}
We claim that there are two $T$-compatible edge-disjoint paths $P_1$ and $P_2$ in $G$ such that $P_i$ is from $s_i'$ to $t_i'$ for $i=1,2$ if and only if there is a path $P$ from $(s_1's_1,s_2's_2)$ to $(t_1t_1',t_2t_2')$ in $\tilde{G}$.

\medskip

(``only if'' direction): Let $P_1=e_1^0,e_1^1,\ldots,e_1^{p+1}$ and $e_1^0=s_1's_1,e_1^{p+1}=t_1t_1'$. Let $P_2=e_2^0,e_2^1,\ldots,e_2^{q+1}$ and $e_2^0=s_2's_2,e_2^{q+1}=t_2t_2'$. For any $i\in \{0,1,\ldots,p+1\},j\in\{0,1,\ldots,q+1\}$ such that $(i,j) \neq (p+1,q+1)$, one of the following four cases must hold.
\begin{itemize}
    \item[$\bullet$] $i\leq p$ and $j\leq q$, $\ell(\text{head}_G(e_1^i))\leq \ell(\text{head}_G(e_2^j))$ and there is an edge in $\tilde{G}$ from $(e_1^i,e_2^j)$ to $(e_1^i,e_2^{j+1})$.
    \item[$\bullet$] $i\leq p$ and $j\leq q$, $\ell(\text{head}_G(e_1^i))\geq \ell(\text{head}_G(e_2^j))$ and there is an edge in $\tilde{G}$ from $(e_1^i,e_2^j)$ to $(e_1^{i+1},e_2^j)$.
    \item[$\bullet$] $i=p+1$ and $j\leq q$, $\ell(\text{head}_G(e_2^j))<\ell(t_1')$ and there is an edge in $\tilde{G}$ from $(e_1^{p+1},e_2^j)$ to $(e_1^{p+1},e_2^{j+1})$.
    \item[$\bullet$] $j=q+1$ and $i\leq p$, $\ell(\text{head}_G(e_1^i))<\ell(t_2')$ and there is an edge in $\tilde{G}$ from $(e_1^i,e_2^{q+1})$ to $(e_1^{i+1},e_2^{q+1})$.
\end{itemize}
As a result, there is a path $P$ from $(s_1's_1,s_2's_2)$ to $(t_1t_1',t_2t_2')$ in $\tilde{G}$. This finishes the proof for ``only if'' direction.

\medskip

(``if'' direction): Suppose that there exists a path $P$ from $(s_1's_1,s_2's_2)$ to $(t_1t_1',t_2t_2')$ in $\tilde{G}$. Let $P=(e_1^0,e_2^0), (e_1^1,e_2^1), \ldots, (e_1^r,e_2^r)$ such that $(s_1's_1,s_2's_2)=(e_1^0,e_2^0)$ and $(e_1^r,e_2^r)=(t_1t_1',t_2t_2')$. We construct two edge-disjoint $T$-compatible paths $P_1,P_2$ as follows. First we initialize $P_1=e_1^0,P_2=e_2^0$. 
Then for $i=0,\ldots,r-1$, we update $P_1$ and $P_2$ according to the following cases:
\begin{itemize}
    \item[$\bullet$] Suppose that the edge from $(e_1^i,e_2^i)$ to $(e_1^{i+1},e_2^{i+1})$ is of type (1). Then $P_2\leftarrow P_2\cdot e_2^{i+1}$.
    \item[$\bullet$] Suppose that the edge from $(e_1^i,e_2^i)$ to $(e_1^{i+1},e_2^{i+1})$ is of type (2). Then $P_1\leftarrow P_1\cdot e_1^{i+1}$.
    \item[$\bullet$] Suppose that the edge from $(e_1^i,e_2^i)$ to $(e_1^{i+1},e_2^{i+1})$ is of type (3). Then $P_2\leftarrow P_2\cdot e_2^{i+1}$.
    \item[$\bullet$] Suppose that the edge from $(e_1^i,e_2^i)$ to $(e_1^{i+1},e_2^{i+1})$ is of type (4). Then $P_1\leftarrow P_1\cdot e_1^{i+1}$.
\end{itemize}
By the definition of edges of $\tilde{G}$, we get that $P_1$ and $P_2$ are two $T$-compatible edge-disjoint paths in $G$ such that $P_i$ is from $s_i'$ to $t_i'$ for $i=1,2$. We can construct a graph $\tilde{G}$ in $O(|E|^{3})$ time and find a path from $(s_1's_1,s_2's_2)$ to $(t_1t_1',t_2t_2')$ in $O(|E|^{3})$ time. Thus the lemma holds.
\end{proof}

Thus we can also decide whether or not to remove an edge of type (iii) from $\cal G$ in polynomial time and let $\hat{\cal G}$ be the remaining subgraph of $\cal G$. The following lemma shows that we can reduce edge-disjoint version of \textsc{2-DSPP with transition restrictions} to finding a path from $(s_1's_1,t_2't_2)$ to $(t_1t_1',s_2s_2')$ in $\hat{\cal G}$.

\begin{lemma}\label{mainE2DSPPT}
There is a directed path in $\hat{\cal G}$ from $(s_1's_1,t_2't_2)$ to $(t_1t_1',s_2s_2')$ if and only if $G$ has two edge-disjoint $T$-compatible paths $P_1$ and $P_2$ such that $P_i$ is from $s_i'$ to $t_i'$ and $P_i\subseteq E_i$ for $i=1,2$.
\end{lemma}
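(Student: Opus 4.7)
The plan is to lift Lemma~\ref{mainE2DSPP} (the transition-free version of Bérczi and Kobayashi) to the transition-restricted setting by showing that the edge removals that take $\cal G$ to $\hat{\cal G}$ are exactly what is needed to ensure that the bijection between paths in the auxiliary graph and pairs of edge-disjoint shortest paths in $G$ preserves $T$-compatibility. The bookkeeping is local: every edge of $\cal G$ of type (i), (ii) or (iii) records a local ``crossing'' of $P_1$ and $P_2$ at some vertex $v$ of $G^*$, so it suffices to argue that $T$-compatibility at $v$ is captured precisely by the survival of that edge in $\hat{\cal G}$. I will handle the two directions separately, treating the cases $v \notin V_0$ and $v \in V_0$ (where $v$ represents a contracted subgraph $G_v$ of $G$) in parallel.

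For the backward direction, suppose that $G$ has edge-disjoint $T$-compatible paths $P_1, P_2$ as in the statement. By Lemma~\ref{mainE2DSPP}, there exists a path $\Pi$ in $\cal G$ from $(s_1's_1, t_2't_2)$ to $(t_1t_1', s_2s_2')$. I need to show that no edge of $\Pi$ was removed. Take any edge of $\Pi$, say of type (i) with apex $v$; then $P_2$ passes through $v$ using $f(e_2')$ followed by $f(e_2)$ in $G$ (possibly after an intermediate walk inside $G_v$ when $v \in V_0$), while $P_1$ is disjoint from this portion. If $v \notin V_0$, $T$-compatibility of $P_2$ gives $\{\overline{e_2'}, \overline{e_2}\} \in T_G(v)$, so the edge was not removed. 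If $v \in V_0$, the restriction of $P_2$ to $G_v$ (together with the two boundary edges) is a $T$-compatible path in $G^s$ from $\text{tail}_G(f(e_2'))$ to $\text{head}_G(f(e_2))$, witnessing that the edge survives. The same argument applies to types (ii) and (iii), where for type (iii) the two relevant portions of $P_1$ and $P_2$ inside $G^s$ are edge-disjoint by assumption. Hence $\Pi$ is a path in $\hat{\cal G}$.

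For the forward direction, given a path $\Pi$ in $\hat{\cal G}$, I will reuse the reconstruction of $P_1, P_2$ from Lemma~\ref{mainE2DSPP} and then certify $T$-compatibility by walking along $\Pi$ edge by edge. Each edge of $\Pi$ of type (i)/(ii)/(iii) either directly prescribes a transition at a non-contracted vertex $v$ that lies in $T_G(v)$ by the edge-removal rule, or, when $v \in V_0$, provides witness subpaths inside $G^s$ that are $T$-compatible (and, in the type (iii) case, edge-disjoint); I splice these witness subpaths into $P_1$ and $P_2$ instead of the single contracted transition. The resulting $P_1, P_2$ remain edge-disjoint because distinct apices $v$ of $\Pi$ correspond to disjoint sub-edge-sets of $G$, and within a single $G^s$ we explicitly required edge-disjointness of the witness paths. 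The main obstacle, and the place where care is needed, is ensuring that the junction between a witness subpath in $G_v$ and the surrounding part of $P_i$ is itself a permitted transition: this follows from how the transition system on $G^s$ was set up in the removal rule (the two boundary edges $f(e_i), f(e_i')$ are included in $G^s$), so that a $T$-compatible path ``from $\text{tail}_G(f(e_1))$ to $\text{head}_G(f(e_1'))$'' in $G^s$ automatically makes the junctions permitted in $G$. Concatenating all local pieces along $\Pi$ yields two edge-disjoint $T$-compatible $s_i'$-$t_i'$ paths with $E(P_i) \subseteq E_i$, completing the proof.
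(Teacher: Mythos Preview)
Your proposal follows essentially the same approach as the paper's proof: both directions leverage Lemma~\ref{mainE2DSPP} and then argue locally, edge by edge along the auxiliary path, that $T$-compatibility is exactly what the deletion rules encode.

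One small imprecision is worth flagging. In the backward direction you write ``By Lemma~\ref{mainE2DSPP}, there exists a path $\Pi$ in $\cal G$\ldots'' and then immediately assert ``Take any edge of $\Pi$, say of type (i) with apex $v$; then $P_2$ passes through $v$ using $f(e_2')$ followed by $f(e_2)$.'' This inference is not valid for an \emph{arbitrary} $(s_1's_1,t_2't_2)$--$(t_1t_1',s_2s_2')$ path in $\cal G$; it only holds for the \emph{specific} path that the proof of Lemma~\ref{mainE2DSPP} constructs from the given $P_1,P_2$ (the one whose edges march through the consecutive pairs $(e_1^i,e_2^j)$ drawn from $P_1^*$ and $P_2^*$). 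The paper is careful on this point and explicitly invokes the proof, not just the statement, of Lemma~\ref{mainE2DSPP}. Once you take $\Pi$ to be that particular path, the rest of your argument is correct and matches the paper; your forward direction is in fact slightly more explicit than the paper's about splicing in the $G^s$ witness subpaths and handling the junction transitions via the inclusion of the boundary edges $f(e_i),f(e_i')$ in~$G^s$.
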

\begin{proof}
(``if'' direction) Suppose that $G$ has two edge-disjoint $T$-compatible paths $P_1$ and $P_2$ such that $P_i$ is from $s_i'$ to $t_i'$ and $P_i\subseteq E_i$ for $i=1,2$. $E(P_1)\setminus E_0$ forms a directed path $P_1^*$ in $G^*$ from $s_1'$ to $t_1'$. $\overline{E(P_2)\setminus E_0}$ forms a directed path $P_2^*$ in $G^*$ from $t_2'$ to $s_2'$.
Let $P_1^*=e_1^0,e_1^1,...,e_1^{p+1}$ and $e_1^0=s_1's_1,e_1^{p+1}=t_1t_1'$. Let $P_2^*=e_2^0,e_2^1,...,e_2^{q+1}$ and $e_2^0=t_2't_2,e_2^{q+1}=s_2s_2'$.  It follows that $e_1^i\in E_1^*$ for $i=0,1,...,p+1$ and $e_2^j\in E_2^*$ for $j=0,1,...,q+1$. By the proof of Lemma~\ref{mainE2DSPP} (interested readers could refer to the proof of Lemma $8$ in~\cite{DBLP:conf/esa/Berczi017}), there is a directed path $P$ in $\cal G$ from $(s_1's_1,t_2't_2)$ to $(t_1t_1',s_2s_2')$ such that every edge of $P$ is of one of the three types: (i) from $(e_1^i,e_2^j)$ to $(e_1^i,e_2^{j+1})$ ($i\in \{0,...,p+1\},j\in \{0,...,q\}$); (ii) from $(e_1^i,e_2^j)$ to $(e_1^{i+1},e_2^j)$ ($i\in \{0,...,p\},j\in \{0,...,q+1\}$); (iii) from $(e_1^i,e_2^j)$ to $(e_1^{i+1},e_2^{j+1})$ (($i\in \{0,...,p\},j\in \{0,...,q\}$)). Since $P_1$ and $P_2$ are $T$-compatible, by the rules we construct $\hat{\cal G}$, we can see that all edges of $P$ in $\cal G$ remains in $\hat{\cal G}$. This completes the proof for ``if direction''.

\medskip

(``only if'' direction) Suppose that there is a directed path $P$ from $(e_1^0,e_2^0)=(s_1's_1,t_2't_2)$ to $(e_1^r,e_2^r)=(t_1t_1',s_2s_2')$ in $\hat{\cal G}$ that goes through $(e_1^0,e_2^0),(e_1^1,e_2^1),\ldots,(e_1^r,e_2^r)$ consecutively. Since $\hat{\cal G}$ is a subgraph of $\cal G$, by Lemma~\ref{mainE2DSPP}, there exists two edge-disjoint paths $P_1$ and $P_2$ in $G$ such that $P_i$ is from $s_i'$ to $t_i'$ and $P_i\subseteq E_i$ for $i=1,2$ in $G$. Moreover, again from the proof of \cref{mainE2DSPP}, it follows that $e_1^i\in E(P_1)$ and $\overline{e_2^i}\in E(P_2)$. By the rule we construct $\hat{\cal G}$, for an edge from $(e_1^i,e_2^i)$ to $(e_1^{i+1},e_2^{i+1})$ ($i\in \{0,...,r-1\}$), there is a $T$-compatible subpath of $P_1$ from $\text{tail}_G(f(e_1^i))$ to $\text{head}_G(f(e_1^{i+1}))$ if $e_1^i\neq e_1^{i+1}$ or there is a $T$-compatible subpath of $P_2$ from $\text{tail}_G(f(e_2^{i+1}))$ to $\text{head}_G(f(e_2^i))$ if $e_2^i\neq e_2^{i+1}$. It follows that $P_1$ and $P_2$ are also $T$-compatible. This finishes the proof for ``only if'' direction.
\end{proof}

Since $\hat{\cal G}$ is a subgraph of $\cal G$ and $\cal G$ contains at most $|E|^{2}$ vertices, we can detect a path in $\hat{\cal G}$ in polynomial time. Thus Lemma~\ref{mainE2DSPPT} shows that we can solve edge-disjoint version of \textsc{2-DSPP with transition restrictions} in polynomial time  assuming that every cycle in the input graph has positive length.

\subsection{Vertex-disjoint case}

When computing vertex-disjoint version of \textsc{2-DSPP} in the paper of B{\'{e}}rczi and Kobayashi~\cite{DBLP:conf/esa/Berczi017}, they create a new digraph $G_2$ as follows: for every vertex $v\in V$ create two vertices $v^{+}$ and $v^{-}$. Create an edge $v^{-}v^{+}$ with $w(v^{-}v^{+})=0$. Create an edge $u^{+}v^{-}$ if there is an edge $uv$ in $G$ and let $w(u^{+}v^{-})=w(uv)$. Thus \textsc{vertex-disjoint 2-DSPP} in $G$ is reduced to edge-disjoint variant of \textsc{2-DSPP} in $G_2$. However, this method does not work in the forbidden-transitions setting because part of the information of transitions will be lost after creating the new graph $G_2$.

In order to keep the information of transitions, we first modify $G$ as follows.
We compute the set $E_1$ and $E_2$ of $G$. Remove all edges of $E(G)\setminus (E_1\cup E_2)$ from $E(G)$ and all isolated vertices from $V(G)$. When removing the edges or vertices we update the transition system accordingly. Then create four new vertices $s_1',s_2',t_1',t_2'$ and four edges $s_1's_1$, $s_2's_2$, $t_1t_1'$, $t_2t_2'$ all with length $0$. Add $s_i's_i$ and $t_it_i'$ to $E_i$ for $i=1,2$. Thus a shortest path from $s_i$ to $t_i$ corresponds to a shortest path from $s_i'$ to $t_i'$ starting with the edge $s_i's_i$ and ending with the edge $t_it_i'$. We update $T_G(s_i)$ by adding $\{\{e, e'\}\mid e=s_i's_i\text{ and tail}_G(e')=s_i\}$ to it for $i=1,2$. Let $T_G(t_i)=\{\{e, e'\} \mid \text{head}_G(e)=t_i\text{ and }e'=t_it_i'\}$ for $i=1,2$.

Then we create a graph $G'$ as follows. For every vertex $v\in V(G)\setminus \{s_1',s_2',t_1',t_2'\}$, create two vertices $v^{+}$ and $v^{-}$.  We also create four vertices $s_1',s_2',t_1',t_2'$ in $G'$ and create four edges $s_1's_1^-,s_2's_2^-,t_1^+t_1',t_2^+t_2'$ in $G'$ all with length $0$. 
For every vertex $v\in V(G)\setminus \{s_1',s_2',t_1',t_2'\}$, let $in_1(v),\ldots,in_{r_v}(v)$ be the incoming edges of $v$. Then create $r_v$ parallel edges $e_1(v),\ldots,e_{r_v}(v)$ with $\text{tail}_{G'}(e_j(v))=v^-$ and $\text{head}_{G'}(e_j(v))=v^+$ in $G'$ for $j=1,\ldots,r_v$ such that each of the edges is of length $0$. If there is an edge $uv=in_p(v)$ in $G$ for some $p\in [r_v]$ and $u,v\notin\{s_1',s_2',t_1',t_2'\}$, create an edge $in_p(v^-)=u^{+}v^{-}$ in $G'$ and let $w(u^{+}v^{-})=w(uv)$.
Next, we define the transition system for $G'$ as follows. $T_{G'}(v^{-})=\{\{in_j(v^-), e_j(v)\} \mid  j\in [r_{v}]\}$. For every $e,e'\in (E_1\cup E_2)\setminus \{t_1t_1',t_2t_2'\}\subseteq E(G)$ such that $e=uv=in_p(v),e'=vw$ (let $\hat{e}=v^+w^-$), if $\{e, e'\} \in T_{G}(v)$, then $\{e_p(v), \hat{e}\} \in T_{G'}(v^{+})$. In particular, let $e_{i}=t_i^+t_i'$ for $i=1,2$. If $e=ut_i=in_q(t_i)\in E(G)$ for some $q\in [r_{t_i}]$, then $\{e_{q}(t_i), e_{i}\} \in T_{G'}(t_i^{+})$.

We also need to compute the set of edges $E_i'$ that exist in some shortest path (without transitions) from $s_i'$ to $t_i'$ for $i=1,2$. By this definition, obviously $s_i's_i^-,s_i^-s_i^+,t_i^-t_i^+,t_i^+t_i'\in E_i'$ for $i=1,2$.

\begin{lemma}
For $u,v\in V(G)\setminus \{s_1',s_2',t_1',t_2'\}$, $uv\in E_i$ if and only if $u^+v^-\in E_i'$ for $i=1,2$. Moreover, if some incoming edge of $v^-$ belongs to $E_i'$, then all of the parallel edges $v^{-}v^{+}$ belong to $E_i$ for $i=1,2$.
\end{lemma}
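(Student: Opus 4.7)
\medskip

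The plan is to establish a length-preserving correspondence between $s_i$-$t_i$ walks in $G$ and $s_i'$-$t_i'$ walks in $G'$, which then immediately yields both claims. Given a walk $W = v_0, v_1, \ldots, v_k$ in $G$ from $s_i = v_0$ to $t_i = v_k$, define its \emph{lift} to $G'$ by prepending $s_i', s_i^-$, appending $t_i^+, t_i'$, and for each $0 < j < k$ replacing $v_j$ by the two-vertex segment $v_j^-, v_j^+$ connected by one of the parallel zero-weight edges $e_p(v_j)$, while replacing each original edge $v_{j-1}v_j$ (an incoming edge of $v_j$) by the edge $v_{j-1}^+ v_j^-$ in $G'$ of the same weight. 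Conversely, every walk from $s_i'$ to $t_i'$ in $G'$ must, after traversing $s_i' s_i^-$, alternate between vertices $v^-$ and $v^+$ (since the only out-neighbor of $v^-$ is $v^+$ and vice versa via an edge $v^+ w^-$), and so it has a unique projection to a walk in $G$. Both maps are inverse to each other and preserve total length, since all newly inserted edges have weight $0$. Hence $d_{G'}(s_i', t_i') = d_G(s_i, t_i)$, and the lift/projection restricts to a bijection between shortest $s_i$-$t_i$ paths in $G$ and shortest $s_i'$-$t_i'$ paths in $G'$ (up to the choice of parallel edges).

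For the first statement, fix $u, v \in V(G) \setminus \{s_1', s_2', t_1', t_2'\}$. If $uv \in E_i$, take a shortest $s_i$-$t_i$ path $P$ in $G$ containing $uv$; its lift is a shortest $s_i'$-$t_i'$ path in $G'$ containing $u^+ v^-$, so $u^+ v^- \in E_i'$. Conversely, if $u^+ v^- \in E_i'$, take a shortest $s_i'$-$t_i'$ path $P'$ through $u^+ v^-$; its projection is a shortest $s_i$-$t_i$ path through $uv$, so $uv \in E_i$.

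For the second statement (which, reading the edge $v^- v^+ \in E(G')$, is really about membership in $E_i'$), suppose some incoming edge $u^+ v^-$ of $v^-$ lies in $E_i'$, witnessed by a shortest $s_i'$-$t_i'$ path $P'$. Since $v^- \notin \{s_i', t_i'\}$, the path $P'$ continues from $v^-$ via exactly one edge, which must be one of the parallel edges $e_{p}(v)$ (these are the only out-edges of $v^-$). For any other $p' \in [r_v]$, replacing $e_p(v)$ by $e_{p'}(v)$ in $P'$ yields a walk of the same length which is still a simple path, because it uses the vertices $v^-$ and $v^+$ each exactly once. Hence $e_{p'}(v) \in E_i'$ for every $p'$, so every parallel edge between $v^-$ and $v^+$ belongs to $E_i'$.

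The main thing to be careful about is the forcing of the path structure in $G'$: one must note that the edges of $G'$ naturally partition so that from $v^-$ one can only leave by a parallel edge to $v^+$, and from $v^+$ only by an edge of the form $v^+ w^-$ (or the terminal edge $t_i^+ t_i'$). Once this alternation is observed, the lift/projection is straightforward and the claim reduces to the fact that zero-weight parallel edges are interchangeable inside any shortest path that uses one of them.
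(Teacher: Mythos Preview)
Your proposal is correct and follows essentially the same approach as the paper: both set up a length-preserving correspondence between $s_i$-$t_i$ paths in $G$ and $s_i'$-$t_i'$ paths in $G'$ and read off the first claim directly. For the second claim the paper instead verifies the defining conditions of $E_i'$ (tightness via $d_i(v^+)=d_i(v^-)+0$, plus existence of a $v^+$-$t_i'$ path in $E_i'$), whereas you swap one parallel zero-weight edge for another inside a witnessing shortest path; these are equivalent one-line arguments. A minor wording slip: in your lift you should also replace $v_0=s_i$ and $v_k=t_i$ by $s_i^-,s_i^+$ and $t_i^-,t_i^+$ (not just prepend $s_i',s_i^-$ and append $t_i^+,t_i'$), but this does not affect the argument.
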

\begin{proof}
Suppose that $P_1=s_1,w,...,u,v,...,t_1$ is a shortest path from $s_1$ to $t_1$ in $G$. We claim that $P_1'=s_1',s_1^-,s_1^+,w^-,w^+,...,u^-,u^+,v^-,v^+,...,t_1^-,t_1^+,t_1'$ is a shortest path from $s_1'$ to $t_1'$ in $G'$. For contradiction, suppose the claim is not true. Then we can find a path ${P_0'}=s_1',s_1^-,s_1^+,w_1^-,w_1^+,...,w_\ell^-,w_\ell^+,t_1^-,t_1^+,t_1'$ in $G'$ such that $w(P_0')<w(P_1')=w(P_1)$. Then there is a path $P_0=s_1,w_1,...,w_{\ell},t_1$ in $G$ such that $w(P_0)=w(P_0')<w(P_1)$, contradicting that $P_1$ is a shortest path from $s_1$ to $t_1$.

Suppose that $P_1'=s_1',s_1^-,s_1^+,w^-,w^+,...,u^-,u+,v^-,v^+,...,t_1^-,t_1^+,t_1'$ is a shortest path from $s_1'$ to $t_1'$ in $G'$. We claim that $P_1=s_1,w,...,u,v,...,t_1$ is a shortest path from $s_1$ to $t_1$ in $G$. For contradiction, suppose that the claim is not true. Then there exists a path $P_0=s_1w_1...w_{\ell}t_1$ in $G$ such that $w(P_0)<w(P_1)=w(P_1')$. Thus there is a path ${P_0'}=s_1',s_1^-,s_1^+,w_1^-,w_1^+,...,w_\ell^-,w_\ell^+,t_1^-,t_1^+,t_1'$ in $G'$ such that $w(P_0')=w(P_0)<w(P_1')$, contradicting that $P_1'$ is a shortest path from $s_1'$ to $t_1'$ in $G'$.

Similarly we can show that $P_2=s_2,w,...,u,v,...,t_2$ is a shortest path from $s_2$ to $t_2$ in $G$ if and only if $P_2'=s_2',s_2^-,s_2^+,w^-,w^+,...,u^-,u^+,v^-,v^+,...,t_2^-,t_2^+,t_2'$ is a shortest path from $s_2'$ to $t_2'$ in $G'$. It follows that for $u,v\in V(G)\setminus \{s_1',s_2',t_1',t_2'\}$, $uv\in E_i$ if and only if $u^+v^-\in E_i'$ for $i=1,2$.

For $i=1,2$, as $w(v^{-}v^{+})=0$, we have that $d_i(v^{+})=d_i(v^{-})+w(v^{-}v^{+})$. 
Since some ingoing edge of $v^-$ belongs to $E_i'$, there is a $v^-t_i'$ path in $E_i'$. It follows that there is also a $v^+t_i'$ path in $E_i'$. By the definition of $E_i'$, all of the parallel edges $v^{-}v^{+}$ belong to $E_i'$.
\end{proof}

It's not hard to verify that Lemma~\ref{acyclic-V} and Lemma~\ref{contract-V} also apply to $G'$, but we will also state them here for clarity.

\begin{lemma}\cite{DBLP:conf/esa/Berczi017}\label{acyclic}
The edge set $E_i'$ forms no dicycle in $G'$ for $i=1,2$.
\end{lemma}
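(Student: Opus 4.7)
The plan is to derive \cref{acyclic} as a corollary of \cref{acyclic-V} by projecting any hypothetical dicycle of $G'$ back down to a dicycle of $G$. The construction of $G'$ essentially splits each vertex $v \in V(G) \setminus \{s_1',s_2',t_1',t_2'\}$ into $v^-, v^+$ joined by parallel zero-length edges, and replaces each directed edge $uv \in E(G)$ with the edge $u^+v^-$. Thus the edges of $E_i'$ come in two flavors: the ``internal'' parallel edges $v^- v^+$ and the ``crossing'' edges $u^+ v^-$ corresponding to genuine edges $uv$ of $G$ (plus the boundary edges incident to $s_i',t_i'$, which we dismiss immediately).

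First I would observe that $s_i'$ and $t_i'$ cannot lie on any dicycle of $G'$, since $s_i'$ has no incoming edge and $t_i'$ has no outgoing edge. So any dicycle $C'$ in $E_i'$ uses only vertices of the form $v^-$ or $v^+$ for $v \in V(G) \setminus \{s_i', t_i'\}$. Next I would argue that, because every edge of $E_i'$ either goes from some $v^-$ to $v^+$ (a parallel edge) or from some $u^+$ to some $v^-$ (with $u \neq v$), the cycle $C'$ alternates between the $(-)$-copy and $(+)$-copy of the same underlying vertex and the crossing edges between distinct underlying vertices. Contracting each internal edge $v^-v^+$ used by $C'$ identifies $v^-$ with $v^+$ into a single vertex $v$, and each crossing edge $u^+v^-$ becomes the edge $uv \in E_i$ (using the correspondence $u^+v^- \in E_i' \Leftrightarrow uv \in E_i$ from the preceding lemma). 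The resulting closed walk $C$ in $G$ is a dicycle consisting entirely of edges of $E_i$, contradicting \cref{acyclic-V}.

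The main thing to be careful about is verifying that the contraction actually yields a dicycle rather than a degenerate closed walk. Since $C'$ visits each vertex of $G'$ at most once and internal edges $v^- v^+$ only ever connect the two copies of the same $v$, the underlying vertex $v$ shows up at most once in the contracted walk $C$. Thus $C$ is indeed a simple dicycle in $G$, completing the contradiction. The boundary cases involving the source/sink gadgets $s_i', s_i^-, s_i^+$ (and symmetrically for $t_i$) need only the observation that the degree constraints at $s_i'$ and $t_i'$ forbid their participation in a cycle, so no additional case analysis at the endpoints is required.
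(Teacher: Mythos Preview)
Your proposal is correct. The paper itself does not give a proof of this lemma: it merely remarks that ``it's not hard to verify that Lemma~\ref{acyclic-V} and Lemma~\ref{contract-V} also apply to $G'$'' and restates the result with a citation to B\'erczi and Kobayashi. Your projection argument---observing that any dicycle in $G'$ must alternate between internal edges $v^-v^+$ and crossing edges $u^+v^-$, and that contracting the internal edges yields a dicycle in $E_i$ in $G$---is a clean way to make this remark precise, and it correctly invokes Lemma~\ref{acyclic-V} as a black box rather than rerunning the original distance-based argument (which would require first checking that every dicycle in $G'$ has positive length, itself a consequence of the same projection).
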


\begin{lemma}\cite{DBLP:conf/esa/Berczi017}\label{contract}
In the graph $G'$, suppose that $C$ is a dicycle in $E_1'\cup \overline{E_2'}$. Then $E_1'\cap E(C)\subseteq E_2'$ and $E_2'\cap \overline{E(C)}\subseteq E_1'$.
\end{lemma}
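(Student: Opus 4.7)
The statement is Lemma~\ref{contract-V} rephrased for $G'$ in place of $G$, and my plan is to replay the argument of Bérczi and Kobayashi~\cite{DBLP:conf/esa/Berczi017} essentially verbatim, since it depends only on the fact that $E_1'$ and $E_2'$ consist of edges lying on some shortest path to the respective targets, and not on any specific property of the underlying graph beyond the positivity of dicycle lengths.

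Concretely, for $i=1,2$, I let $d_i'(v)$ denote the length of a shortest $s_i'$-$v$ path in $G'$. Every $uv\in E_i'$ is \emph{tight}, i.e., $d_i'(v)=d_i'(u)+w(uv)$, while every edge $uv\in E(G')$ satisfies $d_i'(v)\le d_i'(u)+w(uv)$. I would first telescope $d_1'$ around the dicycle $C$: since $C$ is closed,
$\sum_{e\in C}\bigl(d_1'(\mathrm{head}_C(e))-d_1'(\mathrm{tail}_C(e))\bigr)=0.$
Each $e\in E_1'\cap C$ contributes exactly $w(e)$ by tightness, while each $e\in\overline{E_2'}\cap C$ contributes at least $-w(\overline{e})$ by the triangle inequality for $d_1'$ applied to $\overline{e}\in E(G')$. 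Hence $\sum_{e\in E_1'\cap C}w(e)\le\sum_{e\in E_2'\cap\overline{E(C)}}w(e)$. The symmetric computation with $d_2'$ gives the reverse inequality, so both are equalities and in fact every term is tight. In particular, every $e\in E_1'\cap C$ is tight for $d_2'$, and every $e\in E_2'\cap\overline{E(C)}$ is tight for $d_1'$.

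The remaining step upgrades tightness to membership in $E_{3-i}'$, which I expect to be the subtle part. Fix $e=uv\in E_1'\cap C$; I need to produce an $\mathcal{E}_2'$-path from $v$ to $t_2'$. By Lemma~\ref{acyclic}, $C$ is not entirely contained in $E_1'$, so walking along $C$ from $v$ one eventually hits the first edge $\overline{e''}\in\overline{E_2'}\cap C$, where $e''=xv^\star\in E_2'$; by the definition of $E_2'$, the vertex $v^\star$ reaches $t_2'$ via $\mathcal{E}_2'$. The $E_1'$-edges between $v$ and $v^\star$ along $C$ are all $d_2'$-tight by the equality derived above, so concatenating them with the $\mathcal{E}_2'$-path from $v^\star$ to $t_2'$ produces the required $\mathcal{E}_2'$-path from $v$, giving $e\in E_2'$. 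The symmetric inclusion $E_2'\cap\overline{E(C)}\subseteq E_1'$ is obtained by exchanging the roles of the two indices throughout the argument.
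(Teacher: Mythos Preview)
Your proposal is correct and in fact supplies what the paper deliberately omits: the paper does not prove Lemma~\ref{contract} at all but simply cites it from B\'erczi--Kobayashi with the remark that the same argument carries over to $G'$. What you have written is precisely that carried-over argument---the telescoping of $d_1'$ and $d_2'$ around $C$ to force tightness termwise, followed by walking along $C$ to reach a vertex already known to be connected to $t_{3-i}'$ inside $\mathcal{E}_{3-i}'$. One small point worth making explicit is that the hypothesis ``every dicycle has positive length'' transfers from $G$ to $G'$ (any dicycle in $G'$ projects, after deleting the weight-zero $v^-v^+$ edges, to a dicycle of equal weight in $G$); you rely on this implicitly when invoking Lemma~\ref{acyclic} to guarantee $C\not\subseteq E_1'$, so it is good hygiene to say it.
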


Let $E_0'=E_1'\cap E_2', E_1^*=E_1'\setminus E_0', E_2^*=E_2'\setminus E_0'$.  We contract all edges of $E_0'$ and get a graph $G''=(V'',E'')$. For an edge $e\in E''$, let $f(e)\in E(G')$ denote the edge corresponding to $e$ before the contracting operations.  We need to compute the new transition system of $G''$ as follows. Let $V_0'\subseteq V''$ be the set of vertices that are newly created after contracting $E_0'$. For $v\in V_0'$, we use $G'_v$ to denote the subgraph of $G'-(E(G')\setminus(E_1'\cup E_2'))$ induced by the vertices corresponding to $v$ before contracting. For every $u\in V(G'')\setminus V_0'$, if $f(e)f(e')\in T_{G'}(u)$ then $\{e, e'\} \in T_{G''}(u)$.  Let $v\in V_0'$ and $\text{head}_{G''}(e)=\text{tail}_{G''}(e')=v$. If there is a $T$-compatible path in the subgraph of $G'$ consisting of all edges of $G'_v$ together with $f(e)$ and $f(e')$ from $\text{tail}_{G'}(f(e))$ to $\text{head}_{G'}(f(e'))$, then $\{e, e'\} \in T_{G''}(v)$.  By Lemma~\ref{acyclic}, there is no dicycle in $G'_v$. Moreover, the subgraph of $G'$ consisting of all edges of $G'_v$ together with $f(e)$ and $f(e')$ is also acyclic. So we can compute $T_{G''}(v)$ for every $v\in V_0'$ in polynomial time according to Lemma~\ref{pathDAG}.  Since $E_1^*\cap E_2^*=\emptyset$, then we can reverse all edges of $E_2^*$ (the lengths of edges unchanged) with $E_1^*$ unchanged.  We get a new graph $G^*=(V^*,E^*)$, such that $V^*=V''$ and $E^*=E_1^*\cup \overline{E_2^*}$.

Then we also need to compute the new transition systems of $G^*$. If $e,g\in E_1^*$ and $\{e, g\}\in T_{G''}(v)$ for some $v\in V''$, then $\{e, g\} \in T_{G^{*}}(v)$. If $e,g\in E_2^*$ and $\{e, g\}\in T_{G''}(v)$ for some $v\in V''$, then $\{\bar{g}, \bar{e}\}\in T_{G^{*}}(v)$. Here we use $\bar{e},\bar{g}\in \overline{E_2^*}$ to denote the reverse of $e,g$ respectively.

\begin{claim}
After reversing the edges of $E_2^{*}$, there is no dicycle in $G^*$.
\end{claim}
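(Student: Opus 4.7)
The plan is to argue by contradiction: suppose $G^*$ contains a dicycle $C^*$, and lift it to a simple dicycle $W^\bullet$ in $G'$ whose edges all lie in $E_1' \cup \overline{E_2'}$. Applying Lemma~\ref{contract} will then force every edge of $W^\bullet$ to lie in $E_0' = E_1' \cap E_2'$, contradicting the fact that $W^\bullet$ contains the lift of at least one edge of $C^*$, which by construction lies outside $E_0'$.

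To construct $W^\bullet$, write $C^* = (v_0, e_1, v_1, \ldots, e_k, v_k = v_0)$. For each $e_i \in E_1^*$ I include $f(e_i) \in E_1^*$ traversed forward in $G'$, and for each $e_i \in \overline{E_2^*}$, writing $e_i = \bar{g}$ with $g \in E_2^*$, I include $f(g) \in E_2^*$ traversed backward, contributing an edge of $\overline{E_2'}$. Whenever two consecutive lifts meet at a contracted vertex $v_i \in V_0'$, their two $G'$-endpoints both lie in the set $S_{v_i} \subseteq V(G')$ of vertices merged into $v_i$; I bridge them by inserting a connecting path inside $G'_{v_i}$ using only $E_0'$-edges. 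Such a path exists because $S_{v_i}$ is connected in the underlying undirected $E_0'$-subgraph by definition of the contraction, and since $E_0' \subseteq E_1' \cap E_2'$, each $E_0'$-edge traversed forward lies in $E_1'$ and each $E_0'$-edge traversed backward lies in $\overline{E_2'}$. Hence every edge of $W^\bullet$ stays within $E_1' \cup \overline{E_2'}$.

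The main obstacle is verifying that $W^\bullet$ is genuinely a simple dicycle and not merely a closed walk with repeated vertices, since Lemma~\ref{contract} speaks about dicycles. Here I will use the simplicity of $C^*$: the vertices $v_1, \ldots, v_k$ in $G^*$ are pairwise distinct, so the corresponding sets $S_{v_i}$ (setting $S_v = \{v\}$ whenever $v \notin V_0'$) are pairwise disjoint in $V(G')$. Consequently, connecting paths inserted at different $v_i$'s are vertex-disjoint, and non-consecutive lifted edges of $C^*$ have endpoints in disjoint $S$-sets, while consecutive ones share only the designated junction vertex. Choosing each connecting path to be a simple path in $G'_{v_i}$, which is possible since any walk in a connected subgraph contains a simple sub-path between the same endpoints, then makes the entire $W^\bullet$ a simple dicycle.

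Finally, Lemma~\ref{contract} applied to $W^\bullet$ gives $E_1' \cap E(W^\bullet) \subseteq E_2'$ and $E_2' \cap \overline{E(W^\bullet)} \subseteq E_1'$, so every edge of $W^\bullet$ lies in $E_1' \cap E_2' = E_0'$. However, the lifts of the edges of $C^*$ lie in $E_1^* \cup E_2^* = (E_1' \cup E_2') \setminus E_0'$, so at least one edge of $W^\bullet$ is outside $E_0'$, the desired contradiction. I expect the delicate point to be the simplicity argument for $W^\bullet$; the algebraic finish via Lemma~\ref{contract} is then immediate.
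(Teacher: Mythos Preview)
Your proposal is correct and follows essentially the same approach as the paper: lift a putative dicycle $C^*$ of $G^*$ back to a dicycle in $G'$ with edges in $E_1' \cup \overline{E_2'}$, then apply Lemma~\ref{contract} to force all its edges into $E_0'$, contradicting that the lifts of $C^*$'s edges lie in $E_1^* \cup E_2^*$. The paper's proof is much terser---it leaves the lifting entirely implicit and first uses Lemma~\ref{acyclic} to rule out the monochromatic cases $E(C^*)\subseteq E_1^*$ and $E(C^*)\subseteq\overline{E_2^*}$ before invoking Lemma~\ref{contract}---whereas you handle all cases uniformly through Lemma~\ref{contract} and spell out the lifting carefully, including the simplicity argument; this is a mild streamlining but not a different route.
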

\begin{proof}
Suppose for contradiction that there is a dicycle $C$ in $G^*$. By Lemma~\ref{acyclic}, $E(C)\not\subseteq E_1^*,E(C)\not\subseteq \overline{E_2^*}$. It follows that $E(C)\cap E_1^*\neq\emptyset$ and $E(C)\cap \overline{E_2^*}\neq\emptyset$. Then by Lemma~\ref{contract}, $E(C)$ should have been contracted in $G''$, contradicting that $C$ is a dicycle in $G^*$.
\end{proof}

We define a new digraph $\cal G$ as follows. Let $W=E_1^*\times \overline{E_2^*}$ be its vertex set. For $(e_1,e_2),(e_1',e_2')\in W$, there is a directed edge from $(e_1,e_2)$ to $(e_1',e_2')$ if one of three cases hold.
\begin{enumerate}[label=(\roman*)]
\item $e_1=e_1'$, $\text{head}_{G^*}(e_2)=\text{tail}_{G^*}(e_2')=v$ and $\{e_2, e_2'\}\in T_{G^*}(v)$. There is no path from $\text{head}_{G^*}(e_1)$ to $v$ in $G^*$.

\item $e_2=e_2'$, $\text{head}_{G^*}(e_1)=\text{tail}_{G^*}(e_1')=v$ and $\{e_1, e_1'\} \in T_{G^*}(v)$. There is no path from $\text{head}_{G^*}(e_2)$ to $v$ in $G^*$.

\item $\text{head}_{G^*}(e_2)=\text{tail}_{G^*}(e_2')=\text{head}_{G^*}(e_1)=\text{tail}_{G^*}(e_1')=v$ and $\{e_1, e_1'\}, \{e_2, e_2'\}\in T_{G^*}(v)$. Furthermore, if $v\in V_0$, let $G^{s}$ be the subgraph of $G'$ consisting of all edges of $G'_v$ together with $f(e_1), f(e_1'), f(\overline{e_2})$ and $f(\overline{e_2'})$. Then $G^{s}$ contains two $T$-compatible vertex-disjoint paths such that one path is from $\text{tail}_{G'}(f(e_1))$ to $\text{head}_{G'}(f(e_1'))$ and the other path is from $\text{tail}_{G'}(f(\overline{e_2'}))$ to $\text{head}_{G'}(f(\overline{e_2}))$.
\end{enumerate}

In the third case above, we claim that $v$ must belong to $V_0'$. Suppose for contradiction that $v\notin V_0'$. Clearly, $v \notin \{s'_1,s'_2,t'_1,t'_2\}$, as in must be both, head and tail of some edges. So there are two remaining cases. The first case is that $v=u^-$ for some $u\in V(G)$. Then all outgoing edges of $u^-$ in $G''$ are parallel edges, that is, $\text{head}_{G''}(\overline{e_2})=\text{head}_{G''}(e_1')$. Then $e_1'$ and $e_2$ form a cycle in $G^*$, contradicting that $G^*$ is acyclic. The second case is that $v=u^+$ for some $u\in V(G)$. Then all ingoing edges of $u^+$ in $G''$ are parallel edges, that is, $\text{tail}_{G''}(\overline{e_2'})=\text{tail}_{G''}(e_1)$. Then $e_1$ and $e_2'$ form a cycle in $G^*$, contradicting that $G^*$ is acyclic. Thus $v$ must belong to $V_0'$. Then we need to solve the vertex-disjoint version of \textsc{2-DSPP with transition restrictions} in the acyclic graph $G'_v\cup \{e_1,e_1',\overline{e_2},\overline{e_2'}\}$. The following lemma shows that we can do it in polynomial time. The algorithm is an adaption of the algorithm of finding two vertex-disjoint paths in DAG given by Perl and Shiloach~\cite{DBLP:journals/jacm/PerlS78}.

\begin{lemma}
In a directed acyclic graph $G=(V,E)$ with transition system $T_{G}$, we can solve the vertex-disjoint version of \textsc{2-DSPP with transition restrictions} in polynomial time.
\end{lemma}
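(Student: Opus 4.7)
The plan is to adapt the auxiliary-graph construction from the edge-disjoint lemma above, tightening the vertex set of $\tilde G$ so as to enforce vertex-disjointness, and leaning on the Perl--Shiloach paradigm for vertex-disjoint paths in DAGs. First, as in the edge-disjoint case, add source and sink copies $s_1',s_2',t_1',t_2'$ with zero-length edges $s_i's_i$ and $t_it_i'$, and extend $T_G$ so that these new edges can legally open and close a path from $s_i'$ to $t_i'$. Compute the level function $\ell(v)$ equal to the length of a longest directed path starting at $v$; this is well-defined and computable in linear time because $G$ is acyclic. Let the vertex set of $\tilde G$ consist of pairs $(e_1,e_2)\in E(G)^2$ with $\text{head}_G(e_1)\neq\text{head}_G(e_2)$, and include directed edges under the four rules (1)--(4) from the edge-disjoint construction, but with the additional proviso that whenever $e_i$ is advanced to $e_i'$, we require $\text{head}_G(e_i')\neq\text{head}_G(e_{3-i})$ so that the new state still lies in $V(\tilde G)$.

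The central claim is: $\tilde G$ contains a directed path from $(s_1's_1,s_2's_2)$ to $(t_1t_1',t_2t_2')$ if and only if $G$ admits two vertex-disjoint $T$-compatible paths $P_i$ from $s_i'$ to $t_i'$ for $i=1,2$. The ``only if'' direction mirrors the edge-disjoint proof: given such $P_1=e_1^0,\ldots,e_1^{p+1}$ and $P_2=e_2^0,\ldots,e_2^{q+1}$, build the $\tilde G$-walk by always advancing the coordinate whose head has the larger level, handling the termination cases (3)--(4) analogously; vertex-disjointness of $P_1$ and $P_2$ guarantees that every intermediate pair lies in $V(\tilde G)$, and the transition and level conditions hold by construction. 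For the ``if'' direction, given a $\tilde G$-walk, extract $P_1$ and $P_2$ by collecting, in order, the distinct edges appearing in each coordinate; compatibility with $T_G$ is built into every $\tilde G$-edge, and the path property within each $P_i$ is automatic because $G$ is acyclic.

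The main obstacle is proving that the reconstructed $P_1$ and $P_2$ are in fact vertex-disjoint, since head-distinctness at $\tilde G$-vertices rules out only simultaneous collisions and not a situation in which some vertex $v$ is visited by $P_1$ before $P_2$ reaches it. Here the level-based advancement rule is what does the work, as in Perl--Shiloach: whenever a coordinate with larger head-level is advanced, the level of that coordinate strictly decreases, while the other coordinate's level is pinned; a straightforward induction on the $\tilde G$-walk shows that the multiset of levels seen on side $1$ after first visiting $v$ is strictly below $\ell(v)$ for subsequent steps whenever side $2$ could potentially reach $v$, and symmetrically. Hence $v$ cannot appear on both $P_1$ and $P_2$, establishing vertex-disjointness. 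Finally, $\tilde G$ has $\Oh(|E|^2)$ vertices and $\Oh(|E|^3)$ edges, so reachability can be decided by BFS in $\Oh(|E|^3)$ time, completing the algorithm.
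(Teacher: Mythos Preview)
Your proposal is correct and follows essentially the same Perl--Shiloach product-graph approach as the paper. One cosmetic difference: the paper's rules (1)--(2) additionally forbid $\text{head}_G(e_i')=\text{tail}_G(e_{3-i})$, whereas you only forbid $\text{head}_G(e_i')=\text{head}_G(e_{3-i})$; your level argument makes the extra condition redundant, though note that the clean invariant is about the \emph{other} side's head-level (once side~1 advances past $v$ via rule~(2) we have $\ell(\text{head of side 2})\le\ell(v)$ with side~2's head $\neq v$, and levels only drop thereafter), not side~1's as your sketch is phrased.
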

\begin{proof}
First we modify the graph $G$ as follows. We create four vertices $s_1',s_2',t_1',t_2'$ and update $V(G)$ as $V(G)\leftarrow V(G)\cup\{s_1',s_2',t_1',t_2'\}$. We create four edges $\{s_1's_1,s_2's_2,t_1t_1',t_2t_2'\}$ and update $E(G)$ as $E(G)\leftarrow E(G)\cup\{s_1's_1,s_2's_2,t_1t_1',t_2t_2'\}$. Also, for $i=1,2$, we update $T_G(s_i)$ as 
\[T_G(s_i) \leftarrow T_G(s_i) \cup\{\{e, e'\} \mid e=s_i's_i\text{ and tail}_G(e')=s_i\},\] 
and we update $T_G(t_i)$ as 
\[T_G(t_i)\leftarrow T_G(t_i) \cup \{\{e, e'\} \mid e'=t_it_i'\text{ and head}_{G}(e)=t_i\}.\] 
For every vertex $v\in V(G)$, define the level $\ell(v)$ as the length of a longest directed path in $G$ starting from $v$. This can be computed by repeatedly removing a vertex of $G$.  Then we create a graph $\tilde{G}$ as follows. Let the vertex set of $\tilde{G}$ be $V(\tilde{G})=\{(e_1,e_2)\mid e_1,e_2\in E(G)\text{ and }e_1\neq e_2\}$. For every $(e_1,e_2),(e_1',e_2')\in V(\tilde{G})$, create an edge from $(e_1,e_2)$ to $(e_1',e_2')$ if one of the following cases holds:
\begin{enumerate}[label=(\arabic*)]
    \item $e_1=e_1'$, $\ell(\text{head}_{G}(e_2))\geq \ell(\text{head}_{G}(e_1))$, $\{e_2, e_2'\} \in T_G(\text{head}_{G}(e_2))$, $\text{head}_{G}(e_2')\neq \text{tail}_{G}(e_1)$ and $\text{head}_{G}(e_2')\neq \text{head}_{G}(e_1)$.
    \item $e_2=e_2'$, $\ell(\text{head}_{G}(e_1))\geq \ell(\text{head}_{G}(e_2))$, $\{e_1, e_1'\}\in T_G(\text{head}_{G}(e_1))$, $\text{head}_{G}(e_1')\neq \text{tail}_{G}(e_2)$ and $\text{head}_{G}(e_1')\neq \text{head}_{G}(e_2)$.
    \item $e_1=e_1'=t_1t_1'$, $\ell(\text{head}_{G}(e_2))<\ell(t_1')$, $\{e_2, e_2'\}\in T_G(\text{head}_{G}(e_2))$.
    \item $e_2=e_2'=t_2t_2'$, $\ell(\text{head}_{G}(e_1))<\ell(t_2')$, $\{e_1, e_1'\}\in T_G(\text{head}_{G}(e_1))$.
\end{enumerate}
We claim that there are two $T$-compatible vertex-disjoint paths $P_1$ and $P_2$ in $G$ such that $P_i$ is from $s_i'$ to $t_i'$ for $i=1,2$ if and only if there is a path $P$ from $(s_1's_1,s_2's_2)$ to $(t_1t_1',t_2t_2')$ in $\tilde{G}$.

(``only if'' direction): Let $P_1=e_1^0,e_1^1,\ldots,e_1^{p+1}$ and $e_1^0=s_1's_1,e_1^{p+1}=t_1t_1'$. Let $P_2=e_2^0,e_2^1,\ldots,e_2^{q+1}$ and $e_2^0=s_2's_2,e_2^{q+1}=t_2t_2'$. For any $i\in \{0,1,\ldots,p+1\},j\in\{0,1,\ldots,q+1\}$, such that $(i,j) \neq (p+1,q+1)$, one of the following four cases must hold.
\begin{itemize}
    \item[$\bullet$] $i\leq p$ and $j\leq q$, $\ell(\text{head}_G(e_1^i))\leq \ell(\text{head}_G(e_2^j))$, then there is an edge in $\tilde{G}$ from $(e_1^i,e_2^j)$ to $(e_1^i,e_2^{j+1})$.
    \item[$\bullet$] $i\leq p$ and $j\leq q$, $\ell(\text{head}_G(e_1^i))\geq \ell(\text{head}_G(e_2^j))$, then there is an edge in $\tilde{G}$ from $(e_1^i,e_2^j)$ to $(e_1^{i+1},e_2^j)$.
    \item[$\bullet$] $i=p+1$ and $j\leq q$, $\ell(\text{head}_G(e_2^j))<\ell(t_1')$, then there is an edge in $\tilde{G}$ from $(e_1^{p+1},e_2^j)$ to $(e_1^{p+1},e_2^{j+1})$.
    \item[$\bullet$] $j=q+1$ and $i\leq p$, $\ell(\text{head}_G(e_1^i))<\ell(t_2')$, then there is an edge in $\tilde{G}$ from $(e_1^i,e_2^{q+1})$ to $(e_1^{i+1},e_2^{q+1})$.
\end{itemize}
As a result, there is a path $P$ from $(s_1's_1,s_2's_2)$ to $(t_1t_1',t_2t_2')$ in $\tilde{G}$. This finishes the proof for ``only if'' direction.\\

(``if'' direction): Suppose that there exists a path $P$ from $(s_1's_1,s_2's_2)$ to $(t_1t_1',t_2t_2')$ in $\tilde{G}$. Let $P=(e_1^0,e_2^0),(e_1^1,e_2^1),\ldots,(e_1^r,e_2^r)$, such that $(e_1^0,e_2^0)=(s_1's_1,s_2's_2)$ and $(e_1^r,e_2^r)=(t_1t_1',t_2t_2')$. We construct two vertex-disjoint $T$-compatible paths $P_1,P_2$ as follows. First we initialize $P_1=e_1^0,P_2=e_2^0$. 
Then for $i=0,\ldots,r-1$, we update $P_1$ and $P_2$ according to the following cases:
\begin{itemize}
    \item[$\bullet$] Suppose that the edge from $(e_1^i,e_2^i)$ to $(e_1^{i+1},e_2^{i+1})$ is of type (1). Then $P_2\leftarrow P_2\cdot e_2^{i+1}$.
    \item[$\bullet$] Suppose that the edge from $(e_1^i,e_2^i)$ to $(e_1^{i+1},e_2^{i+1})$ is of type (2). Then $P_1\leftarrow P_1\cdot e_1^{i+1}$.
    \item[$\bullet$] Suppose that the edge from $(e_1^i,e_2^i)$ to $(e_1^{i+1},e_2^{i+1})$ is of type (3). Then $P_2\leftarrow P_2\cdot e_2^{i+1}$.
    \item[$\bullet$] Suppose that the edge from $(e_1^i,e_2^i)$ to $(e_1^{i+1},e_2^{i+1})$ is of type (4). Then $P_1\leftarrow P_1\cdot e_1^{i+1}$.
\end{itemize}
By the definition of edges of $\tilde{G}$, we get that $P_1$ and $P_2$ are two $T$-compatible vertex-disjoint paths in $G$ such that $P_i$ is from $s_i'$ to $t_i'$ for $i=1,2$. We can construct a graph $\tilde{G}$ in $O(|E|^{3})$ time and find a path from $(s_1's_1,s_2's_2)$ to $(t_1t_1',t_2t_2')$ in $O(|E|^{3})$ time. Thus the lemma holds.
\end{proof}

By the results above, we can construct $\cal G$ in polynomial time. Now we show that we can solve the vertex-disjoint version of \textsc{2-DSPP with transition restrictions} in $G$ by finding a path in $\cal G$ from $(s_1's_1^-,t_2't_2^+)$ to $(t_1^+t_1',s_2^-s_2')$. Note that $s_i^-,t_i^+\in V(G')$ might be the endpoints of edges of $E_0'$ for $i=1,2$. In this case, although we might contract the edges incident to $s_i^-,t_i^+\in V(G')$ and replace these vertices with new vertices, we slightly abuse $s_i^-,t_i^+$ to denote the vertex adjacent to $s_i',t_i'$ respectively in $G^*$ for $i=1,2$ for the sake of simplicity.

\begin{lemma} \label{mainV2DSPP}
There is a directed path in $\cal G$ from $(s_1's_1^-,t_2't_2^+)$ to $(t_1^+t_1',s_2^-s_2')$ if and only if $G'$ has two vertex-disjoint $T$-compatible paths $P_1$ and $P_2$ such that $P_i$ is from $s_i'$ to $t_i'$ and $P_i\subseteq E_i'$ for $i=1,2$.
\end{lemma}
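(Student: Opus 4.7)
The plan is to mirror the argument for the edge-disjoint case in \cref{mainE2DSPPT}, using the Bérczi--Kobayashi machinery as a black box for the transitionless structure and then accounting for the transitions via the three edge types defining~$\mathcal{G}$. The essential observation is that the contraction of $E_0'$ and the reversal of $E_2^*$ in $G^*$ have been mirrored by a corresponding definition of $T_{G''}$ and $T_{G^*}$, and the type (iii) edges of $\mathcal{G}$ already encode the vertex-disjoint routing requirement inside each contracted subgraph $G'_v$. So at the level of $\mathcal{G}$, what remains is a bookkeeping argument.

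\textbf{The ``if'' direction.} Suppose $P_1, P_2$ are vertex-disjoint $T$-compatible paths in $G'$ with $P_i$ going from $s_i'$ to $t_i'$ and $E(P_i)\subseteq E_i'$. Remove the edges of $E_0'$ from each $P_i$; this partitions $P_1$ into maximal $E_1^*$-segments $P_1^* = e_1^0, e_1^1,\ldots, e_1^{p+1}$ in $G^*$, and analogously reversing and partitioning $P_2$ gives $P_2^* = e_2^0,\ldots,e_2^{q+1}$ in $G^*$. I would then apply the Bérczi--Kobayashi construction (the forward direction of \cref{mainE2DSPP}) verbatim to obtain a directed path in $\mathcal{G}$ traversing vertices $(e_1^i, e_2^j)$ with edges of types (i), (ii), (iii). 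The only new content is to verify that each such edge survives the transition filter: for an edge of type (i) the vertex $v$ is an endpoint of consecutive edges on $P_2$, so by $T$-compatibility of $P_2$ and the definition of $T_{G^*}$, we have $\{e_2, e_2'\}\in T_{G^*}(v)$; type (ii) is symmetric; and for type (iii), the vertex $v$ must lie in $V_0'$ (by the acyclicity argument already given in the paper), and the two subpaths of $P_1$ and $P_2$ through $G'_v$ are vertex-disjoint and $T$-compatible, witnessing the required two vertex-disjoint $T$-compatible paths in $G^s$.

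\textbf{The ``only if'' direction.} Given a directed path $(e_1^0,e_2^0),\ldots,(e_1^r,e_2^r)$ in $\mathcal{G}$ from $(s_1's_1^-,t_2't_2^+)$ to $(t_1^+t_1', s_2^- s_2')$, the (transitionless) \cref{mainE2DSPP} applied to $G'$ already yields vertex-disjoint paths $P_1, P_2$ with $P_i\subseteq E_i'$ in $G'$. I would then show $T$-compatibility of $P_i$ by walking through the $\mathcal{G}$-path edge by edge. For a type (i) edge, the vertex $v$ of the transition lies in $V(G^*)\setminus V_0'$, so $\{e_2,e_2'\}\in T_{G^*}(v)$ translates directly to compatibility at the corresponding vertex of $P_2$; type (ii) is symmetric. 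For a type (iii) edge, $v\in V_0'$, and the existence of two vertex-disjoint $T$-compatible paths in $G^s$ provides compatible local routings for both $P_1$ and $P_2$ through $G'_v$ which can be substituted in place of the contracted portion. Finally, I would appeal to the construction of $T_{G''}$ from $T_{G'}$ via the DAG-subpath criterion to lift compatibility from $G^*$ back to $G'$, exactly as was done for edges along $E_0'$ when $G''$ was built.

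\textbf{Anticipated main obstacle.} The delicate step is the vertex-disjoint reassembly at contracted nodes: one must ensure that when the two subpaths inside $G'_v$ that witness a type (iii) edge are stitched together with the subpaths at other contracted nodes (and with the pieces in $E_1^*\cup\overline{E_2^*}$), the two global paths $P_1, P_2$ remain pairwise vertex-disjoint. The critical observation here, already used implicitly in the edge-disjoint case, is that distinct contracted subgraphs $G'_{v}$ and $G'_{v'}$ (with $v\neq v'$ in $V_0'$) are vertex-disjoint in $G'$, and each of them is only entered/left through the edges determined by the $\mathcal{G}$-path. Thus vertex-disjointness inside each $G'_v$ (guaranteed by the type (iii) condition) composes with vertex-disjointness between different contracted parts (guaranteed by the fact that $P_1^*$ and $P_2^*$ are vertex-disjoint in $G^*$, which follows from the Bérczi--Kobayashi analysis). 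I would spell out this composition carefully, using the vertex-splitting structure of $G'$ (every internal vertex $v$ appears as $v^-, v^+$ connected by parallel edges whose transition system forces in/out pairing) to confirm that no two assembled subpaths can meet at a vertex.
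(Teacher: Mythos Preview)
Your overall plan is the same as the paper's: reduce both directions to the Bérczi--Kobayashi structure on $G^*$ and then account for transitions via the three edge types. The ``if'' direction is essentially correct and matches the paper (which re-derives the relevant part of the Bérczi--Kobayashi argument directly from acyclicity of $G^*$ rather than citing \cref{mainE2DSPP}, but this is cosmetic).

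The ``only if'' direction has a genuine gap. You invoke \cref{mainE2DSPP} as a black box to obtain vertex-disjoint paths in $G'$, but that lemma only delivers \emph{edge}-disjoint paths. Because $G'$ has parallel edges $e_1(v),\ldots,e_{r_v}(v)$ between $v^-$ and $v^+$, edge-disjointness in $G'$ does not imply vertex-disjointness; two edge-disjoint paths can both pass through $v^-$ and $v^+$ using different parallel edges. Your anticipated-obstacle paragraph does flag the issue, but the proposed fix --- that ``$P_1^*$ and $P_2^*$ are vertex-disjoint in $G^*$, which follows from the Bérczi--Kobayashi analysis'' --- is not something their analysis gives you; it gives edge-disjointness in $G^*$, nothing more. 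The paper avoids this by not using \cref{mainE2DSPP} at all in this direction: it walks along the $\mathcal{G}$-path and \emph{explicitly} builds $P_1,P_2$ piece by piece, so that whenever both paths would meet (a type~(iii) step), the step supplies two \emph{vertex}-disjoint local routes through $G'_v$; and the paper has already argued, just before the lemma, that such a meeting vertex $v$ must lie in $V_0'$. That argument is exactly where the vertex-splitting structure of $G'$ is used, and it is what your proposal is missing.

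One smaller slip: you assert that for a type~(i) edge the vertex $v$ lies in $V(G^*)\setminus V_0'$. This is false --- $v$ can be in $V_0'$, and in that case the condition $\{e_2,e_2'\}\in T_{G^*}(v)$ unfolds (via $T_{G''}$) to the existence of a $T$-compatible path through $G'_v$. You partially correct yourself later (``appeal to the construction of $T_{G''}$''), but the earlier sentence should be fixed.
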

\begin{proof}
(``if'' direction) Suppose that $G'$ has two vertex-disjoint $T$-compatible paths $P_1$ and $P_2$ such that $P_i$ is from $s_i'$ to $t_i'$ and $P_i\subseteq E_i'$ for $i=1,2$. Recall that we contract the edges of $E_0'$ in $G'$ and reverse the edges of $E_2^*$ in $G''$ to get $G^*$. So by the definition of transition systems of $G''$ and $G^*$, the set $E(P_1)\setminus E_0'$ forms a directed $T$-compatible path $P_1^*$ in $G^*$ from $s_1'$ to $t_1'$, and the set $\overline{E(P_2)\setminus E_0'}$ forms a directed $T$-compatible path $P_2^*$ in $G^*$ from $t_2'$ to $s_2'$. Let $P_1^*=e_1^0,e_1^1,\ldots,e_1^{p+1}$ and $e_1^0=s_1's_1^-,e_1^{p+1}=t_1^+t_1'$. Let $P_2^*=e_2^0,e_2^1,\ldots,e_2^{q+1}$ and $e_2^0=t_2't_2^+,e_2^{q+1}=s_2^-s_2'$. It follows that $e_1^i\in E_1^*$ for $i=0,1,\ldots,p+1$ and $e_2^j\in E_2^*$ for $j=0,1,\ldots,q+1$. Since $G^*$ is acyclic, for any $i=0,1,\ldots,p+1$ and for any $j=0,1,\ldots,q+1$, at least one of the following three cases holds.
\begin{enumerate}[label=(\arabic*)]
    \item  There is no directed path from $\text{head}_{G^*}(e_1^i)$ to $\text{head}_{G^*}(e_2^j)$ in $G^*$.
    \item  There is no directed path from $\text{head}_{G^*}(e_2^j)$ to $\text{head}_{G^*}(e_1^i)$ in $G^*$.
    \item  $\text{head}_{G^*}(e_1^i)=\text{head}_{G^*}(e_2^j)$.
\end{enumerate}
By the definition of $\cal G$, the following statements hold.
\begin{itemize}
    \item[$\bullet$] If (1) holds and $j\neq q+1$, then $\cal G$ has an edge from $(e_1^i,e_2^j)$ to $(e_1^i,e_2^{j+1})$.
    \item[$\bullet$] If (2) holds and $i\neq p+1$, then $\cal G$ has an edge from $(e_1^i,e_2^j)$ to $(e_1^{i+1},e_2^{j})$.
    \item[$\bullet$] If (3) holds, then $\cal G$ has an edge from $(e_1^i,e_2^j)$ to $(e_1^{i+1},e_2^{j+1})$.
\end{itemize}
We can see that if $i=p+1$ then (1) holds and if $j=q+1$ then (2) holds. As a result, there is an edge from $(e_1^i,e_2^j)$ to $(e_1^{i+1},e_2^j)$, $(e_1^i,e_2^{j+1})$ or $(e_1^{i+1},e_2^{j+1})$ in $\cal G$ if $(i,j)\neq (p+1,q+1)$. It follows that starting from $(e_1^i,e_2^j)$ with $i=0,j=0$, we can find a directed path ending at $(e_1^{p+1},e_2^{q+1})$ through increasing $i$ by $1$, increasing $j$ by $1$ or increasing both $i$ and $j$ by $1$ iteratively. This concludes the proof for ``if direction''.\\

(``only if'' direction) Suppose that there is a directed path from $(e_1^0,e_2^0)=(s_1's_1^-,t_2't_2^+)$ to $(e_1^r,e_2^r)=(t_1^+t_1',s_2^-s_2')$ in $\cal G$ that goes through $(e_1^0,e_2^0),(e_1^1,e_2^1),...,(e_1^r,e_2^r)$ consecutively. We construct two $T$-compatible paths $P_1,P_2$ in $G'$ as follows. 
First we initialize $P_1=e_1^0,P_2=\overline{e_2^0}$. Then for $i=0,...,r-1$, we update $P_1$ and $P_2$ according to the following three cases:
\begin{itemize}
    \item[$\bullet$] Suppose that the edge from $(e_1^i,e_2^i)$ to $(e_1^{i+1},e_2^{i+1})$ is of type (i), namely $e_1^i=e_1^{i+1}$, $\text{head}_{G^*}(e_2^i)=\text{tail}_{G^*}(e_2^{i+1})=v$ and $\{e_2^i, e_2^{i+1}\} \in T_{G^*}(v)$. There is no path from $\text{head}_{G^*}(e_1^i)$ to $v$ in $G^*$. If $v\in V_0'$, let $Q$ be the $T$-compatible path in the subgraph of $G'$ consisting of all edges of $G'_v$ together with $f(\overline{e_2^i})$ and $f(\overline{e_2^{i+1}})$ from $\text{tail}_{G'}(f(\overline{e_2^{i+1}}))$ to $\text{head}_{G'}(f(\overline{e_2^{i}}))$. Then $P_2\leftarrow f(\overline{e_2^{i+1}})\cdot Q\setminus\{f(\overline{e_2^i}),f(\overline{e_2^{i+1}})\}\cdot P_2$. Otherwise, if $v\notin V_0'$, $P_2\leftarrow f(\overline{e_2^{i+1}})\cdot P_2$.
    \item[$\bullet$] Suppose that the edge from $(e_1^i,e_2^i)$ to $(e_1^{i+1},e_2^{i+1})$ is of type (ii), namely $e_2^i=e_2^{i+1}$, $\text{head}_{G^*}(e_1^i)=\text{tail}_{G^*}(e_1^{i+1})=v$ and $\{e_1^i, e_1^{i+1}\} \in T_{G^*}(v)$. There is no path from $\text{head}_{G^*}(e_2^i)$ to $v$ in $G^*$.  If $v\in V_0'$, let $Q$ be the $T$-compatible path in the subgraph of $G'$ consisting of all edges of $G'_v$ together with $f(e_1^i)$ and $f(e_1^{i+1})$ from $\text{tail}_{G'}(f(e_1^{i}))$ to $\text{head}_{G'}(f(e_1^{i+1}))$. Then $P_1\leftarrow P_1\cdot Q\setminus\{f(e_1^i),f(e_1^{i+1})\}\cdot f(e_1^{i+1})$. Otherwise, if $v\notin V_0'$, $P_1\leftarrow P_1\cdot f(e_1^{i+1})$.
    \item[$\bullet$] Suppose that the edge from $(e_1^i,e_2^i)$ to $(e_1^{i+1},e_2^{i+1})$ is of type (iii), namely $\text{head}_{G^*}(e_1^i)=\text{tail}_{G^*}(e_1^{i+1})=\text{head}_{G^*}(e_2^i)=\text{tail}_{G^*}(e_2^{i+1})=v$ and $\{e_1^i, e_1^{i+1}\} \in T_{G^*}(v), \{e_2^i, e_2^{i+1}\} \in T_{G^*}(v)$. If $v\in V'_0$, let $G^{s}$ be the subgraph of $G'$ consisting of all edges of $G'_v$ together with $f(e_1), f(e_1'), f(\overline{e_2})$ and $f(\overline{e_2'})$. There are two $T$-compatible vertex-disjoint paths in $G^{s}$, namely $Q_1$ from $\text{tail}_{G'}(f(e_1^i))$ to $\text{head}_{G'}(f(e_1^{i+1}))$ and $Q_2$ from $\text{tail}_{G'}(f(\overline{e_2^{i+1}}))$ to $\text{head}_{G'}(f(\overline{e_2^{i}}))$. Then $P_1\leftarrow P_1\cdot Q_1\setminus\{f(e_1^i),f(e_1^{i+1})\}\cdot f(e_1^{i+1})$ and $P_2\leftarrow f(\overline{e_2^{i+1}})\cdot Q_2\setminus\{f(\overline{e_2^i}),f(\overline{e_2^{i+1}})\}\cdot P_2$. Otherwise, if $v\notin V'_0$, then $P_1\leftarrow P_1\cdot f(e_1^{i+1})$ and $P_2\leftarrow f(\overline{e_2^{i+1}})\cdot P_2$.
\end{itemize}
As a result, we construct two vertex-disjoint $T$-compatible paths $P_1$ and $P_2$ such that $P_i$ is from $s_i'$ to $t_i'$ and $P_i\subseteq E_i'$ for $i=1,2$. This finishes the proof for ``only if'' direction.
\end{proof}

Since $\cal G$ contains $O(|E|^{3})$ edges, we can detect a path in $\cal G$ in polynomial time. Thus Lemma~\ref{mainV2DSPP} shows that the vertex version of \textsc{2-DSPP with transition restrictions} can be solved in polynomial time assuming that every cycle in the input graph has positive length. 

\section{Conclusions}
We would like to conclude our work with posing one open problem that eluded us during this research. 
The NP-hardness reduction of Szeider~\cite{Szeider} for finding a (simple) path between two vertices
of a forbidden-transition graph can be easily modified to prove that it is also NP-hard
to find a (simple) compatible cycle in a forbidden-transition graph. 
In contrast, in edge-colored graphs finding any properly colored cycle is polynomial-time solvable~\cite{GrossmanHaggkvist,Yeo97}. But what about finding
a \emph{long} properly colored cycle?
More precisely, given an edge-colored graph $G$ and an integer $k$ we
ask whether $G$ admits a simple properly colored cycle of length at least $k$.
Is this problem fixed-parameter tractable when parameterized by $k$?
As the notion of properly colored walks in edge-colored graphs generalizes walks in directed
graphs, the problem in question is more general than finding a cycle of length at least $k$
in a directed graph.

\bibliographystyle{abbrv}

\bibliography{../ref}

\begin{thebibliography}{10}

\bibitem{DBLP:journals/tcs/AbouelaoualimDFMMS08}
A.~Abouelaoualim, K.~C. Das, L.~Faria, Y.~Manoussakis, C.~A.~J. Martinhon, and
  R.~Saad.
\newblock Paths and trails in edge-colored graphs.
\newblock {\em Theor. Comput. Sci.}, 409(3):497--510, 2008.

\bibitem{Ahmed}
M.~Ahmed and A.~Lubiw.
\newblock Shortest paths avoiding forbidden subpaths.
\newblock In {\em 26th International Symposium on Theoretical Aspects of
  Computer Science, {STACS} 2009, February 26-28, 2009, Freiburg, Germany,
  Proceedings}, pages 63--74, 2009.

\bibitem{DBLP:journals/tcs/Bang-JensenBLY20}
J.~Bang{-}Jensen, T.~Bellitto, W.~Lochet, and A.~Yeo.
\newblock The directed 2-linkage problem with length constraints.
\newblock {\em Theor. Comput. Sci.}, 814:69--73, 2020.

\bibitem{supereulerian}
J.~Bang{-}Jensen, T.~Bellitto, and A.~Yeo.
\newblock Supereulerian 2-edge-coloured graphs.
\newblock {\em CoRR}, abs/2004.01955, 2010.

\bibitem{jbjgutin}
J.~Bang{-}Jensen and G.~Z. Gutin.
\newblock {\em Digraphs - theory, algorithms and applications}.
\newblock Springer, 2009.

\bibitem{sctm}
T.~Bellitto.
\newblock Separating codes and traffic monitoring.
\newblock {\em Theoretical Computer Science}, 717:73 -- 85, 2018.
\newblock Selected papers presented at the 11th International Conference on
  Algorithmic Aspects of Information and Management (AAIM 2016).

\bibitem{DBLP:conf/wg/BellittoB18}
T.~Bellitto and B.~Bergougnoux.
\newblock On minimum connecting transition sets in graphs.
\newblock In A.~Brandst{\"{a}}dt, E.~K{\"{o}}hler, and K.~Meer, editors, {\em
  Graph-Theoretic Concepts in Computer Science - 44th International Workshop,
  {WG} 2018, Cottbus, Germany, June 27-29, 2018, Proceedings}, volume 11159 of
  {\em Lecture Notes in Computer Science}, pages 40--51. Springer, 2018.

\bibitem{DBLP:conf/esa/Berczi017}
K.~B{\'{e}}rczi and Y.~Kobayashi.
\newblock The directed disjoint shortest paths problem.
\newblock In K.~Pruhs and C.~Sohler, editors, {\em 25th Annual European
  Symposium on Algorithms, {ESA} 2017, September 4-6, 2017, Vienna, Austria},
  volume~87 of {\em LIPIcs}, pages 13:1--13:13. Schloss Dagstuhl -
  Leibniz-Zentrum f{\"{u}}r Informatik, 2017.

\bibitem{BezakovaCDF2019}
I.~Bez{\'a}kov{\'a}, R.~Curticapean, H.~Dell, and F.~V. Fomin.
\newblock Finding detours is fixed-parameter tractable.
\newblock {\em SIAM Journal on Discrete Mathematics}, 33(4):2326--2345, 2019.

\bibitem{BodlaenderCKN15}
H.~L. Bodlaender, M.~Cygan, S.~Kratsch, and J.~Nederlof.
\newblock Deterministic single exponential time algorithms for connectivity
  problems parameterized by treewidth.
\newblock {\em Inf. Comput.}, 243:86--111, 2015.

\bibitem{DBLP:journals/dam/Contreras-Balbuena17}
A.~Contreras{-}Balbuena, H.~Galeana{-}S{\'{a}}nchez, and I.~A. Goldfeder.
\newblock A new sufficient condition for the existence of alternating
  hamiltonian cycles in 2-edge-colored multigraphs.
\newblock {\em Discret. Appl. Math.}, 229:55--63, 2017.

\bibitem{DBLP:journals/dmtcs/Contreras-Balbuena19}
A.~Contreras{-}Balbuena, H.~Galeana{-}S{\'{a}}nchez, and I.~A. Goldfeder.
\newblock Alternating hamiltonian cycles in 2-edge-colored multigraphs.
\newblock {\em Discret. Math. Theor. Comput. Sci.}, 21(1), 2019.

\bibitem{cygan2015parameterized}
M.~Cygan, F.~V. Fomin, {\L}.~Kowalik, D.~Lokshtanov, D.~Marx, M.~Pilipczuk,
  M.~Pilipczuk, and S.~Saurabh.
\newblock {\em Parameterized algorithms}.
\newblock Springer, 2015.

\bibitem{Dorninger}
D.~Dorninger.
\newblock Hamiltonian circuits determining the order of chromosomes.
\newblock {\em Discrete Applied Mathematics}, 50(2):159--168, 1994.

\bibitem{DBLP:journals/dm/Dvorak09}
Z.~Dvor{\'{a}}k.
\newblock Two-factors in orientated graphs with forbidden transitions.
\newblock {\em Discret. Math.}, 309(1):104--112, 2009.

\bibitem{fellows_parameterized_2009}
M.~R. Fellows, D.~Hermelin, F.~Rosamond, and S.~Vialette.
\newblock On the parameterized complexity of multiple-interval graph problems.
\newblock {\em Theoretical Computer Science}, 410(1):53--61, 2009.

\bibitem{FortuneHW80}
S.~Fortune, J.~E. Hopcroft, and J.~Wyllie.
\newblock The directed subgraph homeomorphism problem.
\newblock {\em Theor. Comput. Sci.}, 10:111--121, 1980.

\bibitem{GanianKS2015}
R.~Ganian, E.~J. Kim, and S.~Szeider.
\newblock Algorithmic applications of tree-cut width.
\newblock In {\em International Symposium on Mathematical Foundations of
  Computer Science}, pages 348--360. Springer, 2015.

\bibitem{GanianO2019}
R.~Ganian and S.~Ordyniak.
\newblock The power of cut-based parameters for computing edge disjoint paths.
\newblock In {\em International Workshop on Graph-Theoretic Concepts in
  Computer Science}, pages 190--204. Springer, 2019.

\bibitem{Gottschau}
M.~Gottschau, M.~Kaiser, and C.~Waldmann.
\newblock The undirected two disjoint shortest paths problem.
\newblock {\em Oper. Res. Lett.}, 47(1):70--75, 2019.

\bibitem{DBLP:journals/dam/GourvesLMM13}
L.~Gourv{\`{e}}s, A.~Lyra, C.~A.~J. Martinhon, and J.~Monnot.
\newblock Complexity of trails, paths and circuits in arc-colored digraphs.
\newblock {\em Discret. Appl. Math.}, 161(6):819--828, 2013.

\bibitem{DBLP:journals/endm/GourvesLMMP09}
L.~Gourv{\`{e}}s, A.~Lyra, C.~A.~J. Martinhon, J.~Monnot, and F.~Protti.
\newblock On s-t paths and trails in edge-colored graphs.
\newblock {\em Electron. Notes Discret. Math.}, 35:221--226, 2009.

\bibitem{GrossmanHaggkvist}
J.~W. Grossman and R.~H{\"{a}}ggkvist.
\newblock Alternating cycles in edge-partitioned graphs.
\newblock {\em J. Comb. Theory, Ser. {B}}, 34(1):77--81, 1983.

\bibitem{GutinKim}
G.~Gutin and E.~J. Kim.
\newblock Properly coloured cycles and paths: Results and open problems.
\newblock In {\em Graph Theory, Computational Intelligence and Thought, Essays
  Dedicated to Martin Charles Golumbic on the Occasion of His 60th Birthday},
  pages 200--208, 2009.

\bibitem{ChinesePostman_2017}
G.~Z. Gutin, M.~Jones, B.~Sheng, M.~Wahlstr{\"{o}}m, and A.~Yeo.
\newblock Chinese postman problem on edge-colored multigraphs.
\newblock {\em Discret. Appl. Math.}, 217:196--202, 2017.

\bibitem{impagliazzo_complexity_1999}
R.~Impagliazzo and R.~Paturi.
\newblock Complexity of $k$-{{SAT}}.
\newblock In {\em Proceedings of the Fourteenth Annual IEEE Conference on
  Computational Complexity (CCC'99)}, pages 237--240, 1999.

\bibitem{impagliazzo_which_1998}
R.~Impagliazzo, R.~Paturi, and F.~Zane.
\newblock Which problems have strongly exponential complexity?
\newblock In {\em Proceedings 39th Annual Symposium on Foundations of Computer
  Science (FOCS'98)}, pages 653--662, 1998.

\bibitem{DBLP:conf/tamc/KanteLM13}
M.~M. Kant{\'{e}}, C.~Laforest, and B.~Mom{\`{e}}ge.
\newblock Trees in graphs with conflict edges or forbidden transitions.
\newblock In T.~H. Chan, L.~C. Lau, and L.~Trevisan, editors, {\em Theory and
  Applications of Models of Computation, 10th International Conference, {TAMC}
  2013, Hong Kong, China, May 20-22, 2013. Proceedings}, volume 7876 of {\em
  Lecture Notes in Computer Science}, pages 343--354. Springer, 2013.

\bibitem{DBLP:conf/wg/KanteMMN15}
M.~M. Kant{\'{e}}, F.~Z. Moataz, B.~Mom{\`{e}}ge, and N.~Nisse.
\newblock Finding paths in grids with forbidden transitions.
\newblock In E.~W. Mayr, editor, {\em Graph-Theoretic Concepts in Computer
  Science - 41st International Workshop, {WG} 2015, Garching, Germany, June
  17-19, 2015, Revised Papers}, volume 9224 of {\em Lecture Notes in Computer
  Science}, pages 154--168. Springer, 2015.

\bibitem{KimOPST2018}
E.~J. Kim, S.-i. Oum, C.~Paul, I.~Sau, and D.~M. Thilikos.
\newblock An fpt 2-approximation for tree-cut decomposition.
\newblock {\em Algorithmica}, 80(1):116--135, 2018.

\bibitem{KobayashiSako}
Y.~Kobayashi and R.~Sako.
\newblock Two disjoint shortest paths problem with non-negative edge length.
\newblock {\em Oper. Res. Lett.}, 47(1):66--69, 2019.

\bibitem{Kotzig}
A.~Kotzig.
\newblock Moves without forbidden transitions in a graph.
\newblock {\em Matematický časopis}, 18(1):76--80, 1968.

\bibitem{DBLP:journals/dm/LiBXZ17}
R.~Li, H.~Broersma, C.~Xu, and S.~Zhang.
\newblock Cycle extension in edge-colored complete graphs.
\newblock {\em Discret. Math.}, 340(6):1235--1241, 2017.

\bibitem{DBLP:journals/gc/LiBZ19}
R.~Li, H.~Broersma, and S.~Zhang.
\newblock Properly edge-colored theta graphs in edge-colored complete graphs.
\newblock {\em Graphs Comb.}, 35(1):261--286, 2019.

\bibitem{marx_can_2010}
D.~Marx.
\newblock Can you beat treewidth?
\newblock {\em Theory of Computing}, 6(1):85--112, 2010.

\bibitem{marx_immersions_2014}
D.~Marx and P.~Wollan.
\newblock Immersions in highly edge connected graphs.
\newblock {\em SIAM Journal on Discrete Mathematics}, 28(1):503--520, 2014.

\bibitem{naor_splitters_1995}
M.~Naor, L.~J. Schulman, and A.~Srinivasan.
\newblock Splitters and near-optimal derandomization.
\newblock In {\em Proceedings of 36th Annual IEEE Symposium on Foundations of
  Computer Science (FOCS '95)}, pages 182--191, 1995.

\bibitem{DBLP:journals/jacm/PerlS78}
Y.~Perl and Y.~Shiloach.
\newblock Finding two disjoint paths between two pairs of vertices in a graph.
\newblock {\em J. {ACM}}, 25(1):1--9, 1978.

\bibitem{Szeider}
S.~Szeider.
\newblock Finding paths in graphs avoiding forbidden transitions.
\newblock {\em Discrete Applied Mathematics}, 126(2-3):261--273, 2003.

\bibitem{Wollan2015}
P.~Wollan.
\newblock The structure of graphs not admitting a fixed immersion.
\newblock {\em Journal of Combinatorial Theory, Series B}, 110:47--66, 2015.

\bibitem{Yeo97}
A.~Yeo.
\newblock A note on alternating cycles in edge-coloured graphs.
\newblock {\em J. Comb. Theory, Ser. {B}}, 69(2):222--225, 1997.

\bibitem{ZiobroP19}
M.~Ziobro and M.~Pilipczuk.
\newblock Finding hamiltonian cycle in graphs of bounded treewidth:
  Experimental evaluation.
\newblock {\em {ACM} Journal of Experimental Algorithmics},
  24(1):2.7:1--2.7:18, 2019.

\end{thebibliography}

\end{document}